\newcolumntype{x}[1]{>{\centering\arraybackslash}p{#1}}
\newtheorem{thm}{Theorem}
\newtheorem*{thm*}{Theorem}
\newtheorem{prop}[thm]{Proposition}
\newtheorem*{prop*}{Proposition}
\newtheorem{lemma}[thm]{Lemma}
\newtheorem*{lemma*}{Lemma}
\newtheorem{cor}[thm]{Corollary}
\newtheorem*{cor*}{Corollary}
\newtheorem*{cj*}{Conjecture}
\newtheorem*{Def*}{Definition}
\def\thmhead@plain#1#2#3{%
  \thmname{#1}\thmnumber{\@ifnotempty{#1}{ }\@upn{#2}}%
  \thmnote{ {\the\thm@notefont#3}}}
\let\thmhead\thmhead@plain
\theoremstyle{definition}
\newtheorem{rem}[thm]{Remark}
\newtheorem{ex}[thm]{Example}
\newenvironment{manualthm}[1]{%
  \manualthminner
}{\endmanualthminner}
\newcommand{\bb}{\begin{equation}}
\newcommand{\bbb}{\begin{equation*}}
\newcommand{\ee}{\end{equation}}
\newcommand{\eee}{\end{equation*}}
\newcommand*{\coloneqq}{\mathrel{\vcenter{\baselineskip0.5ex \lineskiplimit0pt \hbox{\scriptsize.}\hbox{\scriptsize.}}} =}
\newcommand*{\eqqcolon}{= \mathrel{\vcenter{\baselineskip0.5ex \lineskiplimit0pt \hbox{\scriptsize.}\hbox{\scriptsize.}}}}
\newcommand{\texteq}[1]{\stackrel{\mathclap{\scriptsize \mbox{#1}}}{=}}
\newcommand{\textgeq}[1]{\stackrel{\mathclap{\scriptsize \mbox{#1}}}{\geq}}
\newcommand{\ketbra}[1]{\ket{#1}\!\!\bra{#1}}
\newcommand{\ketbraa}[2]{\ket{#1}\!\!\bra{#2}}
\newcommand{\sumno}{\sum\nolimits}
\newcommand{\G}{\mathrm{\scriptscriptstyle G}}
\newcommand{\id}{\mathds{1}}
\newcommand{\R}{\mathds{R}}
\newcommand{\N}{\mathds{N}}
\newcommand{\C}{\mathds{C}}
\DeclareMathOperator{\Tr}{Tr}
\DeclareMathOperator{\Span}{span}
\DeclareMathAlphabet{\pazocal}{OMS}{zplm}{m}{n}
\DeclareMathOperator{\tr}{tr}
\newcommand{\HH}{\pazocal{H}}
\newcommand{\lsmatrix}{\left(\begin{smallmatrix}}
\newcommand{\rsmatrix}{\end{smallmatrix}\right)}
\newcommand*\rel@kern[1]{\kern#1\dimexpr\macc@kerna}
\newcommand*\widebar[1]{%
  \begingroup
  \def\mathaccent##1##2{%
    \rel@kern{0.8}%
    \overline{\rel@kern{-0.8}\macc@nucleus\rel@kern{0.2}}%
    \rel@kern{-0.2}%
  }%
  \macc@depth\@ne
  \let\math@bgroup\@empty \let\math@egroup\macc@set@skewchar
  \mathsurround\z@ \frozen@everymath{\mathgroup\macc@group\relax}%
  \macc@set@skewchar\relax
  \let\mathaccentV\macc@nested@a
  \macc@nested@a\relax111{#1}%
  \endgroup
}
\newcommand{\icoh}{I_{\mathrm{coh}}(A\rangle B)}
\newcommand{\inv}{V}
\begin{document}

\title{Bosonic quantum communication across arbitrarily high loss channels}

\author{Ludovico Lami}
\email{ludovico.lami@gmail.com}
\affiliation{Institut f\"{u}r Theoretische Physik und IQST, Universit\"{a}t Ulm, Albert-Einstein-Allee 11, D-89069 Ulm, Germany}

\author{Martin B. Plenio}
\email{martin.plenio@uni-ulm.de}
\affiliation{Institut f\"{u}r Theoretische Physik und IQST, Universit\"{a}t Ulm, Albert-Einstein-Allee 11, D-89069 Ulm, Germany}

\author{Vittorio Giovannetti}
\email{vittorio.giovannetti@sns.it}
\affiliation{NEST, Scuola Normale Superiore and Istituto Nanoscienze-CNR, I-56127 Pisa, Italy}

\author{Alexander S. Holevo}
\email{holevo@mi-ras.ru}
\affiliation{Steklov Mathematical Institute, Gubkina 8, 119991 Moscow, Russia}

\begin{abstract}
A general attenuator $\Phi_{\lambda, \sigma}$ is a bosonic quantum channel that acts by combining the input with a fixed environment state $\sigma$ in a beam splitter of transmissivity $\lambda$. If $\sigma$ is a thermal state the resulting channel is a thermal attenuator, whose quantum capacity vanishes for $\lambda\leq 1/2$. We study the quantum capacity of these objects for generic $\sigma$, proving a number of unexpected results. Most notably, we show that for any arbitrary value of $\lambda>0$ there exists a suitable single-mode state $\sigma(\lambda)$ such that the quantum capacity of $\Phi_{\lambda,\sigma(\lambda)}$ is larger than a universal constant $c>0$. Our result holds even when we fix an energy constraint at the input of the channel, and implies that quantum communication at a constant rate is possible even in the limit of arbitrarily low transmissivity, provided that the environment state is appropriately controlled. We also find examples of states $\sigma$ such that the quantum capacity of $\Phi_{\lambda,\sigma}$ is not monotonic in $\lambda$. These findings may have implications for the study of communication lines running across integrated optical circuits, of which general attenuators provide natural models.
\end{abstract}


\maketitle

\textbf{\em Introduction}.--- Quantum optics will likely play a major role in the future of quantum communication~\cite{KLM, Braunstein-review, CERF, weedbrook12}. Indeed, practically all quantum communication in the foreseeable future will rely on optical platforms. For this reason, the study of quantum channels acting on continuous variable (CV) systems, that is, finite ensembles of electromagnetic modes, is a core area of the rapidly developing field of quantum information~\cite{BUCCO, HOLEVO, HOLEVO-CHANNELS-2}.

In the best studied models of optical communication, one represents an optical fibre as a memoryless thermal attenuator channel. Mathematically, its action can be thought of as that of a beam splitter with a certain transmissivity $0\leq \lambda\leq 1$, where the input state is mixed with a fixed environment state $\sigma$ that is assumed to be thermal. This approximation is well justified when the signal rate is sufficiently low that memory effects are negligible, and when the optical fibre is so long that the `effective' environment state, resulting from averaging several elementary interactions that are effectively independent, due to the limited correlation length of the environment, is practically Gaussian and thermal, as follows from the quantum central limit theorem~\cite{Cushen1971, QCLT}. And indeed, an impressive amount of literature has been devoted to finding bounds on the quantum capacity of the thermal attenuator. We now have exact formulae for the zero-temperature case~\cite{holwer, Caruso2006, Wolf2007, Mark2012, Mark-energy-constrained, Noh2019}, and tight upper~\cite{PLOB, Rosati2018, Sharma2018, Noh2019} and lower~\cite{holwer, Noh2020} bounds in all other cases.

However, the thermal noise approximation is challenged when memory effects become important~\cite{memory-review}, or when the communication channel is so short that the averaging process cannot possibly take place, as may happen e.g.\ in miniaturised quantum optical circuits
~\cite{OBrien2009, Politi2009, Carolan2015, Rohde2015}.
In both cases, it is conceivable that the environment state may be manipulated and engineered to facilitate communication. Namely, one could exploit memory effects to send pulses that alter it and precede the actual transmission, or one could design the integrated optical circuit that surrounds the communication line in order to control the noise that comes from other elements of the same circuit.
We are thus led to investigate general attenuator channels, hereafter denoted with $\Phi_{\lambda,\sigma}$, where the environment state $\sigma$ is no longer thermal. Unsurprisingly, such models have received increasing attention recently~\cite{Koenig2015, Jack2018, KK-VV, G-dilatable, Lim2019, QCLT}. As discussed above, we will be interested in optimising over the environment state so as to increase the capacity~\cite{Karumanchi2016a, Karumanchi2016b}.

Other motivations for considering general attenuators stem on the one hand from the need to go beyond the Gaussian formalism to accomplish several tasks that are critical to quantum information, e.g.\ universal quantum computation~\cite{Menicucci2006, Ohlinger2010} entanglement distillation~\cite{nogo1, nogo2, nogo3}, entanglement swapping~\cite{optimal-G-ent-swapping, Namiki2014}, error correction~\cite{Niset2009}, and state transformations in general resource theories~\cite{G-resource-theories, assisted-Ryuji}. On the other hand, general attenuators are among the simplest examples of non-Gaussian channels that are nevertheless \emph{Gaussian dilatable}, meaning that they can be Stinespring dilated~\cite{Stinespring} by means of a symplectic unitary~\cite{KK-VV, G-dilatable}. This makes them amenable to a quantitative analysis in many respects. For example, it has been shown that making the environment state non-Gaussian, e.g.\ by means of a photon addition, can be advantageous when transmitting quantum or private information~\cite{KK-VV}. In spite of their increased complexity compared to Gaussian channels, the entanglement-assisted capacity of a general attenuator can nevertheless be upper bounded thanks to the conditional entropy power inequality~\cite{Koenig2015, Jack2018}. Similar bounds can be obtained for the quantum~\cite{Lim2019} and private~\cite{Jeong2020} capacity as well, by making use of the solution to the minimum output entropy conjecture~\cite{Giovadd, Giovadd-CMP, Jack-constrained} combined with known extremality properties of Gaussian states~\cite{Eisert-Wolf, Wolf2006}. 
Finally, we have mentioned that by concatenating a large number $n$ of general attenuators with a fixed total transmissivity one typically obtains an effective channel that resembles a thermal attenuator. In this regime of large but finite $n$, the associated quantum capacity can be bounded thanks to the quantum Berry--Esseen inequality~\cite[Corollary~13]{QCLT}.

Here we investigate the quantum capacity of general attenuators $\Phi_{\lambda, \sigma}$, uncovering some unexpected phenomena. It has been observed~\cite[Lemma~16]{QCLT} that output states of general attenuators with transmissivity $\lambda=1/2$ have non-negative Wigner functions~\cite{Wigner, Hillery1984}. At first sight, this may suggest that such channels are somewhat `classical'~\cite{Hudson1974, Hudson-thm-multimode, Broecker1995}. Indeed, we show that for all convex combinations of symmetric states -- and in particular for all Gaussian states -- $\Phi_{1/2,\, \sigma}$ is anti-degradable and therefore its quantum capacity satisfies $Q\left(\Phi_{1/2,\, \sigma}\right)=0$~\cite{Devetak-Shor}. Here we call a state symmetric if it remains invariant under phase space inversion up to displacements. However, we also find an example of a state $\sigma$ that does not belong to this class and that makes $Q\left(\Phi_{1/2,\, \sigma}\right)>0$.

Next, we tackle the question of whether transmission of quantum information is possible even for very low values of the transmissivity $0<\lambda \ll 1$. Intuitively, a beam splitter of transmissivity $\lambda\leq 1/2$ should give away to the environment more than it transmits. By the no-cloning theorem, we could be led to conjecture that the quantum capacity $Q\left(\Phi_{\lambda,\sigma}\right)$ vanishes for all $\sigma$ as soon as $\lambda\leq 1/2$. Indeed, this is exactly what happens for thermal attenuators. This intuition is further supported by the analysis of general finite-dimensional depolarising channels $\Delta_{\lambda,\sigma}(\rho)$, defined by $\Delta_{\lambda,\sigma}(\rho) \coloneqq \lambda \rho + (1-\lambda)\sigma$, whose quantum capacity also vanishes for $\lambda\leq 1/2$.

However, we establish the following surprising result: \emph{for all values of $\lambda>0$ one can find suitable states $\sigma(\lambda)$ that make $Q\left( \Phi_{\lambda,\,\sigma(\lambda)}\right)\geq c$, where the constant $c>0$ is universal} (Theorem~\ref{Behemoth thm}). This implies, but is stronger than, the fact that $\Phi_{\lambda,\,\sigma(\lambda)}$ can be used to distribute entanglement~\footnote{See the Supplemental Material, which contains the references~\cite{Jordan1935, LL17, HALL-LIE, BARNETT-RADMORE, Holevo-EB, PeresPPT, 2-qubit-distillation, Horodecki-review, Masanes2006, Brunner-review, CHSH}, for complete proofs of some of the results discussed in the main text.}. As a corollary, we also see that $Q\left( \Phi_{\lambda,\,\sigma}\right)$ is in general not monotonic in $\lambda$ for fixed $\sigma$. All this marks a striking difference with the aforementioned behaviour of thermal attenuators and depolarising channels, and reveals that the phenomenology of general attenuators is richer than perhaps expected. Our proof is fully analytical, and goes by analyising the single-copy coherent information associated with a specific transmission scheme. By a tour-de-force of inequalities we show that the output state of the channel is \emph{majorised} by that of the associated complementary channel. In turn, this makes it possible to lower bound the coherent information by applying a beautiful inequality recently proved by Ho and Verd\'{u}~\cite{Ho2010}.

\textbf{\em Notation}.--- 
The Hilbert space corresponding to an $m$-mode CV comprises all square-integrable functions $\R^{m}\to \C$, and is denoted by $\HH_m \coloneqq L^2(\R^m)$. Quantum states are represented by \emph{density operators} on $\HH_m$, i.e.\ positive semi-definite trace class operators with unit trace. We will denote with $a_j,a_j^\dag$, respectively, the \emph{annihilation} and \emph{creation} operators corresponding to the $j$-th mode, and with $\ket{0}$ the \emph{vacuum state}. The \emph{canonical commutation relations} read $[a_j, a_k^\dag] = \delta_{jk} I$, $[a_j,a_k] = 0$. The unitary \emph{displacement operators} on $\HH_m$ are constructed as $D(\alpha)\coloneqq e^{\sumno_j (\alpha_j a_j^\dag - \alpha_j^* a_j)}$, where $\alpha\in \C^m$; they satisfy $D(\alpha) D(\beta) = e^{\frac12 (\alpha^\intercal \beta^* - \alpha^\dag \beta)} D(\alpha+\beta)$ for all $\alpha,\beta\in \C^m$.

For every trace class operator $T$ on $\HH_m$, its \emph{characteristic function} $\chi_T:\C^{m}\to \C$ is defined by~\cite{HOLEVO, Werner84}
\bb
\chi_T(\alpha)\coloneqq \Tr[T D(\alpha)]\, .
\label{chi}
\ee
The \emph{Wigner function} $W_T$ of $T$ is the Fourier transform of $\chi_T$~\cite{Wigner, Hillery1984, Werner84, HOLEVO}.
Note that $W_\rho$ is typically not pointwise positive for a generic quantum state $\rho$~\cite{Hudson1974, Hudson-thm-multimode, Broecker1995}. 



A \emph{beam splitter} of transmissivity $0\leq \lambda\leq 1$ acting on two systems of $m$ modes each is represented by the unitary operator
\bb
U_\lambda \coloneqq e^{\arccos\sqrt{\lambda}\, \sumno_j ( a_j^\dag b_j - a_j b_j^\dag)}\, ,
\label{BS}
\ee
where $a_j,b_j$ are the annihilation operators on the $j$-th modes of the first and second system, respectively. Our main object of study is the \emph{general attenuator} channel $\Phi_{\lambda, \sigma}$, which acts on an $m$-mode system $B$ as
\bb
\Phi_{\lambda, \sigma}^B(\rho_B) \coloneqq \Tr_E\left[ U_\lambda^{BE} (\rho_B\otimes \sigma_E) \left(U_\lambda^{BE}\right)^\dag \right] .
\label{Phi}
\ee
Dropping the system labels for simplicity, this can be cast in the language of characteristic functions as
\bb
\chi_{\Phi_{\lambda, \sigma}(\rho)}(\alpha) = \chi_\rho\left(\sqrt\lambda\, \alpha\right) \chi_\sigma\left(\sqrt{1-\lambda}\, \alpha\right) .
\label{Phi chi}
\ee
A pictorial representation of the action of a general attenuator is provided in Figure~\ref{general attenuator}.

The \emph{thermal attenuators} $\mathcal{E}_{\lambda, \nu} \coloneqq \Phi_{\lambda,\, \tau_\nu}$ as well as the \emph{pure loss channels} $\mathcal{E}_{\lambda}\coloneqq \mathcal{E}_{\lambda, 0}=\Phi_{\lambda,\, \ket{0}\!\bra{0}}$ are standard examples of single-mode attenuators, obtained by taking the environment to be in a thermal state $\tau_\nu\coloneqq \frac{1}{\nu+1}\sum_{n=0}^\infty \left( \frac{\nu}{\nu+1}\right)^n \ketbra{n}$, where 
$\ket{n}$ is the $n$-th \emph{Fock state}.

\begin{figure}
\centering
\includegraphics[scale=0.8]{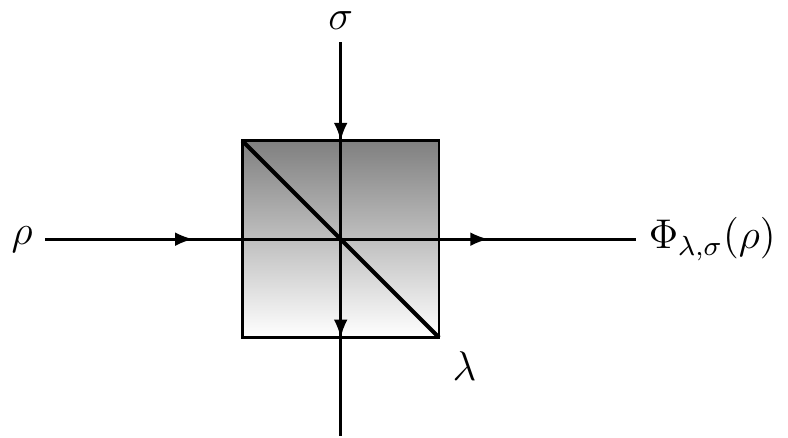}
\caption{A general attenuator acts by mixing the input state $\rho$ in a beam splitter of transmissivity $\lambda$ with an environment in a fixed state $\sigma$.}
\label{general attenuator}
\end{figure}

Quantum channels are useful because they can transmit quantum information. The maximum rate at which independent copies of a channel $\Phi$ acting on a system $B$ can simulate instances of the noiseless qubit channel $I_2$ is called the \emph{quantum capacity} of $\Phi$, and denoted with $Q(\Phi)$. For CV systems, physical transmission of quantum data must be subjected to an energy constraint. We shall assume that the relevant Hamiltonian is the total photon number: for an $m$-mode system, $H_m \coloneqq \sum_{j=1}^m a_j^\dag a_j$. The \emph{energy-constrained quantum capacity} can be obtained thanks to the following modified version~\cite[Theorem~5]{Mark-energy-constrained} of the Lloyd--Shor--Devetak theorem~\cite{Lloyd-S-D, L-Shor-D, L-S-Devetak, HOLEVO-CHANNELS}:
\begin{align}
Q\left(\Phi, N\right) =&\ \sup_{k} \frac1k\, Q_1\!\left(\Phi^{\otimes k},\, kN\right) , \label{energy-constrained Q 1}\\
Q_1\! \left(\Phi, N\right) \coloneqq& \sup_{\Tr[\Psi_B H_B]\leq N} \!\! \icoh_{(I_A\otimes \Phi_B)(\Psi_{AB})}\, . \label{energy-constrained Q 2}
\end{align}
where $\Psi_{AB}\coloneqq \ketbra{\Psi}_{AB}$ is pure, and $\icoh_\rho\coloneqq \Tr\left[ \rho_{AB} \left( \log_2 \rho_{AB} - \log_2 \rho_B \right) \right]$ is the coherent information. The unconstrained quantum capacity is obtained as $Q\left(\Phi\right)\coloneqq \lim_{N\to\infty} Q\left(\Phi,N\right)$. In general, the expression in~\eqref{energy-constrained Q 1} is intractable. However, for the pure loss channel the regularisation is not needed, and the quantum capacity can be expressed in closed form as~\cite{holwer, Wolf2007, Mark2012, Mark-energy-constrained, Noh2019}
\bb
Q\left( \mathcal{E}_\lambda, N\right) = \max\left\{ g(\lambda N) - g((1-\lambda) N),\, 0\right\} ,
\label{pure loss Q}
\ee
where $g(x)\coloneqq (x+1)\log_2 (x+1) - x\log_2 x$ is the bosonic entropy. No such formula is known for the thermal attenuators, although sharp bounds are available~\cite{holwer, PLOB, Rosati2018, Sharma2018, Noh2019, Noh2020}.

\textbf{\em Results}.---
Before expounding our findings, let us forge our intuition by looking at other channels that present some analogies with general attenuators. An obvious starting point is the thermal attenuator $\mathcal{E}_{\lambda,\nu} = \Phi_{\lambda, \tau_\nu}$. When $\lambda\leq 1/2$, $\mathcal{E}_{\lambda,\nu}$ is anti-degradable, meaning that tracing out $B$ instead of $E$ in~\eqref{Phi} results in a channel that can simulate $\mathcal{E}_{\lambda, \nu}$ via post-processing~\cite{Devetak-Shor, Caruso2006, extendibility}. This implies that $Q(\mathcal{E}_{\lambda,\nu})=0$ for $\lambda\leq 1/2$~\cite[p.~3]{Caruso2006}. 
On a different note, we can also consider a generalised depolarising channel in finite dimension $d$, acting as $\rho\mapsto \Delta_{\lambda,\sigma}(\rho) = \lambda \rho + (1-\lambda)\sigma$. As it turns out, its quantum capacity is again zero for $\lambda\leq 1/2$. In fact, $\Delta_{\lambda,\sigma}$ can be obtained from an erasure channel~\footnote{An erasure channel acts as $\rho\mapsto \mathcal{N}_\lambda(\rho)\coloneqq \lambda \rho + (1-\lambda) \ketbra{e}$, where $\ket{e}$ is an error flag that is orthogonal to every input state. Constructing the post-processing channel $\rho \mapsto \mathcal{M}_\sigma (\rho) \coloneqq \left(\id - \ketbra{e}\right) \rho \left(\id-\ketbra{e}\right) + \braket{e|\rho|e} \sigma$, we see that $\Delta_{\lambda,\sigma} = \mathcal{M}_\sigma \circ \mathcal{N}_\lambda$.} via post-processing. Since the quantum capacity of this latter object is known~\cite{erasure}, by data processing we obtain that $Q\left( \Delta_{\lambda,\sigma}\right)\leq \max\left\{(1-2\lambda)\log_2 d,\, 0\right\}$ for all $\sigma$. In particular, $Q\left( \Delta_{\lambda,\sigma}\right)=0$ for $\lambda\leq 1/2$.

Our results show that the phenomenology of general attenuators is way richer than these considerations may have suggested. We start by looking at the role of the special point $\lambda=1/2$.


\begin{thm} \label{symmetric sigma thm}
Let $\sigma$ be an $m$-mode state of the form $\sigma = \int d\mu(\alpha)\,  D(\alpha) \sigma_0(\alpha) D(\alpha)^\dag$, where $\alpha\in \C^m$, $\mu$ is a probability measure on $\C^m$, and the states $\sigma_0(\alpha) = \inv \sigma_0(\alpha) \inv^\dag$ are symmetric under the phase space inversion operation $\inv \coloneqq (-1)^{H_m}$, with $H_m$ being the total photon number. Then the channel $\Phi_{1/2,\, \sigma}$ is anti-degradable~\cite{Devetak-Shor}, and in particular $Q\left( \Phi_{1/2,\, \sigma}\right) = 0$.
\end{thm}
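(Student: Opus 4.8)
The plan is to show that for $\lambda = 1/2$ the environment state structure in the theorem forces a symmetry between the channel output on $B$ and the complementary output on $E$, which will give anti-degradability directly. First I would recall that for the beam splitter $U_{1/2}$ the two output arms are related by swapping the inputs together with a fixed unitary (the $50{:}50$ beam splitter is, up to local phases, symmetric under exchange of its two ports). Concretely, if $\Phi_{1/2,\sigma}$ sends $\rho_B \mapsto \Tr_E[U_{1/2}(\rho_B\otimes\sigma_E)U_{1/2}^\dag]$, then the complementary channel $\widetilde{\Phi}_{1/2,\sigma}$ sends $\rho_B \mapsto \Tr_B[U_{1/2}(\rho_B\otimes\sigma_E)U_{1/2}^\dag]$, and on characteristic functions \eqref{Phi chi} becomes, for the $B$-output, $\chi(\alpha) = \chi_\rho(\alpha/\sqrt 2)\,\chi_\sigma(\alpha/\sqrt 2)$, while the $E$-output is obtained by the same formula with the roles of the two inputs interchanged together with a sign flip of the relevant argument coming from the antisymmetric part of \eqref{BS}. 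The key observation is that this sign flip is exactly implemented by the phase-space inversion $\inv = (-1)^{H_m}$, under which $D(\alpha) \mapsto D(-\alpha)$ and hence $\chi_{\inv T \inv^\dag}(\alpha) = \chi_T(-\alpha)$.

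Next I would exploit the hypothesis. Because each $\sigma_0(\alpha)$ satisfies $\inv\sigma_0(\alpha)\inv^\dag = \sigma_0(\alpha)$, its characteristic function is even: $\chi_{\sigma_0(\alpha)}(-\beta) = \chi_{\sigma_0(\alpha)}(\beta)$. Displacing by $D(\alpha)$ only multiplies $\chi_{\sigma_0(\alpha)}$ by a phase $e^{i\,\mathrm{Im}\langle\text{stuff}\rangle}$ which flips sign under $\beta\mapsto-\beta$; averaging over $\mu$ then realizes $\sigma$ as a classical mixture of states whose "inversion defect" is a pure displacement. The cleanest route is therefore to first prove the theorem for a single symmetric $\sigma_0$ (where $\sigma$ is genuinely inversion-invariant) and then upgrade to the general case by convexity: if $\Phi_{1/2,\sigma_0(\alpha)}$ is anti-degradable for each $\alpha$, then because displacements commute with the beam splitter interaction in a controlled way (conjugating $\sigma$ by $D(\alpha)$ is equivalent to pre/post-composing $\Phi_{1/2,\sigma_0(\alpha)}$ with displacement unitaries on $B$ and on $E$), the whole family shares a single anti-degrading map, and a convex combination of channels anti-degradable by the \emph{same} map is again anti-degradable by that map. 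For the base case, anti-degradability amounts to exhibiting a channel $\mathcal{A}$ with $\mathcal{A}\circ\widetilde{\Phi}_{1/2,\sigma_0} = \Phi_{1/2,\sigma_0}$; here I would take $\mathcal{A}$ to be the phase-space inversion $(-1)^{H_m}(\cdot)(-1)^{H_m}$, and verify the identity on characteristic functions: applying $\inv$ to the $E$-output swaps its argument's sign, which, combined with the evenness of $\chi_{\sigma_0}$ and the $B\leftrightarrow E$ input symmetry of $U_{1/2}$, reproduces exactly $\chi_\rho(\alpha/\sqrt2)\chi_{\sigma_0}(\alpha/\sqrt2)$. Once $\Phi_{1/2,\sigma}$ is anti-degradable, $Q(\Phi_{1/2,\sigma})=0$ is immediate from \cite{Devetak-Shor}.

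The main obstacle I anticipate is bookkeeping the phase factors and the precise form of the complementary channel in the characteristic-function picture: the antisymmetric generator $a^\dag b - a b^\dag$ in \eqref{BS} means the two output modes are $\sqrt\lambda\,a \pm \sqrt{1-\lambda}\,b$ with a definite sign convention, and one must track that at $\lambda=1/2$ the map $(a,b)\mapsto(\tfrac{a+b}{\sqrt2},\tfrac{-a+b}{\sqrt2})$ (or its adjoint) relates $B$- and $E$-outputs by precisely the inversion $b\mapsto -b$ on the $B$-input slot — equivalently, the complementary channel equals $\inv\circ\Phi_{1/2,\inv\sigma\inv^\dag}\circ\inv$ up to a unitary. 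Getting this exact relation right — and checking it is a genuine composition of completely positive trace-preserving maps, not merely an equality of functions — is where the care is needed; after that, the displacement-covariance and convexity arguments are routine.
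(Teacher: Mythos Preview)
Your overall approach matches the paper's: compute the (weak) complementary channel via characteristic functions, show that at $\lambda=1/2$ it equals $\pazocal{V}\circ\Phi_{1/2,\,\pazocal{V}(\sigma)}$, so that for inversion-symmetric $\sigma_0$ the anti-degrading map is simply $\pazocal{V}$; then handle the displacement by covariance and pass to mixtures.

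There is, however, a genuine gap in your convexity step. You assert that ``the whole family shares a single anti-degrading map'' because displacing the environment amounts to pre/post-composing the channel with displacement unitaries. This is not correct. The covariance identity reads $\Phi_{1/2,\,\pazocal{D}_\alpha(\sigma_0)} = \pazocal{D}_{\alpha/\sqrt2}\circ\Phi_{1/2,\sigma_0}$ (post-processing only), and tracking the same displacement through the weak complementary one finds that the anti-degrading map for environment $D(\alpha)\sigma_0 D(\alpha)^\dag$ is $\pazocal{V}\circ\pazocal{D}_{-\sqrt2\,\alpha}$, which genuinely depends on $\alpha$. (This $\alpha$-dependent map is exactly what the paper derives.) Hence your shortcut ``a convex combination of channels anti-degradable by the \emph{same} map is again anti-degradable by that map'' does not apply here. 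The fix is easy but different from what you propose: invoke the general fact that the set of anti-degradable channels is convex (Cubitt, Ruskai, Smith, \emph{J.\ Math.\ Phys.}\ \textbf{49}, 102104 (2008), Appendix~A.2), which requires no common anti-degrading map. This is precisely how the paper proceeds --- it cites that convexity result first and thereby reduces to a single displaced symmetric state, for which the $\alpha$-dependent unitary post-processing above does the job.
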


\begin{proof}
Under our assumptions it holds that $\Phi_{1/2,\,\sigma} = \int d\mu(\alpha)\, \Phi_{1/2,\, D(\alpha) \sigma_0(\alpha) D(\alpha)^\dag}$. Now, since the set of anti-degradable channels is convex~\cite[Appendix~A.2]{Cubitt2008}, we can directly assume that $\mu$ is a Dirac measure, i.e.\ $\sigma=D(\alpha) \sigma_0 D(\alpha)^\dag$ with $\sigma_0$ symmetric under phase space inversion.
Acting on $\rho \otimes \sigma$ with the beam splitter unitary $U_\lambda$ yields a global state with characteristic function
\bbb
\chi_\rho\left( \sqrt\lambda \alpha - \sqrt{1-\lambda} \beta \right) \chi_\sigma \left( \sqrt{1-\lambda}\alpha + \sqrt\lambda \beta\right) .
\eee
While the reduced state on the first system is given by~\eqref{Phi chi}, that on the second system has characteristic function $\chi_{\rho}\big( -\sqrt{1-\lambda} \beta \big)\chi_\sigma \big( \sqrt\lambda \beta\big)$, which coincides with that of $\inv \Phi_{1-\lambda,\, \inv \sigma\inv^\dag}\left( \rho \right) \inv^\dag$. Therefore, the weak complementary channel associated to $\Phi_{\lambda, \sigma}$ via the representation~\eqref{Phi} can be expressed as
\bbb
\Phi_{\lambda, \sigma}^{\mathrm{wc}} = \pazocal{V}\circ \Phi_{1-\lambda,\, \pazocal{V} (\sigma)}\, ,
\eee
where $\pazocal{V}(\cdot)\coloneqq \inv (\cdot) \inv^\dag$. 

Using the identity $V D(\alpha) V^\dag = D(-\alpha)$, we see that when $\sigma = D(\alpha)\sigma_0 D(\alpha)^\dag$ we also have that $\pazocal{V}(\sigma) = \pazocal{D}_{-2\alpha}(\sigma)$, where $\pazocal{D}_{z}(\cdot)\coloneqq D(z)(\cdot) D(z)^\dag$. Noting that $\Phi_{1-\lambda,\, \pazocal{D}_z(\sigma)} = \pazocal{D}_{\sqrt{\lambda}z}\circ \Phi_{1-\lambda,\,\sigma}$, we finally obtain that
\bbb
\Phi_{\lambda, \sigma}^{\mathrm{wc}} = \pazocal{V}\circ \pazocal{D}_{-2\sqrt\lambda \alpha}\circ \Phi_{1-\lambda,\, \sigma}\, .
\eee
Thus, if $\lambda=1/2$ the channel is equivalent to its weak complementary up to a unitary post-processing.
\end{proof}

The class of states $\sigma$ to which Theorem~\ref{symmetric sigma thm} applies is invariant under symplectic unitaries and displacement operators, and it includes many states that are relevant for applications, for instance all convex combinations of Gaussian states (e.g.\ \emph{classical states}~\cite{Bach1986, Yadin2018}) and all Fock-diagonal states.
Remarkably, the above result no longer holds if we weaken the assumption on $\sigma$. To see this, for $0\leq \eta\leq 1$ consider the family of single-mode states $\xi(\eta)=\ketbra{\xi(\eta)}$, with $\ket{\xi(\eta)} \coloneqq \sqrt{\eta}\ket{0} - \sqrt{1-\eta} \ket{1}$. A lower bound on the energy-constrained quantum capacity of the channels $\Phi_{1/2,\, \xi(\eta)}$ can be obtained by setting
$\ket{\Psi(\eta)}_{AB} \coloneqq \sqrt{\eta (1\!-\!\eta)}\ket{00} + (1\!-\!\eta)\ket{01}+\sqrt{\eta}\ket{10}$
and by considering that~\cite{Note1}
\bb
Q\left( \Phi_{1/2,\,\xi(\eta)},\, (1\!-\!\eta)^2\right) \geq I_{\mathrm{coh}}(A\rangle B)_{\zeta_{AB}(\eta)}\, ,
\label{Icoh}
\ee
where $\zeta_{AB}(1/2, \eta) \coloneqq \big(I^A\otimes \Phi_{1/2,\, \xi(\eta)}^B \big)(\Psi_{AB}(\eta))$, and $\Psi(\eta)\coloneqq \ketbra{\Psi(\eta)}$. The function on the r.h.s.\ of~\eqref{Icoh} is strictly positive for all $0<\eta<1$~\cite{Note1}.

The above example shows that quantum communication can be possible on a general attenuator even for transmissivity $\lambda=1/2$. At this point, we may wonder whether at least for a fixed energy constraint at the input there exists a threshold value for $\lambda$ below which quantum communication becomes impossible. Our main result states that this is not the case; on the contrary, the quantum capacity can be bounded away from $0$ even when $\lambda$ approaches $0$, if the environment state $\sigma$ is chosen appropriately. Note that the bounds by Lim et al.~\cite{Lim2019} cannot possibly be used to draw such a conclusion~\cite{Note1}.

\begin{thm} \label{Behemoth thm}
For all $0<\lambda\leq 1$ there exists a single-mode (pure) state $\sigma(\lambda)$ such that
\bb
Q\big(\Phi_{\lambda,\,\sigma(\lambda)}\big) \geq Q\big(\Phi_{\lambda,\,\sigma(\lambda)},\, 1/2 \big) \geq c
\label{Behemoth}
\ee
for some universal constant $c>0$. 
Depending on $\lambda$, we can take $\sigma(\lambda)$ to be either the vacuum $\ket{0}$, or a superposition $\alpha \ket{0} + \beta \ket{1}$, or a Fock state $\ket{n}$ with $n\geq 2$.
\end{thm}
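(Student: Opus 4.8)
\emph{Proof strategy.}\ The plan is to produce, for each $\lambda$, an explicit single-letter transmission scheme whose coherent information is bounded below by a $\lambda$-independent constant. By~\eqref{energy-constrained Q 1}--\eqref{energy-constrained Q 2} it then suffices to exhibit a pure $\Psi_{AB}$ with $\Tr[\Psi_B H_B]\le 1/2$ and a single-mode state $\sigma(\lambda)$ with $\icoh_{(I_A\otimes\Phi_{\lambda,\sigma(\lambda)})(\Psi_{AB})}\ge c$, using that for pure inputs $\icoh=H(\rho_B)-H(\rho_{AB})$ ($H$ the von Neumann entropy) and $H(\rho_{AB})=H(\rho_E)$, with $E$ the environment mode of the dilation~\eqref{Phi}. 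I would split $(0,1]$ into three overlapping ranges. On $[\lambda_0,1]$ with $\lambda_0>1/2$ fixed, take $\sigma(\lambda)=\ket{0}$: the pure-loss formula~\eqref{pure loss Q} gives $Q(\mathcal{E}_\lambda,1/2)=g(\lambda/2)-g((1-\lambda)/2)$, continuous and strictly positive there, hence bounded below. For $\lambda$ in an interval around $1/2$, take $\sigma(\lambda)$ a qubit superposition $\alpha\ket{0}+\beta\ket{1}$ and extend the $\lambda=1/2$ computation with $\xi(\eta)$ from the main text; positivity of $\icoh$ is an open condition in $\lambda$, so a single superposition works on a whole neighbourhood and, by compactness, with a uniform constant. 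The substantive range is $(0,\lambda_1]$.

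There I would take $\sigma(\lambda)=\ket{n}$ with $n=n(\lambda)$ chosen so that $n\lambda$ stays comparable to a fixed $\mu>0$ (so $n\to\infty$ as $\lambda\to 0$), and input $\ket{\Psi}_{AB}=\sqrt{\eta}\,\ket{00}+\sqrt{1-\eta}\,\ket{11}$ with $\eta\ge 1/2$, so $\Tr[\Psi_B H_B]=1-\eta\le 1/2$. Expanding the beam splitter~\eqref{BS} in the Fock basis, write $U_\lambda\ket{0}_B\ket{n}_E=\sum_k c_k\ket{k}_B\ket{n-k}_E$ and $U_\lambda\ket{1}_B\ket{n}_E=\sum_k d_k\ket{k}_B\ket{n+1-k}_E$, where $|c_k|^2$ is a binomial weight and $d_k$ carries a single-photon (Hong--Ou--Mandel-type) interference correction. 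A short computation then gives that $\rho_B$ is Fock-diagonal with spectrum $\{\eta|c_k|^2+(1-\eta)|d_k|^2\}_k$, whereas $\rho_{AB}$ — which by purity has the same spectrum as $\rho_E$ — decomposes into rank-one $2\times 2$ blocks coupling $\ket{0}_A\ket{k}_B$ with $\ket{1}_A\ket{k+1}_B$, so that the spectrum of $\rho_E$ is $\{\eta|c_{k-1}|^2+(1-\eta)|d_k|^2\}_k$ (with $c_{-1}=0$ by convention). Thus $\rho_B$ and $\rho_E$ differ only by shifting the $\eta|c_k|^2$ part up by one Fock step, and it is the interference encoded in $d_k$ that tilts this balance in our favour.

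The crux, and the step I expect to be the main obstacle, is to prove that the spectrum of $\rho_B$ is \emph{majorised} by that of $\rho_E$: this reduces to a delicate comparison of sorted binomial-type sequences and is where the ``tour-de-force of inequalities'' resides. Granting it, Schur concavity of the entropy immediately yields $\icoh=H(\rho_B)-H(\rho_E)\ge 0$; to upgrade this to a strictly positive constant I would invoke the Ho--Verd\'u inequality~\cite{Ho2010}, which bounds $H(Q)-H(P)$ from below whenever $P$ majorises $Q$ by a quantity that vanishes only if $P=Q$, and then estimate the spectra directly to show this lower bound stays bounded away from $0$ as long as $n\lambda$ does — giving $\icoh\ge c'>0$ uniformly on $(0,\lambda_1]$ (one checks, e.g., that $\eta=1/2$ and $n\lambda\to 1$ leave a gap of order $10^{-1}$ bits). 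Finally I would stitch the three ranges, set $c$ to the smallest of the constants obtained, and carry out the two bookkeeping checks that are easiest to get wrong: that the Fock-regime estimates are uniform as $\lambda\to 0$ (working with the exact binomial weights rather than their Poisson limits), and that the input energy genuinely meets the constraint $N=1/2$ in every range, as asserted in~\eqref{Behemoth}.
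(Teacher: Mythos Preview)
Your proposal is correct and follows essentially the same route as the paper: the three-range split, the vacuum for large $\lambda$, a qubit superposition near $\lambda=1/2$, and Fock environments $\ket{n}$ with $n\lambda\approx 1$ for small $\lambda$, together with the identification of the majorisation $\mathrm{spec}(\rho_B)\prec\mathrm{spec}(\rho_E)$ as the crux and the Ho--Verd\'u bound to extract a strictly positive constant, are exactly the paper's ingredients (your input $\sqrt{\eta}\,\ket{00}+\sqrt{1-\eta}\,\ket{11}$ at $\eta=1/2$ is the paper's $\tfrac{1}{\sqrt2}(\ket{01}+\ket{10})$ up to a local unitary on $A$, and your spectral description $\{\eta|c_k|^2+(1-\eta)|d_k|^2\}$ versus $\{\eta|c_{k-1}|^2+(1-\eta)|d_k|^2\}$ matches the paper's $p_\ell$, $q_\ell$). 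The only refinements the paper adds beyond your outline are the explicit choice $\frac{1}{n+1}\le\lambda\le\frac1n$, a detailed sorting of $p$ and $q$ to verify the majorisation, and a final Pinsker step after Ho--Verd\'u to reduce the relative-entropy lower bound to an elementary estimate on $|q_{n+1}-p_{n-1}|$.
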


\begin{proof}[Sketch of the proof]
When $1/2 < \lambda\leq 1$, it suffices to set $\sigma(\lambda)=\ketbra{0}$ and leverage~\eqref{pure loss Q}. Around $\lambda=1/2$, positive quantum capacity follows by perturbing the lower bound in~\eqref{Icoh} thanks to the Alicki--Fannes--Winter inequality~\cite{Alicki-Fannes, tightuniform}. It remains to establish the result for $0<\lambda\leq 1/2-\epsilon$, where $\epsilon>0$ is fixed. We start by making an ansatz for a state $\ket{\Psi}_{AB}$ to be plugged into~\eqref{energy-constrained Q 2}. Let us set $\ket{\Psi}_{AB} \coloneqq \frac{1}{\sqrt2}\big( \ket{01} + \ket{10}\big)$ and $\sigma(n)\coloneqq \ketbra{n}$. The output state $\omega_{AB}(n,\lambda) \coloneqq \big( I^A\otimes \Phi_{\lambda,\, \sigma(n)}^B\big) (\Psi_{AB})$ can be computed e.g.\ thanks to the formulae derived by Sabapathy and Winter~\cite[Section~III.B]{KK-VV}. One obtains that
\begin{align*}
Q\big( \Phi_{\lambda,\, \sigma(n)}, 1/2\big) \geq \mathcal{I}(n,\lambda) \coloneqq&\ \icoh_{\omega_{AB}(n,\lambda)} \\
=&\ H\left(p(n,\lambda)\right) - H\left(q(n,\lambda)\right) ,
\end{align*}
where the two probability distributions $p(n,\lambda)$ and $q(n,\lambda)$ over the alphabet $\{0,\ldots, n+1\}$ are defined by
\begin{align*}
p_\ell(n,\lambda) &\coloneqq \frac{1}{2(n\!+\!1)(1\!-\!\lambda)} \binom{n\!+\!1}{\ell} \left(1\!-\!\lambda\right)^{\ell}\lambda^{n-\ell} \\
&\quad\ \times \left((1\!-\!\lambda)(n\!-\!\ell\!+\! 1) + \left( (n\!+\!1)(1\!-\!\lambda) - \ell \right)^2\right) , \\
q_\ell(n,\lambda) &\coloneqq  \frac{1}{2(n\!+\!1)(1\!-\!\lambda)} \binom{n\!+\!1}{\ell} \left(1\!-\!\lambda\right)^{\ell}\lambda^{n-\ell} \\
&\quad\ \times \left( \lambda\ell + \left( (n\!+\!1)(1\!-\!\lambda) - \ell \right)^2\right) .
\end{align*}

\begin{figure}[ht]
\centering
\includegraphics[scale=0.85]{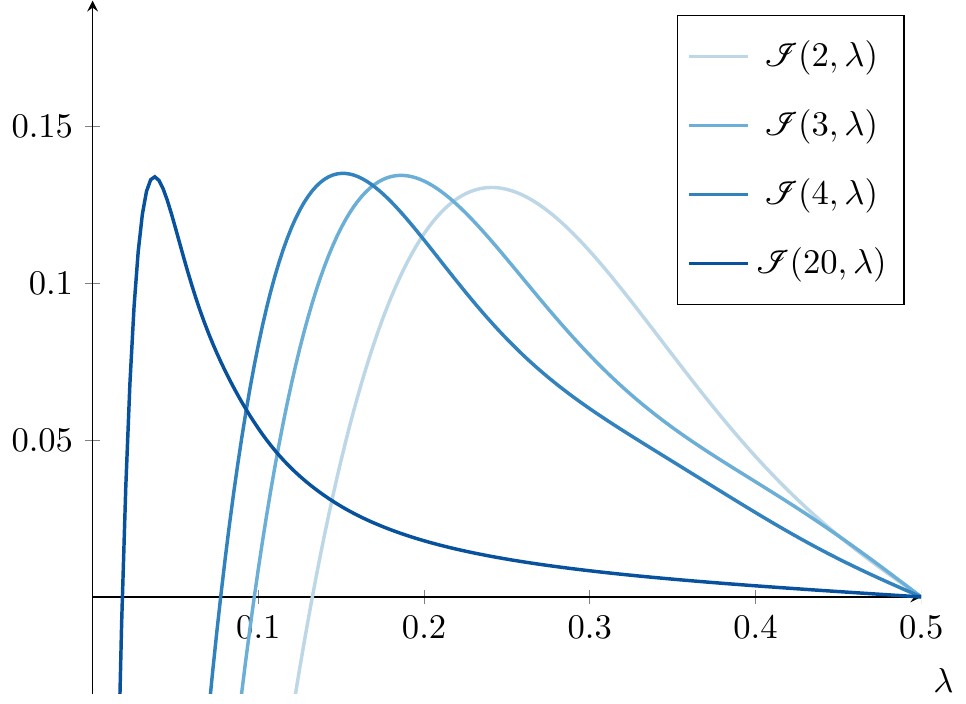}
\caption{The functions $\mathcal{I}(n,\lambda)$ plotted with respect to the variable $\lambda$ for several values of $n$.}
\label{Icoh figure}
\end{figure}

In Figure~\ref{Icoh figure} we plotted $\mathcal{I}(n,\lambda)$ as a function of $\lambda$ for increasing values of $n$. The lower endpoint of the range for which $\mathcal{I}(n,\lambda)\geq c$ for some fixed $c>0$ seems to move closer and closer to $0$ as $n$ grows. However, an analytical proof of this fact is technically challenging.
The crux of our argument is to show that $p(n,\lambda)$ and $q(n,\lambda)$ are in a majorisation relation, that is, $p(n,\lambda)\prec q(n,\lambda)$ for all $n\geq 2$ and all $\frac{1}{n+1}\leq \lambda\leq \frac1n$. Given two probability distributions $r$ and $s$ over the same alphabet $\{0,\ldots, N\}$, we say that $r$ is \emph{majorised} by $s$, and we write $r\prec s$, if $\sum_{\ell=0}^k r^\uparrow_\ell \geq \sum_{\ell=0}^k s^\uparrow_\ell$ holds for all $k=0,\ldots, N$, where $r^\uparrow$ and $s^\uparrow$ are obtained by sorting $r$ and $s$ in ascending order~\cite{MARSHALL-OLKIN}. This definition captures the intuitive notion of $r$ being `more disordered' than $s$. An immediate consequence is that the entropy of $r$ is never smaller than that of $s$. But more is true: a beautiful inequality recently established by Ho and Verd\'u~\cite[Theorem~3]{Ho2010} allows us to lower bound the entropy difference as
\bb
H(s) - H(r) \geq D\left(s^\uparrow \big\|\, r^\uparrow\right)\, ,
\label{Ho-Verdu}
\ee
where $D(u\|v) \coloneqq \sum_\ell u_\ell \log_2 \frac{u_\ell}{v_\ell}$ is the Kullback--Leibler divergence. This latter quantity can be in turn lower bounded as $D(u\| v)\geq \frac{1}{2\ln 2}\left\|u-v\right\|_1^2$ in term of the \emph{total variation distance} $\|u-v\|_1\coloneqq \sum_\ell |u_\ell - v_\ell |$ thanks to Pinsker's inequality~\cite{CSISZAR-KOERNER}. We find that
\begin{align*}
\mathcal{I}\left(n,\lambda\right) &= H(p(n,\lambda)) - H(q(n,\lambda)) \\
&\geq D\big( q^\uparrow (n,\lambda) \big\| p^\uparrow (n,\lambda) \big) \\
&\geq \frac{1}{2 \ln 2} \left\|q^\uparrow (n,\lambda) - p^\uparrow (n,\lambda) \right\|_1^2 \\
&\geq \frac{2}{\ln 2} \left|q^\uparrow_{n+1} (n,\lambda) - p^\uparrow_{n+1} (n,\lambda) \right|^2 \\
&= \frac{2}{\ln 2} \left|p_{n-1} (n,\lambda) - q_{n+1} (n,\lambda) \right|^2 \, ,
\end{align*}
where in the last line we used the fact, proven in the SM~\cite{Note1}, that $p_{n-1} (n,\lambda) = \max_\ell p_{\ell} (n,\lambda)$ and $q_{n+1} (n,\lambda) = \max_\ell q_{\ell} (n,\lambda)$ for all $n\geq 2$ and $\frac{1}{n+1}\leq \lambda\leq \frac{1}{n}$. It remains to lower bound $k(n,\lambda) \coloneqq \left|p_{n-1} (n,\lambda) - q_{n+1} (n,\lambda) \right|$, which is done by inspection. We find that (a)~$k(2,\lambda) \geq \epsilon/4$ for all $1/3\leq \lambda\leq 1/2-\epsilon$; and (b)~$k(n,\lambda)\geq c$ for some universal constant $c>0$ for all $n\geq 3$ and $\frac{1}{n+1}\leq \lambda\leq \frac{1}{n}$, concluding the proof. 
\end{proof}

Note that $Q\left(\Phi_{1/2,\, \ket{n}\!\bra{n}}\right)\equiv 0$ for all $n$ by Theorem~\ref{symmetric sigma thm}, while we have just shown that $Q\left(\Phi_{\lambda,\, \ket{n}\!\bra{n}}\right)> 0$ when $\frac{1}{n+1}\leq \lambda\leq \frac1n$. This illustrates the rather surprising fact that $Q\left(\Phi_{\lambda, \sigma}\right)$ can happen not to be monotonic in $\lambda$ for a fixed $\sigma$. In the SM~\cite{Note1} we prove that monotonicity still holds under certain circumstances, e.g.\ when $\sigma=\sigma_{\G}$ is Gaussian. Combining this with Theorem~\ref{symmetric sigma thm} also shows that $Q\left( \Phi_{\lambda,\sigma_{\G}} \right) \equiv 0$ for all $\lambda\leq 1/2$ and all Gaussian $\sigma_\G$.

From the proof we see that while the energy of the input of the channel in Theorem~\ref{Behemoth thm} is fixed, that of the environment state diverges as $\lambda$ approaches $0$. Intuitively, this may be due to the need for the receiver to distinguish the faint low-energy signals, which requires environmental states with highly oscillatory phase space structures and thus high energy. Whether this reasoning can be made rigorous is left as an open problem.

We now look at the optimal value of the constant $c$ in~\eqref{Behemoth}. Our argument yields $c \geq 5.133 \times 10^{-6}$, while numerical investigations suggest that $c\gtrsim 0.066$. If only sufficiently small values of $\lambda$ are taken into account, we can prove that $c \geq 0.0244$. To put this into perspective, elementary considerations show that $c\leq 1.377$~\cite{Note1}.

\textbf{\em Conclusions}.--- We have studied the transmission of quantum information on general attenuator channels, which are among the simplest examples of non-Gaussian channels and may be relevant for applications. We have shown that their quantum capacity vanishes for transmissivity $1/2$ and for a wide class of environment states. At the same time, we have uncovered an unexpected phenomenon: namely, for any non-zero value of the transmissivity there exists an environment state that makes the quantum capacity of the corresponding general attenuator larger than a universal constant. This also implies that said quantum capacity is not necessarily monotonically increasing in the transmissivity for a fixed environment state.

\textbf{\em Acknowledgments.}--- LL and MBP are supported by the ERC Synergy Grant BIOQ (grant no.\ 319130). VG acknowledges support by MIUR via PRIN 2017 (Progetto di Ricerca di Interesse Nazionale): project QUSHIP (2017SRNBRK).

\bibliographystyle{apsrev4-1}
\bibliography{../../biblio}


\clearpage

\onecolumngrid
\begin{center}
\vspace*{\baselineskip}
{\textbf{\large Supplemental material:\\ Bosonic quantum communication across arbitrarily high loss channels}}\\
\end{center}

\renewcommand{\theequation}{S\arabic{equation}}
\renewcommand{\thethm}{S\arabic{thm}}
\setcounter{equation}{0}
\setcounter{thm}{0}
\setcounter{figure}{0}
\setcounter{table}{0}
\setcounter{section}{0}
\setcounter{page}{1}
\makeatletter

\setcounter{secnumdepth}{2}

\section{Generalities}

\subsection{Quantum entropy}

The (von Neumann) \emph{entropy} of a quantum state $\rho$ is defined as
\bb
S(\rho) \coloneqq - \tr\left[\rho\log_2 \rho\right] ,
\label{entropy}
\ee
which is well defined although possibly infinite. Indeed, one way to understand it is via the infinite sum $S(\rho) = \sum_i (- p_i \log_2 p_i)$, where $\rho=\sum_i p_i \ketbra{e_i}$ is the spectral decomposition of $\rho$. Since all terms in the above sum are non-negative, the sum itself is well defined but possibly infinite.

Consider an $m$-mode system with Hilbert space $\HH_m = L^2\left( \R^m\right) \simeq \HH_1^{\otimes m}$. The \emph{total photon number} is a densely defined operator on $\HH_m$ that takes the form 
\bb
H_m \coloneqq \sum_{j=1}^m a_j^\dag a_j
\label{H}
\ee
when written in terms of the creation and annihilation operators. It is well known to have a discrete spectrum of the form $\left\{ \sum_{j=1}^m n_j:\, n_j\in \N\right\}$, with the eigenvector corresponding to $\sum_{j=1}^m n_j$ being given by the tensor product of Fock states $\ket{n_1}\ldots \ket{n_m}$. The single-mode thermal state with mean photon number $\nu\geq 0$ is given by
\bb
\tau_\nu \coloneqq \frac{1}{\nu+1}\sum_{n=0}^\infty \left( \frac{\nu}{\nu+1}\right)^n \ketbra{n}\, .
\label{thermal}
\ee
The thermal state over $m$ modes with \emph{total} mean photon number $\nu$ can be easily obtained as the $m$-fold tensor product $\tau_{\nu/m}^{\otimes m}$. Thermal states are important because they are the maximisers of the entropy among all states with a fixed mean photon number. That is,
\bb
\max\left\{ S(\rho):\, \Tr\left[ \rho\, H_m\right] \leq \nu\right\}  = S\left(\tau_{\nu/m}^{\otimes m}\right) = m\, g\left(\frac{\nu}{m}\right)
\label{tau maximises entropy}
\ee
holds for all $\nu\geq 0$, where
\bb
g(x)\coloneqq (x+1)\log_2 (x+1) - x\log_2 x\, ,
\label{g}
\ee
sometimes called the \emph{bosonic entropy}, expresses the entropy of a single-mode thermal state in terms of its mean photon number. The function $g$ has many notable properties: (a) it is monotonically increasing; (b) it is subadditive, meaning that
\bb
g(x+y) \leq g(x) + g(y) \qquad \forall\ x,y\geq 0\, ;
\label{subadditivity g}
\ee
(c) it is concave; and (d) it has the asymptotic behaviour
\bb
g(x) = \log_2 (ex) + o(1)\qquad (x\to\infty)\, .
\label{asymptotics g}
\ee

\subsection{Beam splitters}

A beam splitter is perhaps the simplest example of a passive unitary acting on an $(m+m)$-mode bipartite CV system. As reported in the main text~\eqref{BS}, it is defined by $U_\lambda \coloneqq e^{\arccos\sqrt{\lambda}\, \sumno_j ( a_j^\dag b_j - a_j b_j^\dag)}$, where $a_j,b_j$ are the annihilation operators on the $j$-th modes belonging to the first and second system, respectively. This exponential can be decomposed thanks to a well-known trick. Consider the annihilation operators $a_1,\ldots, a_m$ of $m$ independent modes. The \emph{Jordan map}~\cite{Jordan1935}
\bb
J:\ X\longmapsto \sum_{j,k=1}^m X_{jk} a_j^\dag a_k\, ,
\label{Jordan map}
\ee
is a Lie algebra isomorphism between the set of $m\times m$ matrices and that of the operators on the Hilbert space $\HH_m$ of $m$ modes that are bilinear in the $a_j^\dag$ and $a_k$. Let us note in passing that the Jordan map~\eqref{Jordan map} can be extended so as to include all operators that can be expressed as polynomials of degree up to $2$ in the creation and annihilation operators~\cite[Appendix~A]{LL17}. Now, since the Baker--Campbell--Hausdorff formula shows that $\ln (e^X e^Y)$ just depends on (nested) commutators between elements of the Lie algebra generated by $X$ and $Y$, one has that~\cite[Corollary~3.4]{HALL-LIE}
\bb
J\left( \ln (e^X e^Y) \right) = \ln \left(e^{J(X)} e^{J(Y)}\right) .  
\ee
In other words, if $e^X e^Y = e^Z$ as matrices, then $e^{J(X)} e^{J(Y)} = e^{J(Z)}$ as operators. This is extremely useful in practical computations. In our case, the exponential that defines a two-mode beam splitter involves only two independent operators $a$ and $b$. Therefore, the matrix Lie algebra that corresponds to it via the Jordan map is composed of $2\times 2$ matrices. We obtain the explicit correspondence
\begin{align}
a^\dag b &\longleftrightarrow \left( \begin{smallmatrix} 0 & 1 \\ 0 & 0 \end{smallmatrix}\right) ,\\
ab^\dag &\longleftrightarrow \left( \begin{smallmatrix} 0 & 0 \\ 1 & 0 \end{smallmatrix}\right) ,\\
\frac12(a^\dag a - b^\dag b) &\longleftrightarrow \left( \begin{smallmatrix} 1/2 & 0 \\ 0 & -1/2 \end{smallmatrix}\right) , \\[1ex]
\frac12 (a^\dag b + a b^\dag) &\longleftrightarrow \left( \begin{smallmatrix} 0 & 1/2 \\ 1/2 & 0 \end{smallmatrix}\right) ,\\
\frac{1}{2i} (a^\dag b - a b^\dag) &\longleftrightarrow \left( \begin{smallmatrix} 0 & -i/2 \\ i/2 & 0 \end{smallmatrix}\right) .
\end{align}
By exploiting these formulae, performing the computations for $2\times 2$ matrices, and bringing back the result with the Jordan map, it is possible to prove that~\cite[Appendix~5]{BARNETT-RADMORE}
\bb
U_\lambda = e^{\sqrt{\frac{1-\lambda}{\lambda}}\, a^\dag b}\, e^{-\frac12\ln \lambda \, (a^\dag a - b^\dag b)}\, e^{-\sqrt{\frac{1-\lambda}{\lambda}}\, a b^\dag}\, .
\label{splitting a bdag}
\ee
This decomposition can be employed to derive an expression for the output state obtained by mixing in a beam splitter of arbitrary transmissivity the vacuum $\ket{0}$ or the first Fock state $\ket{1}$ with another Fock states $\ket{n}$. Namely,
\begin{align}
U_\lambda \ket{0}\ket{n} &= \sum_{\ell=0}^{n} \sqrt{\binom{n}{\ell}} \left(1-\lambda\right)^{\frac{\ell}{2}} \lambda^{\frac{n-\ell}{2}} \ket{\ell}\ket{n-\ell}\, , \label{U 0n} \\
U_\lambda \ket{1}\ket{n} &= - \frac{1}{\sqrt{(n+1)(1-\lambda)}} \sum_{\ell=0}^{n+1} \sqrt{\binom{n+1}{\ell}} \left(1-\lambda\right)^{\frac{\ell}{2}} \lambda^{\frac{n-\ell}{2}} ((n+1)(1-\lambda) - \ell) \ket{\ell}\ket{n+1-\ell}\, . \label{U 1n}
\end{align}
To prove~\eqref{U 0n}, write
\begin{align*}
U_\lambda \ket{0}\ket{n} &= e^{\sqrt{\frac{1-\lambda}{\lambda}}\, a^\dag b}\, e^{-\frac12\ln\lambda \, (a^\dag a - b^\dag b)}\, e^{-\sqrt{\frac{1-\lambda}{\lambda}}\, a b^\dag} \ket{0}\ket{n} \\
&= e^{\sqrt{\frac{1-\lambda}{\lambda}}\, a^\dag b}\, e^{-\frac12\ln\lambda \, (a^\dag a - b^\dag b)} \ket{0}\ket{n} \\
&= e^{\sqrt{\frac{1-\lambda}{\lambda}}\, a^\dag b}\, \lambda^{\frac{n}{2}} \ket{0}\ket{n} \\
&= \lambda^{\frac{n}{2}} \sum_{\ell=0}^n \frac{1}{\ell!} \left( \frac{1-\lambda}{\lambda} \right)^{\frac{\ell}{2}} \left(\sqrt{\ell!}\, \ket{\ell}\right) \left( \sqrt{\frac{n!}{(n-\ell)!}}\, \ket{n-\ell} \right) \\
&= \sum_{\ell=0}^{n} \sqrt{\binom{n}{\ell}} \left(1-\lambda\right)^{\frac{\ell}{2}} \lambda^{\frac{n-\ell}{2}} \ket{\ell}\ket{n-\ell}\, .
\end{align*}
In the same spirit, one can compute
\begin{align*}
U_\lambda \ket{1}\ket{n} &= e^{\sqrt{\frac{1-\lambda}{\lambda}}\, a^\dag b}\, e^{-\frac12\ln\lambda \, (a^\dag a - b^\dag b)}\, e^{-\sqrt{\frac{1-\lambda}{\lambda}}\, a b^\dag} \ket{1}\ket{n} \\
&= e^{\sqrt{\frac{1-\lambda}{\lambda}}\, a^\dag b}\, e^{-\frac12\ln \lambda \, (a^\dag a - b^\dag b)} \left( \ket{1}\ket{n} - \sqrt{\frac{1-\lambda}{\lambda}} \sqrt{n+1} \ket{0}\ket{n+1} \right) \\
&= e^{\sqrt{\frac{1-\lambda}{\lambda}}\, a^\dag b} \left( \lambda^{\frac{n-1}{2}} \ket{1}\ket{n} - \sqrt{\frac{1-\lambda}{\lambda}} \sqrt{n+1} \lambda^{\frac{n+1}{2}} \ket{0}\ket{n+1} \right) \\
&= \lambda^{\frac{n-1}{2}} e^{\sqrt{\frac{1-\lambda}{\lambda}}\, a^\dag b} \ket{1}\ket{n} - \lambda^{\frac{n}{2}} \sqrt{(1-\lambda)(n+1)} e^{\sqrt{\frac{1-\lambda}{\lambda}}\, a^\dag b} \ket{0}\ket{n+1} \\
&= \lambda^{\frac{n-1}{2}} \sum_{\ell=0}^n \frac{1}{\ell!} \left(\frac{1-\lambda}{\lambda}\right)^{\frac{\ell}{2}}  \left( \sqrt{(\ell+1)!}\, \ket{\ell+1} \right) \left( \sqrt{\frac{n!}{(n-\ell)!}}\, \ket{n-\ell} \right) \\
&\quad - \lambda^{\frac{n}{2}} \sqrt{(1-\lambda)(n+1)} \sum_{\ell=0}^{n+1} \frac{1}{\ell!} \left(\frac{1-\lambda}{\lambda}\right)^{\frac{\ell}{2}} \left( \sqrt{\ell!}\, \ket{\ell} \right) \left( \sqrt{\frac{(n+1)!}{(n+1-\ell)!}}\, \ket{n+1-\ell} \right) \\
&= \frac{1}{\sqrt{n+1}} \sum_{\ell=1}^{n+1} \sqrt{\binom{n+1}{\ell}} (1-\lambda)^{\frac{\ell}{2}} \lambda^{\frac{n-\ell}{2}}\, \ell\, \ket{\ell}\ket{n+1-\ell} \\
&\quad - \sqrt{(n+1)(1-\lambda)} \sum_{\ell=0}^{n+1} \sqrt{\binom{n+1}{\ell}} (1-\lambda)^{\frac{\ell}{2}} \lambda^{\frac{n-\ell}{2}} \ket{\ell}\ket{n+1-\ell} \\
&= \frac{1}{\sqrt{(n+1)(1-\lambda)}} \sum_{\ell=0}^{n+1} \sqrt{\binom{n+1}{\ell}} \left(1-\lambda\right)^{\frac{\ell}{2}} \lambda^{\frac{n-\ell}{2}} (\ell-(n+1)(1-\lambda)) \ket{\ell}\ket{n+1-\ell}\, ,
\end{align*}
which proves~\eqref{U 1n}.

If instead of~\eqref{splitting a bdag} one employs the alternative decomposition
\bb
U_\lambda = e^{-\sqrt{\frac{1-\lambda}{\lambda}}\, ab^\dag}\, e^{\frac12\ln\lambda \, (a^\dag a - b^\dag b)}\, e^{\sqrt{\frac{1-\lambda}{\lambda}}\, a^\dag b}\, ,
\label{splitting adag b}
\ee
one finds that
\begin{align}
U_\lambda \ket{n}\ket{0} &= \sum_{\ell=0}^{n} (-1)^\ell \sqrt{\binom{n}{\ell}} \left(1-\lambda\right)^{\frac{\ell}{2}} \lambda^{\frac{n-\ell}{2}} \ket{n-\ell}\ket{\ell}\, , \label{U n0} \\
U_\lambda \ket{n}\ket{1} &= - \frac{1}{\sqrt{(n+1)(1-\lambda)}} \sum_{\ell=0}^{n+1} (-1)^\ell \sqrt{\binom{n+1}{\ell}} \left(1-\lambda\right)^{\frac{\ell}{2}}  \lambda^{\frac{n-\ell}{2}} ((n+1)(1-\lambda)- \ell) \ket{n+1-\ell}\ket{\ell}\, . \label{U n1}
\end{align}
Note that~\eqref{U n0} and~\eqref{U n1} can also be derived from~\eqref{U 0n} and~\eqref{U 1n} by applying the swap operator to both sides of the equations.

Finally, for future convenience we report the expressions of the matrices that represent $U_\lambda$ on subspaces with low total photon number. By applying~\eqref{splitting a bdag} or~\eqref{splitting adag b} one can verify that
\begin{align}
U_\lambda\big|_{\Span\{\ket{0}\ket{1},\,\ket{1}\ket{0}\}} &= \begin{pmatrix} \sqrt{\lambda} & -\sqrt{1-\lambda} \\ \sqrt{1-\lambda} & \sqrt{\lambda} \end{pmatrix} , \label{U 1} \\[2ex]
U_\lambda \big|_{\Span\{\ket{0}\ket{2},\,\ket{1}\ket{1},\,\ket{2}\ket{0}\}} &= \begin{pmatrix} \lambda & - \sqrt{2\lambda(1-\lambda)}  & 1-\lambda \\[1ex] \sqrt{2\lambda(1-\lambda)} & 2\lambda-1 & -\sqrt{2\lambda(1-\lambda)} \\[1ex] 1-\lambda & \sqrt{2\lambda(1-\lambda)}  & \lambda \end{pmatrix} . \label{U 2}
\end{align}

\subsection{General attenuators}

The family of channels that we consider here is that of \emph{general attenuators}~\cite{Koenig2015, KK-VV, Jack2018, Lim2019}, sometimes called additive noise channels~\cite{Koenig2015}. They are parametrised by a generic $m$-mode quantum state $\sigma$ and by a value of the associated transmissivity $0\leq \lambda\leq 1$. As reported in the main text~\eqref{Phi}--\eqref{Phi chi}, the action of a general attenuator $\Phi_{\lambda,\sigma}$ on a system $B$ is defined by $\Phi_{\lambda, \sigma}^B (\rho_B) \coloneqq \Tr_E\left[ U_\lambda^{BE} (\rho_B\otimes \sigma_E) \left(U_\lambda^{BE}\right)^\dag \right]$, which -- dropping the system labels for simplicity -- translates to $\chi_{\Phi_{\lambda, \sigma}(\rho)}(\alpha) = \chi_\rho\left(\sqrt\lambda\, \alpha\right) \chi_\sigma\left(\sqrt{1-\lambda}\, \alpha\right)$ at the level of characteristic functions. This particularly simple expression can be used in conjunction with the composition rule for displacement operators to prove the covariance formulae
\begin{align}
\Phi_{\lambda,\sigma} \circ \pazocal{D}_z &= \pazocal{D}_{\sqrt\lambda\, z}\circ \Phi_{\lambda,\sigma} \, , \label{covariance 1} \\
\Phi_{\lambda,\, \pazocal{D}_z(\sigma)} &= \pazocal{D}_{\sqrt{1-\lambda}\, z}\circ \Phi_{\lambda,\sigma}\, , \label{covariance 2}
\end{align}
where the displacement channel is defined as
\bb
\pazocal{D}_z(\rho)\coloneqq D(z)\,\rho\, D(z)^\dag\, .
\label{displacement channel}
\ee
Note that the identity~\eqref{covariance 2} has been used in the proof of Theorem~\ref{symmetric sigma thm}.

The canonical example of a general attenuator channel $\Phi_{\lambda,\sigma}$ -- say, in the single-mode case -- is obtained by setting $\sigma = \tau_\nu$, where the thermal state with mean photon number $
\nu$ is defined by~\eqref{thermal}. The resulting map $\mathcal{E}_{\lambda,\nu}\coloneqq \Phi_{\lambda,\, \tau_\nu}$ is usually referred to as a \emph{thermal attenuator}. An even simpler yet extremely important channel, called the \emph{quantum-limited attenuator} (or the \emph{pure loss channel}) and usually denoted with $\mathcal{E}_\lambda \coloneqq \mathcal{E}_{\lambda, 0} = \Phi_{\lambda,\, \ket{0}\!\bra{0}}$, is obtained by setting the temperature of the environment equal to zero.

The energy-constrained quantum capacity of the pure loss channel has been determined exactly. It reads
\bb
Q\left(\mathcal{E}_{\lambda}, N \right) = \max\left\{ g(\lambda N) - g\left( (1-\lambda) N\right),\, 0\right\} .
\label{pure loss Q}
\ee
The decisive step towards establishing~\eqref{pure loss Q} has been done by Wolf et al.~\cite[Eq.~(12)]{Wolf2007}, who proved that for this particular channel the regularisation in~\eqref{energy-constrained Q 1} is not needed. This implies that the quantum capacity is simply given by the coherent information~\eqref{energy-constrained Q 2}, which had been previously computed by Holevo and Werner~\cite[Eq.~(5.9)]{holwer}. A more complete discussion of these latter calculations, and in particular of why it suffices to consider thermal states at the input, can be found in Holevo's monograph~\cite[Propositions~12.38 and~12.47]{HOLEVO-CHANNELS} (see also the more recent version~\cite[Propositions~12.40 and~12.62]{HOLEVO-CHANNELS-2}).
The problem of completeness of the original argument was recently raised by Wilde and Qi~\cite[Remark~4]{Mark-energy-constrained}, and further elaborated on by Noh et al.~\cite[Theorem~9]{Noh2019}. An alternative derivation of the formula~\eqref{pure loss Q} has been put forward by Wilde et al.~\cite{Mark2012}.

We do not yet have an exact expression for the energy-constrained capacity of all thermal attenuators. However, many upper~\cite{PLOB, Rosati2018, Sharma2018, Noh2019} as well as lower~\cite{holwer, Noh2020} bounds have been discovered so far. We do not report the corresponding formulae here, as we do not need them. What we will need, instead, is a much simpler observation due to Caruso and Giovannetti~\cite{Caruso2006}.

\begin{lemma}[{\cite[p.~3]{Caruso2006}}] \label{Caruso V lemma}
For all $0\leq \lambda\leq \frac12$ and all $\nu\geq 0$, the thermal attenuator $\mathcal{E}_{\lambda, \nu}$ is anti-degradable, and thus $Q\left( \mathcal{E}_{\lambda, \nu}\right)=0$.
\end{lemma}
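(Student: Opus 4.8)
The plan is to reconstruct the argument by feeding the description of the weak complementary channel obtained in the proof of Theorem~\ref{symmetric sigma thm} into the composition (semigroup) law for thermal attenuators. First I would specialise to $\sigma=\tau_\nu$ the identity established there, namely that the weak complementary channel arising from the dilation~\eqref{Phi} of $\Phi_{\lambda,\sigma}$ is $\Phi_{\lambda,\sigma}^{\mathrm{wc}}=\pazocal{V}\circ\Phi_{1-\lambda,\,\pazocal{V}(\sigma)}$, with $\pazocal{V}(\cdot)=(-1)^{H_1}(\cdot)(-1)^{H_1}$. Since $\tau_\nu$ is Fock-diagonal we have $\pazocal{V}(\tau_\nu)=\tau_\nu$, hence $\mathcal{E}_{\lambda,\nu}^{\mathrm{wc}}=\pazocal{V}\circ\mathcal{E}_{1-\lambda,\nu}$.

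Next I would show that, when $0\le\lambda\le\frac12$, the channel $\mathcal{E}_{\lambda,\nu}$ is recovered from $\mathcal{E}_{1-\lambda,\nu}$ by post-processing. Inserting $\chi_{\tau_\nu}(\alpha)=e^{-\frac12(2\nu+1)|\alpha|^2}$ into~\eqref{Phi chi} one verifies the composition rule $\mathcal{E}_{\lambda_2,\nu_2}\circ\mathcal{E}_{\lambda_1,\nu_1}=\mathcal{E}_{\lambda_1\lambda_2,\,\nu_3}$ with $(2\nu_3+1)(1-\lambda_1\lambda_2)=(2\nu_1+1)(1-\lambda_1)\lambda_2+(2\nu_2+1)(1-\lambda_2)$. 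Taking $\lambda_1=1-\lambda$, $\nu_1=\nu_2=\nu$, and $\lambda_2=\frac{\lambda}{1-\lambda}$ — which lies in $[0,1]$ precisely because $\lambda\le\frac12$, so that $\lambda_1\lambda_2=\lambda$ — a one-line computation gives $\nu_3=\nu$, i.e.\ $\mathcal{E}_{\lambda,\nu}=\mathcal{E}_{\lambda/(1-\lambda),\,\nu}\circ\mathcal{E}_{1-\lambda,\nu}$. Combining this with the previous paragraph and $\pazocal{V}\circ\pazocal{V}=\Id$,
\bbb
\mathcal{E}_{\lambda,\nu}=\big(\mathcal{E}_{\lambda/(1-\lambda),\,\nu}\circ\pazocal{V}\big)\circ\big(\pazocal{V}\circ\mathcal{E}_{1-\lambda,\nu}\big)=\big(\mathcal{E}_{\lambda/(1-\lambda),\,\nu}\circ\pazocal{V}\big)\circ\mathcal{E}_{\lambda,\nu}^{\mathrm{wc}}\,,
\eee
so $\mathcal{E}_{\lambda,\nu}$ is obtained from its weak complementary by applying a channel.

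Finally I would upgrade this to genuine anti-degradability. Purifying the environment as $\ket{\psi}_{EE'}$ with $\Tr_{E'}\ketbra{\psi}=\tau_\nu$, the Stinespring dilation $\rho_B\mapsto U_\lambda^{BE}(\rho_B\otimes\ketbra{\psi}_{EE'})(U_\lambda^{BE})^\dag$ has a complementary channel $\mathcal{E}_{\lambda,\nu}^{\mathrm{c}}$ as its marginal on $EE'$ and $\mathcal{E}_{\lambda,\nu}^{\mathrm{wc}}$ as its marginal on $E$; thus $\mathcal{E}_{\lambda,\nu}^{\mathrm{wc}}=\Tr_{E'}\circ\,\mathcal{E}_{\lambda,\nu}^{\mathrm{c}}$ is a coarse-graining of $\mathcal{E}_{\lambda,\nu}^{\mathrm{c}}$, so composing the recovery map above with $\Tr_{E'}$ yields a channel $\mathcal{A}$ with $\mathcal{E}_{\lambda,\nu}=\mathcal{A}\circ\mathcal{E}_{\lambda,\nu}^{\mathrm{c}}$. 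Hence $\mathcal{E}_{\lambda,\nu}$ is anti-degradable and $Q(\mathcal{E}_{\lambda,\nu})=0$ by the usual no-cloning argument~\cite{Devetak-Shor}. I expect the only real work to be the bookkeeping of the middle step — checking that $\nu_3$ comes out equal to $\nu$ for the stated choice — together with the conceptually easy but worth-stating observation that recoverability from the mixed-environment output does entail recoverability from the Stinespring complementary.
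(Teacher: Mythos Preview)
Your argument is correct. The paper does not actually prove this lemma --- it is merely cited from~\cite{Caruso2006} --- so there is no paper-proof to match directly. The closest the paper comes is Corollary~\ref{anti-deg convex hull G cor}, which establishes the more general fact that $\Phi_{\lambda,\sigma_\G}$ is anti-degradable for all $\lambda\le 1/2$ and all Gaussian $\sigma_\G$ (of which $\tau_\nu$ is a special case).

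That corollary uses a slightly different decomposition: it writes $\Phi_{\lambda,\sigma_\G}=\Phi_{2\lambda,\sigma_\G}\circ\Phi_{1/2,\sigma_\G}$ and then invokes Theorem~\ref{symmetric sigma thm} to conclude that the inner factor $\Phi_{1/2,\sigma_\G}$ is anti-degradable, so the composite is too. Your route is more direct for the thermal case: you use the weak-complementary identity $\mathcal{E}_{\lambda,\nu}^{\mathrm{wc}}=\pazocal{V}\circ\mathcal{E}_{1-\lambda,\nu}$ together with the semigroup law to exhibit $\mathcal{E}_{\lambda,\nu}$ as a post-processing of $\mathcal{E}_{1-\lambda,\nu}$, and hence of its own weak complementary, then lift to the Stinespring complementary by tracing out the purifying system. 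This is essentially the original Caruso--Giovannetti argument and is arguably cleaner than routing through the $\lambda=1/2$ pivot; the paper's version, on the other hand, generalises immediately to arbitrary (centred) Gaussian environments, where the weak-complementary-to-channel identification is less explicit. Your verification that $\nu_3=\nu$ is correct, and the final ``upgrade'' step is exactly right.
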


The above result can be further generalised thanks to the concept of channel $k$-extendibility. Here, anti-degradable channels are precisely those that are $2$-extendible. The complete characterisation of the $k$-extendibility regions of all thermal attenuators has been put forward recently~\cite{extendibility}.

We now turn to the problem of estimating the quantum capacity of general attenuators. We start by recalling the following elementary fact, that is part of the folklore.

\begin{lemma} \label{universal upper bound Q}
Let $\Phi$ be a quantum channel acting on a system of $m$ modes. For all $N\geq 0$, its energy-constrained quantum capacity satisfies
\bbb
Q\left(\Phi,\, N \right) \leq m\,g\left(\frac{N}{m}\right) ,
\eee
where the bosonic entropy is given by~\eqref{g}.
\end{lemma}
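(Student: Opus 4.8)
The plan is to bound the quantum capacity $Q(\Phi, N)$ directly via its definition in~\eqref{energy-constrained Q 1}--\eqref{energy-constrained Q 2}, using the fact that the coherent information is controlled by the output entropy. First I would recall that $\icoh_{(I_A\otimes\Phi_B)(\Psi_{AB})} = S\big(\Phi_B(\Psi_B)\big) - S\big((I_A\otimes\Phi_B)(\Psi_{AB})\big)$, where $\Psi_B = \Tr_A\Psi_{AB}$; this is just the definition $\icoh_\rho = S(\rho_B) - S(\rho_{AB})$ rewritten (note $S(\rho) = -\Tr[\rho\log_2\rho]$). Since entropy is non-negative, $\icoh_{(I_A\otimes\Phi_B)(\Psi_{AB})} \leq S\big(\Phi_B(\Psi_B)\big)$. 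Hence
\bb
Q_1(\Phi, N) \;\leq\; \sup_{\Tr[\Psi_B H_B]\leq N} S\big(\Phi(\Psi_B)\big)\, .
\label{Q1 bound by output entropy}
\ee

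Next I would argue that the right-hand side of~\eqref{Q1 bound by output entropy} is itself at most $m\,g(N/m)$. The point here is a little subtle: the energy constraint is on the \emph{input} state $\Psi_B$, not the output $\Phi(\Psi_B)$, so one cannot invoke~\eqref{tau maximises entropy} for the output directly. The clean way around this is to observe that the attenuator $\Phi_{\lambda,\sigma}$ is a composition of a passive beam-splitter unitary on $BE$ (which conserves total photon number) followed by a partial trace over $E$, and partial trace cannot increase photon number on the surviving system in the relevant sense; more precisely, for any channel of the form~\eqref{Phi} one has $\Tr[\Phi(\rho)H_B] \leq \Tr[\rho H_B] + (1-\lambda)\Tr[\sigma H_E]$, but even without tracking the constant it suffices that \emph{attenuators} do not increase mean photon number when $\sigma$ is, say, the vacuum. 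Actually, the cleanest and fully general route — which is what I expect the authors to take, since the lemma is stated for an \emph{arbitrary} channel $\Phi$ — is to realise that the statement as written must be using a different mechanism. Re-reading, the lemma asserts $Q(\Phi,N)\le m\,g(N/m)$ for \emph{every} channel $\Phi$ on $m$ modes, so the bound on the \emph{output} entropy cannot come from an energy constraint on the output. Instead I would bound $S(\Phi(\Psi_B)) \le S(\Psi_B) + S(\text{ancilla})$ is not quite it either; rather, recall that $\icoh_{(I_A\otimes\Phi_B)(\Psi_{AB})} \le S(\Psi_A) = S(\Psi_B)$ because coherent information is upper bounded by the input entropy (subadditivity / the fact that $S(\rho_{AB})\ge |S(\rho_A)-S(\rho_B)|$ combined with purity of $\Psi_{AB}$ gives $\icoh = S(\rho_B)-S(\rho_{AB}) \le S(\rho_B) - |S(\rho_A)-S(\rho_B)|$, and since $S(\rho_A) = S(\rho_{AB})$... ). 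The efficient statement is: for $\Psi_{AB}$ pure, $S(\rho_B) = S(\rho_A)$ after the channel would not hold, but \emph{before} the channel $S(\Psi_A)=S(\Psi_B)$, and data processing on the $A$ side gives $\icoh_{(I_A\otimes\Phi)(\Psi_{AB})} \le \icoh_{\Psi_{AB}} = S(\Psi_B) - 0 = S(\Psi_B)$ since $\Psi_{AB}$ is pure. Therefore $Q_1(\Phi,N) \le \sup_{\Tr[\Psi_B H_B]\le N} S(\Psi_B) = \max\{S(\rho):\Tr[\rho H_m]\le N\} = m\,g(N/m)$ by~\eqref{tau maximises entropy}.

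Finally I would lift the single-letter bound to the regularised capacity. Since $\Phi^{\otimes k}$ acts on $km$ modes, the inequality just proved gives $Q_1(\Phi^{\otimes k}, kN) \le km\,g\!\big(\tfrac{kN}{km}\big) = km\,g(N/m)$, so by~\eqref{energy-constrained Q 1}, $Q(\Phi,N) = \sup_k \tfrac1k Q_1(\Phi^{\otimes k}, kN) \le \sup_k \tfrac1k\cdot km\,g(N/m) = m\,g(N/m)$, which is the claim. The main obstacle — really the only one requiring care — is the second paragraph: pinning down exactly why the coherent information at the output is bounded by $m\,g(N/m)$ given that the energy constraint sits on the input purification $\Psi_B$. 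The slick resolution is the monotonicity of coherent information under the channel on the decoded system combined with the fact that a pure state has zero bipartite entropy, so $\icoh_{\Psi_{AB}} = S(\Psi_B)$; after that, the thermal-state entropy maximiser~\eqref{tau maximises entropy} and the tensorisation step are routine.
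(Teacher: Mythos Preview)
Your final argument is correct and essentially matches the paper's proof: bound the coherent information by the input entropy $S(\Psi_B)$ (the paper does this via the Araki--Lieb inequality $H(AB)\geq|H(A)-H(B)|$, giving $H(B)-H(AB)\leq H(A)_{\Psi_A}=S(\Psi_B)$; you instead use data processing of $I(A\rangle B)$ under the channel acting on $B$ --- note that your phrase ``on the $A$ side'' is a slip, since $\Phi$ acts on $B$), then invoke the thermal-state maximiser~\eqref{tau maximises entropy} and tensorise. The exploratory detours in your second paragraph (output-entropy bound, photon-number tracking for attenuators) are unnecessary and can be dropped.
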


\begin{proof}
For any bipartite quantum system $AB$ we have that $H(AB)\geq \left|H(A) - H(B)\right|$. From this we deduce that the coherent information in~\eqref{energy-constrained Q 1} satisfies
\bbb
I_{\mathrm{coh}} (A\rangle B)_{\left( I_A\otimes \Phi_B\right)(\Psi_{AB})} = \left(H(B) - H(AB)\right)_{\left( I_A\otimes \Phi_B\right)(\Psi_{AB})} \leq H(A)_{\Psi_A} = S(\Psi_A) = S(\Psi_B) \leq m\, g\left( \frac{N}{m}\right) ,
\eee
where we used (i)~the fact that the initial state $\Psi_{AB} = \ketbra{\Psi}_{AB}$ is pure; and (ii)~the fact that the thermal state maximises the entropy for a given mean photon number, as stated in~\eqref{tau maximises entropy}. Since the above upper bound is additive, applying the LSD theorem~\eqref{energy-constrained Q 2} yields the claim.
\end{proof}

Exploiting known extremality properties of Gaussian states~\cite{Eisert-Wolf, Wolf2006}, the recent solution of the minimum output entropy conjecture~\cite{Jack-constrained} (see also~\cite{Giovadd, Giovadd-CMP}), and the even more recently established conditional entropy power inequality~\cite{Koenig2015, Jack2018}, Lim et al.~\cite{Lim2019} were able to prove the following more sophisticated bounds.

\begin{lemma}[{\cite[Sections~III and~IV]{Lim2019}}] \label{Lim lemma}
Let $\sigma$ be a single-mode state with mean photon number $\nu_\sigma$ and entropy $S(\sigma)$. Then, for all $0\leq \lambda\leq 1$ and $N\geq 0$ the energy-constrained quantum capacity of the corresponding general attenuator satisfies
\bb
g\left( (1\!-\!\lambda) g^{-1}\!( S(\sigma) ) + \lambda N \right) - S(\sigma) - g\left( \lambda \nu_\sigma + (1-\lambda)N \right) \leq Q\left( \Phi_{\lambda, \sigma}, N\right) \leq g\left( \lambda N + (1\!-\!\lambda) \nu_\sigma \right) - \ln \left( \lambda + (1\!-\!\lambda) e^{S(\sigma)} \right) ,
\label{bounds Lim}
\ee
where $g^{-1}$ is the inverse function of the bosonic entropy $g$ defined by~\eqref{g}.
\end{lemma}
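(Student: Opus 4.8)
The plan is to obtain both bounds from the single-letter coherent-information formula~\eqref{energy-constrained Q 2}, rewriting the coherent information through the complementary channel. I would first fix a purification $\sigma = \Tr_{E'}\ketbra{\phi}_{EE'}$; then, for any pure input $\Psi_{AB}$ with $\Tr[\Psi_B H_B]\le N$, the channel output $\Phi_{\lambda,\sigma}(\Psi_B)$ together with its purifying modes $E_{\mathrm{out}}E'$ is obtained by acting with the beam splitter $U_\lambda^{BE}$ on $\Psi_{AB}\otimes\ketbra{\phi}_{EE'}$, and purity of the global state yields $\icoh_{(I_A\otimes\Phi_{\lambda,\sigma}^B)(\Psi_{AB})} = S(B_{\mathrm{out}}) - S(E_{\mathrm{out}}E')$. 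Here $B_{\mathrm{out}}$ has mean photon number $\lambda\,\Tr[\Psi_B H_B] + (1-\lambda)\nu_\sigma$; the reflected mode $E_{\mathrm{out}}$ is, up to the entropy- and photon-number-preserving phase-space inversion $\mathcal V$, the image of $\Psi_B$ under a beam splitter of transmissivity $1-\lambda$ with the $E$-port in state $\sigma$ (the weak-complementary identity from the proof of Theorem~\ref{symmetric sigma thm}); and $S(E')=S(\sigma)$.

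For the lower bound I would plug into~\eqref{energy-constrained Q 2} a two-mode squeezed vacuum $\Psi_{AB}$ whose $B$-marginal is the thermal state $\tau_N$, saturating the constraint. The subtracted term I would control by subadditivity of the entropy and the maximality of thermal entropy at fixed mean photon number~\eqref{tau maximises entropy}: $S(E_{\mathrm{out}}E') \le S(E_{\mathrm{out}}) + S(E') \le g\big((1-\lambda)N + \lambda\nu_\sigma\big) + S(\sigma)$, where I use that $\tau_N$ has vanishing mean field so $\langle n\rangle_{E_{\mathrm{out}}} = (1-\lambda)N + \lambda\nu_\sigma$. For $S(B_{\mathrm{out}}) = S\big(\Phi_{\lambda,\sigma}(\tau_N)\big)$ I would establish the lower bound $g\big(\lambda N + (1-\lambda)g^{-1}(S(\sigma))\big)$, either directly from the quantum entropy-photon-number inequality or by a Gaussian-extremality reduction~\cite{Wolf2006, Eisert-Wolf} to the worst Gaussian environment of entropy $S(\sigma)$, which is the thermal state $\tau_{g^{-1}(S(\sigma))}$ and for which the output entropy is computed explicitly. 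Subtracting the two estimates gives the lower bound in~\eqref{bounds Lim}.

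For the upper bound one cannot restrict the input, and the regularisation in~\eqref{energy-constrained Q 1} must be eliminated. I would fix $k$, take an arbitrary pure $\Psi_{AB^k}$ with $\Tr[\Psi_{B^k}H_{B^k}]\le kN$, and bound $S(B^k_{\mathrm{out}}) - S\big(E^k_{\mathrm{out}}E^{\prime k}\big)$ termwise. The first term obeys $S(B^k_{\mathrm{out}}) \le \sum_j g\big(\lambda\langle n_j\rangle + (1-\lambda)\nu_\sigma\big) \le k\,g\big(\lambda N + (1-\lambda)\nu_\sigma\big)$ by subadditivity, the thermal bound~\eqref{tau maximises entropy}, concavity of $g$, and the energy constraint, exactly as in Lemma~\ref{universal upper bound Q}. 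The crux is the estimate $S\big(E^k_{\mathrm{out}}E^{\prime k}\big) \ge k\ln\big(\lambda + (1-\lambda)e^{S(\sigma)}\big)$, which I would extract from the conditional bosonic entropy power inequality~\cite{Koenig2015, Jack2018} applied to the beam splitter producing $E^k_{\mathrm{out}}$ from the two conditionally independent ports $B^k$ (transmissivity $1-\lambda$) and $E^k$ (transmissivity $\lambda$), conditioned on the quantum memory $E^{\prime k}$: since $B^k$ is uncorrelated with $E^{\prime k}$ one has $S(B^k|E^{\prime k}) = S(\Psi_{B^k})\ge 0$, and $S(E^k|E^{\prime k}) = -k\,S(\sigma)$ because $E^k E^{\prime k}$ is pure; feeding these in, using $S(E^k_{\mathrm{out}}|E^{\prime k}) = S\big(E^k_{\mathrm{out}}E^{\prime k}\big) - k\,S(\sigma)$ and rescaling by $e^{S(\sigma)}$, one obtains $e^{S(E^k_{\mathrm{out}}E^{\prime k})/k} \ge (1-\lambda)e^{S(\sigma)} + \lambda$. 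Combining, dividing by $k$, and taking $\sup_k$ yields the upper bound; should only König's Gaussian conditional EPI be available, I would first reduce to Gaussian $\sigma$ and $\Psi$ via Gaussian extremality and the solved minimum output entropy conjecture~\cite{Jack-constrained, Giovadd, Giovadd-CMP}.

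The main obstacle will be the upper bound, and in particular recognising that the right auxiliary inequality is the bosonic conditional EPI conditioned on a purification of the environment: it is precisely the negativity $S(E^k|E^{\prime k}) = -k\,S(\sigma)$ that, after rescaling, upgrades the bare EPI estimate to the term $\ln\big(\lambda + (1-\lambda)e^{S(\sigma)}\big)$, and — because the conditional EPI is already multimode — it simultaneously makes the single-letter bound additive, so the regularisation can be dropped. Checking that the hypotheses of the conditional EPI of~\cite{Koenig2015, Jack2018} are met here — a quantum memory $E^{\prime k}$ and conditional independence of the two beam-splitter inputs — is the technical heart of the argument. A secondary subtlety is the Gaussian-extremality (or entropy-photon-number-inequality) step behind the first term of the lower bound, namely the assertion that a thermal environment is the worst case for $S\big(\Phi_{\lambda,\sigma}(\tau_N)\big)$ at fixed environment entropy.
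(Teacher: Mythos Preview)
The paper does not prove this lemma at all: it is stated as a citation of \cite[Sections~III and~IV]{Lim2019}, with the surrounding text merely explaining which external ingredients (Gaussian extremality, the solved minimum output entropy conjecture, and the conditional entropy power inequality) underlie it. There is therefore no ``paper's own proof'' to compare against.

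Your sketch is a faithful reconstruction of the argument in \cite{Lim2019}. The identification of the coherent information with $S(B_{\mathrm{out}})-S(E_{\mathrm{out}}E')$ via a purification of $\sigma$, the lower bound through subadditivity on $S(E_{\mathrm{out}}E')$ and the constrained minimum output entropy theorem applied to $S\big(\Phi_{\lambda,\sigma}(\tau_N)\big)=S\big(\Phi_{1-\lambda,\tau_N}(\sigma)\big)$, and the upper bound via the quantum conditional EPI conditioned on the purifying reference $E'^{k}$ (exploiting $S(E^k|E'^{k})=-kS(\sigma)$ and $S(B^k|E'^{k})\ge 0$) are exactly the mechanisms used there. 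One small point worth tightening: when you bound $S(B^k_{\mathrm{out}})$ mode by mode via~\eqref{tau maximises entropy}, the per-mode output photon number is $\lambda\langle n_j\rangle+(1-\lambda)\nu_\sigma$ only up to a cross term $\propto\sqrt{\lambda(1-\lambda)}\,\mathrm{Re}\big(\langle a_j\rangle_{\Psi}\langle a\rangle_\sigma^*\big)$; you should either note that one may assume $\sigma$ centred without loss of generality by displacement covariance~\eqref{covariance 2} (which leaves the capacity and both $\nu_\sigma$-dependent quantities in the bound appropriately adjusted), or absorb the issue into the concavity/optimisation step as \cite{Lim2019} do.
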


\begin{rem}
Note that the lower bound in~\eqref{bounds Lim} always vanishes when $\lambda\leq 1/2$. Indeed, using the subadditivity~\eqref{subadditivity g} and monotonicity of the bosonic entropy yields
\begin{align*}
&g\left( (1-\lambda) g^{-1}( S(\sigma) ) + \lambda N \right) - S(\sigma) - g\left( \lambda \nu_\sigma + (1-\lambda)N \right) \\
&\qquad \leq g\left( (1-\lambda) g^{-1}( S(\sigma) )\right) + g\left(\lambda N \right) - S(\sigma) - g\left( \lambda \nu_\sigma + (1-\lambda)N \right) \\
&\qquad \leq g\left( g^{-1}( S(\sigma) )\right) + g\left(\lambda N \right) - S(\sigma) - g\left( (1-\lambda) N \right) \\
&\qquad = g\left(\lambda N \right) - g\left( (1-\lambda) N \right) \\
&\qquad \leq 0\, ,
\end{align*}
where the last inequality holds provided that $\lambda\leq 1/2$, again using the monotonicity of $g$. It follows that the recent results by Lim et al. ~\cite{Lim2019} cannot be possibly used to detect a positive quantum capacity below the threshold value $\lambda=1/2$.
\end{rem}

\begin{rem}
The upper bound in~\eqref{bounds Lim} diverges for every fixed $N$ and $\lambda>0$ when $\nu_\sigma\to\infty$. However, we have already seen in Lemma~\ref{universal upper bound Q} that the maximum capacity $Q\left(\Phi_{\lambda, \sigma},\, N\right)$ stays finite in the same limit.
\end{rem}

\subsection{Quantum capacity and entanglement distribution}

Before we move on, let us briefly comment on the problem of entanglement distribution with general attenuators. It is well known that the channel $\mathcal{E}_{\lambda, \nu}$ is entanglement-breaking if and only if $0\leq \lambda\leq \frac{\nu}{\nu+1}$~\cite[Eq.~(38)]{Holevo-EB} (see also~\cite{extendibility} for the generalisation to $k$-extendibility). This in particular implies that the pure loss channel, corresponding to the case $\nu=0$, is not entanglement-breaking for any $\lambda>0$. In other words, it is possible to distribute entanglement using the pure loss channel for arbitrary small non-zero values of the transmissivity. This can be easily verified with a one-line computation by sending one half of a Bell state $\ket{\Psi}_{AB}=\frac{1}{\sqrt{2}} \left( \ket{01}+\ket{10}\right)_{AB}$ through the channel $\mathcal{E}_\lambda^{B}$. The result is
\bb
\left(I^A \otimes \mathcal{E}_\lambda^{B} \right)(\Psi_{AB}) = \frac{1-\lambda}{2} \ketbra{00}_{AB} + \frac{\lambda}{2} \ketbra{01}_{AB} + \frac12 \ketbra{10}_{AB} + \frac{\sqrt\lambda}{2} \left( \ketbraa{01}{10} + \ketbraa{10}{01}\right)_{AB}\, .
\label{simple-transmission}
\ee
This is effectively a two-qubit state, and the fact that it is indeed entangled for all values of $\lambda>0$ can be straightforwardly verified by an application of the partial transposition criterion~\cite{PeresPPT}. Since all entangled two-qubit states are distillable~\cite{2-qubit-distillation, Horodecki-review}, we conclude that the state in~\eqref{simple-transmission} is also distillable. 
A result by Masanes~\cite{Masanes2006} then guarantees that a large number of copies of~\eqref{simple-transmission} can be processed locally -- without communication -- so as to yield a \emph{non-local state}, that is, a state that violates a Bell inequality~\cite{Brunner-review}. Incidentally, the Bell inequality in question can be taken to be the Clauser--Horne--Shimony--Holt inequality~\cite{CHSH}. Therefore, the pure loss channel $\mathcal{E}_\lambda$ can be used (in conjunction with local processing) to construct states that violate a Bell inequality for all $\lambda>0$.

At first glance, this may appear to contradict the above Lemma~\ref{Caruso V lemma} \cite{Caruso2006}. The contradiction is resolved once one observes that sending quantum messages through a channel is a different -- and indeed harder -- task than that of using it to distribute a non-local state. While any channel with positive quantum capacity can be used to create maximally entangled (and hence non-local) states between sender and receiver, the above example shows that there exist channels with vanishing quantum capacity that can anyway distribute non-local states. What makes our main result (Theorem~\ref{Behemoth thm}) non-trivial is that while it is relatively easy to show that general attenuators with arbitrary low non-zero values of the transmissivity are fit to the latter task, this does not imply anything about the former (and harder) one.

\section{Convex combinations of Gaussian states}

Throughout this section, we look at general attenuators whose environment state is a convex combination of Gaussian states. Note that this family of states encompasses the so-called \emph{classical states}~\cite{Bach1986, Yadin2018}, which by definition can be written as convex combinations of coherent states, i.e.
\bb
\sigma = \int d\mu(\alpha)\, \ketbra{\alpha}\, ,
\ee
where $\mu$ is a probability measure on $\C^m$.

We start by showing how to apply the data processing bound to constrain the quantum capacity of general attenuators. 

\begin{lemma} \label{data processing attenuators lemma}
Let $0\leq \lambda,\mu\leq 1$, and let $\sigma,\omega$ be $m$-mode states. Then we have the composition rule
\bb
\Phi_{\lambda,\sigma}\circ \Phi_{\mu,\omega} = \Phi_{\lambda\mu,\,\tau}\, ,
\label{composition 1}
\ee
where
\bb
\tau \coloneqq \Phi_{\frac{\lambda(1-\mu)}{1-\lambda\mu},\, \sigma} (\omega)= \Phi_{\frac{1-\lambda}{1-\lambda \mu},\, \omega}(\sigma)\, .
\label{composition 2}
\ee
\end{lemma}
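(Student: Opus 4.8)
The plan is to prove the identity entirely at the level of characteristic functions, using \eqref{Phi chi} and the fact that a trace-class operator is uniquely determined by its characteristic function $\chi$. Fix an arbitrary input state $\rho$. Applying \eqref{Phi chi} first to $\Phi_{\mu,\omega}$ and then to $\Phi_{\lambda,\sigma}$, and using $\sqrt\lambda\sqrt\mu=\sqrt{\lambda\mu}$ together with $\sqrt\lambda\sqrt{1-\mu}=\sqrt{\lambda(1-\mu)}$, one computes
\[
\chi_{(\Phi_{\lambda,\sigma}\circ\Phi_{\mu,\omega})(\rho)}(\alpha)=\chi_\rho\big(\sqrt{\lambda\mu}\,\alpha\big)\,\chi_\omega\big(\sqrt{\lambda(1-\mu)}\,\alpha\big)\,\chi_\sigma\big(\sqrt{1-\lambda}\,\alpha\big)\, .
\]
Comparing with $\chi_{\Phi_{\lambda\mu,\tau}(\rho)}(\alpha)=\chi_\rho\big(\sqrt{\lambda\mu}\,\alpha\big)\,\chi_\tau\big(\sqrt{1-\lambda\mu}\,\alpha\big)$, it suffices to exhibit a state $\tau$ whose characteristic function satisfies $\chi_\tau\big(\sqrt{1-\lambda\mu}\,\alpha\big)=\chi_\omega\big(\sqrt{\lambda(1-\mu)}\,\alpha\big)\,\chi_\sigma\big(\sqrt{1-\lambda}\,\alpha\big)$ for all $\alpha$.

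Assuming $\lambda\mu<1$ (the case $\lambda\mu=1$ forces $\lambda=\mu=1$, both channels are then the identity and any $\tau$ works), I would substitute $\gamma\coloneqq\sqrt{1-\lambda\mu}\,\alpha$ and set $t\coloneqq\tfrac{\lambda(1-\mu)}{1-\lambda\mu}$, noting the elementary identity $1-t=\tfrac{1-\lambda}{1-\lambda\mu}$. The required condition becomes $\chi_\tau(\gamma)=\chi_\omega(\sqrt t\,\gamma)\,\chi_\sigma(\sqrt{1-t}\,\gamma)$, which by \eqref{Phi chi} is exactly $\chi_{\Phi_{t,\sigma}(\omega)}(\gamma)$. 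Hence $\tau=\Phi_{\frac{\lambda(1-\mu)}{1-\lambda\mu},\,\sigma}(\omega)$ does the job, and it is automatically a bona fide state, being the image of a state under a channel; uniqueness of the operator with a given characteristic function then yields \eqref{composition 1}. The second expression in \eqref{composition 2} comes for free: the product $\chi_\omega\big(\sqrt{\lambda(1-\mu)}\,\alpha\big)\,\chi_\sigma\big(\sqrt{1-\lambda}\,\alpha\big)$ is symmetric in $(\omega,\sigma)$, so factoring it the other way with $s\coloneqq\tfrac{1-\lambda}{1-\lambda\mu}$ and $1-s=\tfrac{\lambda(1-\mu)}{1-\lambda\mu}$ identifies the same $\tau$ with $\Phi_{s,\omega}(\sigma)$.

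There is no substantial obstacle here: the argument is a direct computation. The only points requiring a modicum of care are the bookkeeping of the nested square roots, the algebraic identity $1-\tfrac{\lambda(1-\mu)}{1-\lambda\mu}=\tfrac{1-\lambda}{1-\lambda\mu}$, the exclusion (or trivial treatment) of the degenerate case $\lambda\mu=1$, and the appeal to uniqueness of the trace-class operator associated with a characteristic function. An entirely equivalent but perhaps more conceptual route would work at the level of the Stinespring representation \eqref{Phi}: concatenating the two beam splitters $U_\mu^{BE_1}$ and $U_\lambda^{BE_2}$ and viewing $E_1E_2$ as a joint environment, the passive unitary they generate re-decomposes, using that beam splitters compose within the group of passive transformations, into a single beam splitter of transmissivity $\lambda\mu$ between $B$ and an effective environment mode whose reduced state is precisely $\tau$; I expect the characteristic-function computation to be the shorter one to write out, so that is the route I would take.
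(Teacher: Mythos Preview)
Your argument is correct and follows essentially the same route as the paper: both compute the characteristic function of the composed channel via two applications of~\eqref{Phi chi}, recognise the resulting product as $\chi_\rho(\sqrt{\lambda\mu}\,\alpha)\,\chi_\tau(\sqrt{1-\lambda\mu}\,\alpha)$, and invoke injectivity of $\chi$. Your write-up is in fact slightly more careful than the paper's, since you explicitly verify both expressions for $\tau$ in~\eqref{composition 2} and dispose of the degenerate case $\lambda\mu=1$.
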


\begin{proof}
The easiest way to verify~\eqref{composition 1} is by looking at the transformation rules for characteristic functions. For an arbitrary input state $\rho$, using~\eqref{Phi chi} multiple times we obtain that
\begin{align*}
\chi_{\left(\Phi_{\lambda,\sigma} \circ \Phi_{\mu,\omega}\right)(\rho)}(\alpha) &= \chi_{\Phi_{\mu,\omega}(\rho)}\left( \sqrt\lambda\, \alpha \right) \chi_{\sigma}\left(\sqrt{1-\lambda}\, \alpha\right) \\
&= \chi_{\rho}\left(\sqrt{\lambda\mu}\, \alpha\right) \chi_{\omega}\left(\sqrt{\lambda\left(1-\mu\right)}\,\alpha\right) \chi_{\sigma}\left(\sqrt{1-\lambda}\, \alpha\right) \\
&= \chi_{\rho}\left(\sqrt{\lambda\mu}\, \alpha\right) \chi_{\eta}\left(\sqrt{1-\lambda\mu}\,\alpha\right) \\
&= \chi_{\Phi_{\lambda\mu,\,\eta}(\rho)}(\alpha)\, .
\end{align*}
Since quantum states are in one-to-one correspondence with characteristic functions, this implies that $\Phi_{\lambda\mu,\,\eta}(\rho) = \left(\Phi_{\lambda,\sigma} \circ \Phi_{\mu,\omega}\right)(\rho)$. Given that $\rho$ was arbitrary, the proof is complete.
\end{proof}

A first immediate corollary is as follows.

\begin{cor} \label{anti-deg composition cor}
Let $0\leq \lambda,\mu\leq 1$, and let $\sigma,\omega$ be $m$-mode states. Define $\tau$ as in~\eqref{composition 2}. Then: (a)~if $\Phi_{\mu,\omega}$ is anti-degradable, then so is $\Phi_{\lambda\mu,\, \tau}$; (b)~it holds that
\bb
Q\left( \Phi_{\lambda\mu,\,\tau} \right) \leq \min\left\{ \Phi_{\lambda,\sigma},\, \Phi_{\mu,\omega} \right\} .
\label{composition capacity}
\ee 
\end{cor}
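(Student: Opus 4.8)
The plan is to derive both parts of Corollary~\ref{anti-deg composition cor} directly from the composition rule of Lemma~\ref{data processing attenuators lemma}, which tells us that $\Phi_{\lambda\mu,\,\tau} = \Phi_{\lambda,\sigma}\circ \Phi_{\mu,\omega}$ with $\tau$ as in~\eqref{composition 2}. Everything then reduces to two general principles about quantum channels: anti-degradability is preserved under pre-composition with an arbitrary channel, and the quantum capacity is monotone under composition (data processing).

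For part~(a), first I would recall the definition: a channel $\Psi$ is anti-degradable if there is a channel $\Xi$ such that $\Psi^{\mathrm{c}} = \Xi\circ \Psi$, where $\Psi^{\mathrm{c}}$ is a complementary channel. Suppose $\Phi_{\mu,\omega}$ is anti-degradable, witnessed by $\Phi_{\mu,\omega}^{\mathrm{c}} = \Xi\circ \Phi_{\mu,\omega}$. The key observation is that a Stinespring dilation of the composite $\Phi_{\lambda,\sigma}\circ\Phi_{\mu,\omega}$ can be built by concatenating dilations of the two factors, so the environment of the composite splits as (environment of $\Phi_{\mu,\omega}$)~$\otimes$~(environment of $\Phi_{\lambda,\sigma}$). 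Hence a complementary channel of $\Phi_{\lambda\mu,\tau}$ factors as $\big(\Phi_{\lambda,\sigma}^{\mathrm{c}}\circ(\,\cdot\,)\big)\otimes \Phi_{\mu,\omega}^{\mathrm{c}}(\,\cdot\,)$-type object; more cleanly, tracing out the receiver output $B$ of the composite leaves us with the two environments, and the environment of $\Phi_{\mu,\omega}$ alone can already be obtained by post-processing the output of $\Phi_{\mu,\omega}$, hence of $\Phi_{\lambda\mu,\tau}$. Putting this together, $(\Phi_{\lambda\mu,\tau})^{\mathrm{c}} = \Xi'\circ \Phi_{\lambda\mu,\tau}$ for a suitable $\Xi'$, so $\Phi_{\lambda\mu,\tau}$ is anti-degradable. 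Since anti-degradable channels have zero quantum capacity~\cite{Devetak-Shor}, in particular $Q(\Phi_{\lambda\mu,\tau})=0$ in that case.

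For part~(b), the inequality should read $Q\!\left( \Phi_{\lambda\mu,\,\tau} \right) \leq \min\left\{ Q(\Phi_{\lambda,\sigma}),\, Q(\Phi_{\mu,\omega}) \right\}$ (the typo with missing $Q$'s on the right-hand side of~\eqref{composition capacity} notwithstanding). This is the data-processing property of the quantum capacity: whenever $\Psi = \Psi_2\circ\Psi_1$, one has $Q(\Psi)\leq Q(\Psi_1)$ and $Q(\Psi)\leq Q(\Psi_2)$, because any coding scheme achieving a rate $R$ over $\Psi$ is in particular a scheme over $\Psi_1$ with $\Psi_2$ folded into the decoder, and symmetrically with $\Psi_1$ folded into the encoder. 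Applying this with $\Psi_1 = \Phi_{\mu,\omega}$, $\Psi_2 = \Phi_{\lambda,\sigma}$, and $\Psi = \Phi_{\lambda\mu,\tau}$ via Lemma~\ref{data processing attenuators lemma} gives both bounds simultaneously, hence the minimum. (One should note the energy constraint plays no role here since we are bounding the unconstrained capacity; if needed, the same argument works constraint-by-constraint using~\eqref{covariance 1}.)

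The main obstacle, such as it is, is part~(a): one must be a little careful that the notion of ``complementary channel'' behaves well under composition in the CV setting, and in particular that the ``weak complementary'' constructed from a specific dilation (as in the proof of Theorem~\ref{symmetric sigma thm}) is the relevant one, or else invoke the fact that anti-degradability does not depend on the choice of complement. I expect this to be a couple of lines citing the standard dilation-concatenation argument rather than anything substantive. Part~(b) is entirely routine given the monotonicity of $Q$ under composition, which is part of the folklore.
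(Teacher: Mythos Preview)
Your overall strategy for both parts matches the paper's: use Lemma~\ref{data processing attenuators lemma} to write $\Phi_{\lambda\mu,\tau}=\Phi_{\lambda,\sigma}\circ\Phi_{\mu,\omega}$, then invoke closure of anti-degradable channels under post-processing for~(a) and absorb pre-/post-processing into encoder/decoder for~(b). Part~(b) is fine and essentially identical to the paper's one-line justification; you also correctly spot the missing $Q$'s in~\eqref{composition capacity}.

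Part~(a), however, has a genuine error: you have the definition of anti-degradable reversed. A channel $\Psi$ is anti-degradable when $\Psi=\Xi\circ\Psi^{\mathrm{c}}$ for some channel $\Xi$ (the \emph{output} is simulable from the \emph{environment}), not $\Psi^{\mathrm{c}}=\Xi\circ\Psi$ as you write --- that is degradability. With your reversed definition the sketch breaks: you claim the environment of $\Phi_{\mu,\omega}$ can be obtained from the output of $\Phi_{\mu,\omega}$ ``hence of $\Phi_{\lambda\mu,\tau}$'', but the latter output is $\Phi_{\lambda,\sigma}$ applied to the former and in general carries strictly less information, so the recovery fails. Indeed, degradability is \emph{not} preserved under post-processing (the identity is degradable, yet post-processing it with a depolarising channel is not). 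The correct argument, which the paper simply cites as the standard fact (cf.\ \cite[Lemma~17]{Cubitt2008}), runs the other way: from $(\Phi_{\lambda\mu,\tau})^{\mathrm{c}}$ one traces out the $\Phi_{\lambda,\sigma}$-environment to recover $\Phi_{\mu,\omega}^{\mathrm{c}}$, applies the anti-degrading map $\Xi$ to get $\Phi_{\mu,\omega}$'s output, and finally applies $\Phi_{\lambda,\sigma}$ to obtain $\Phi_{\lambda\mu,\tau}$'s output. So the composite is anti-degradable. Once the direction is fixed, your proof is complete and in line with the paper's.
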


\begin{proof}
Claim~(a) is a consequence of the fact that set of anti-degradable channels is invariant by post-processing. Claim~(b), instead, follows from the observation that in the definition of quantum capacity any pre- or post-processing can be included into the encoding or decoding transformations.
\end{proof}

We now show that the phenomenon illustrated in Theorem~\ref{Behemoth thm} does not occur for general attenuators whose environment state is a convex combination of Gaussian states.

\begin{cor} \label{anti-deg convex hull G cor}
Let $\sigma$ be a state in the convex hull of all Gaussian states. Then $\Phi_{\lambda, \sigma}$ is anti-degradable for all $0\leq \lambda\leq 1/2$, and in particular
\bb
Q\left( \Phi_{\lambda, \sigma} \right) \equiv 0\qquad \forall\ 0\leq \lambda\leq \frac12\, .
\ee
\end{cor}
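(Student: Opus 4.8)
The plan is to reduce the statement to the already-established anti-degradability of thermal attenuators at low transmissivity (Lemma~\ref{Caruso V lemma}) by exhibiting $\Phi_{\lambda,\sigma}$ as a post-processing of a low-transmissivity thermal attenuator, for every $\sigma$ in the convex hull of Gaussian states and every $0\leq\lambda\leq 1/2$. Since anti-degradability is preserved under post-processing and the set of anti-degradable channels is convex (as already used in the proof of Theorem~\ref{symmetric sigma thm}, citing \cite{Cubitt2008}), it suffices to treat the case where $\sigma=\sigma_\G$ is a single Gaussian state; the general convex combination then follows immediately because $\Phi_{\lambda,\,\int d\mu(\alpha)\sigma_\G(\alpha)} = \int d\mu(\alpha)\, \Phi_{\lambda,\,\sigma_\G(\alpha)}$ by linearity of $\rho\mapsto\rho\otimes\sigma$ and of the partial trace, together with the convexity of the anti-degradable set.

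Next I would handle the single Gaussian case by a decomposition argument. The idea is that any Gaussian state $\sigma_\G$ can itself be produced by passing a thermal state through a Gaussian channel; more to the point, one can write the beam-splitter mixing with $\sigma_\G$ as a concatenation of a beam-splitter mixing with a thermal state followed by some further Gaussian processing. Concretely, I would use Lemma~\ref{data processing attenuators lemma}: if I can find $0\leq\mu\leq 1$ and an environment state $\omega$ such that $\Phi_{\lambda,\sigma_\G} = \Phi_{\lambda',\omega}\circ\Phi_{\mu,\tau_\nu}$ with $\lambda'\mu=\lambda$ and crucially $\mu\leq 1/2$, then Corollary~\ref{anti-deg composition cor}(a) combined with Lemma~\ref{Caruso V lemma} gives that $\Phi_{\lambda,\sigma_\G}$ is anti-degradable and hence $Q(\Phi_{\lambda,\sigma_\G})=0$. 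The composition rule~\eqref{composition 1}--\eqref{composition 2} shows that concatenating a thermal attenuator $\Phi_{\mu,\tau_\nu}$ with any general attenuator $\Phi_{\lambda',\omega}$ yields $\Phi_{\lambda'\mu,\,\tau'}$ with $\tau' = \Phi_{(1-\lambda')/(1-\lambda'\mu),\,\omega}(\tau_\nu)$, which is again a convex combination of Gaussian states (indeed Gaussian, if $\omega$ is). So the task is really to argue the converse: that every $\Phi_{\lambda,\sigma_\G}$ with $\lambda\leq 1/2$ arises this way with the first factor having transmissivity $\mu\leq 1/2$. Setting $\mu=\lambda$, $\lambda'=1$ reduces this to showing $\Phi_{\lambda,\sigma_\G}=\Phi_{1,\omega}\circ\Phi_{\lambda,\tau_\nu}$ for suitable $\omega$ — but $\Phi_{1,\omega}$ is trivial, so this cannot be right in general; instead one wants $\mu=\lambda$, $\lambda'=1$ replaced by a genuine factorization, or else $\mu$ slightly above $\lambda$ with $\lambda'<1$ chosen so that $\tau'$ matches $\sigma_\G$.

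The cleanest route, and the one I would pursue, is the following: given a Gaussian $\sigma_\G$ with covariance matrix $V$, its characteristic function $\chi_{\sigma_\G}$ is a Gaussian, and a thermal state $\tau_\nu$ with large enough $\nu$ is "noisier" in the precise sense that $\chi_{\sigma_\G}(\sqrt{1-\lambda}\,\alpha) = \chi_{\sigma_\G}(\sqrt{1-\lambda}\,\alpha)\cdot\text{(something)}$ — more usefully, for $\lambda\leq 1/2$ one has $1-\lambda\geq\lambda$, and one can try to write $\sqrt{1-\lambda}\,\alpha$-scaled Gaussian noise as the product of a $\sqrt{1-\lambda'}$-scaled thermal contribution and a residual Gaussian contribution absorbable into a further Gaussian unitary or channel acting on the output. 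This is exactly the manipulation behind~\eqref{covariance 2} and the composition lemma. The main obstacle, and where I expect the real work to be, is verifying that for $\lambda\leq 1/2$ the residual Gaussian channel needed to "complete" $\tau_\nu$ (with $\nu$ chosen appropriately in terms of $V$) to $\sigma_\G$ can indeed be realized as a legitimate quantum channel post-composed after a thermal attenuator of transmissivity $\le 1/2$ — i.e., that the positivity/complete-positivity constraints are met precisely in the regime $\lambda\leq 1/2$ and can fail for $\lambda>1/2$. Once that decomposition is in hand, Lemma~\ref{Caruso V lemma} and Corollary~\ref{anti-deg composition cor}(a) close the argument, and convexity of the anti-degradable set upgrades it from single Gaussian states to the whole convex hull.
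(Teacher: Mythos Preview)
Your reduction to a single Gaussian environment via convexity of the anti-degradable set is correct and matches the paper. The overall strategy --- factor $\Phi_{\lambda,\sigma_\G}$ through a channel already known to be anti-degradable and invoke preservation under post-processing --- is also the paper's strategy. The gap is in \emph{which} anti-degradable channel you try to factor through, and you do not close it.

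You attempt to write $\Phi_{\lambda,\sigma_\G}=\Phi_{\lambda',\omega}\circ\Phi_{\mu,\tau_\nu}$ with $\mu\leq 1/2$ and appeal to Lemma~\ref{Caruso V lemma}. By the composition rule~\eqref{composition 1}--\eqref{composition 2} this forces $\sigma_\G=\Phi_{\eta,\tau_\nu}(\omega)$ with $\eta=\frac{1-\lambda'}{1-\lambda}$; maximising over admissible $\mu\in[\lambda,1/2]$ gives at best $\eta=\frac{1-2\lambda}{1-\lambda}$, which tends to $0$ as $\lambda\to 1/2$. At the level of covariance matrices one then needs $V_{\sigma_\G}-(1-\eta)(2\nu+1)I=\eta V_\omega$ for some legitimate state $\omega$, and for a squeezed vacuum with smallest covariance eigenvalue $e^{-2r}<1$ this is impossible once $(1-\eta)(2\nu+1)\geq 1>e^{-2r}$ --- in particular for every $\nu\geq 0$ when $\lambda$ is close to $1/2$. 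Allowing an arbitrary Gaussian post-processing $\mathcal{N}$ instead of an attenuator does not rescue this: matching the signal scaling forces $X_{\mathcal N}=\sqrt{1/(2\mu)}\,I$, and the complete-positivity constraint on $Y_{\mathcal N}$ again fails for sufficiently squeezed $\sigma_\G$. So the ``real work'' you flag is not merely unfinished; the route through thermal attenuators is genuinely obstructed.

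The paper sidesteps this entirely. After centring $\sigma_\G$ via~\eqref{covariance 2}, the key observation is the self-stability $\Phi_{\eta,\sigma_\G}(\sigma_\G)=\sigma_\G$ for all $\eta$ (immediate from the Gaussian form of $\chi_{\sigma_\G}$). Plugging $\omega=\sigma=\sigma_\G$ into~\eqref{composition 1}--\eqref{composition 2} then gives the exact factorisation $\Phi_{\lambda,\sigma_\G}=\Phi_{2\lambda,\sigma_\G}\circ\Phi_{1/2,\sigma_\G}$ for every $\lambda\leq 1/2$, with no positivity constraints to check. Anti-degradability of the inner factor $\Phi_{1/2,\sigma_\G}$ comes from Theorem~\ref{symmetric sigma thm} (centred Gaussians are phase-space-inversion symmetric), not from Lemma~\ref{Caruso V lemma}. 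This is the missing idea: factor through $\Phi_{1/2,\sigma_\G}$ itself rather than through a thermal attenuator.
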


\begin{proof}
For $\sigma$ satisfying the hypothesis, we have that $\Phi_{\lambda, \sigma}$ is a convex combination of channels of the form $\Phi_{\lambda,\, \sigma_\G}$, where $\sigma_\G$ is Gaussian. Since the set of anti-degradable channels is convex~\cite[Appendix~A.2]{Cubitt2008}, it suffices to prove that $\Phi_{\lambda,\, \sigma_\G}$ is anti-degradable for all $0\leq \lambda\leq 1/2$.

Moreover, we can assume without loss of generality that $\sigma_\G$ is centred, i.e.\ that $\Tr[\sigma_\G\, a_j]\equiv 0$ for all $j=1,\ldots,m$, where $a_j$ are the annihilation operators. In fact, $\sigma_\G$ can always be displaced by an arbitrary amount by means of a unitary post-processing as in~\eqref{covariance 2}. Note that unitary post-processing does not affect anti-degradability, and that $D(z) a_j D(z)^\dag = a_j - z_j$. Thus, we can make sure that $\pazocal{D}_z\left(\sigma_\G\right)$ is centred by choosing $z$ appropriately.

In light of the above reasoning, from now on we shall assume that $\sigma_\G$ is centred. The characteristic function of $\sigma_\G$ then is a centred Gaussian, entailing that
\bbb
\chi_{\sigma_\G}\left( \sqrt{\eta}\, \alpha\right) \chi_{\sigma_\G}\left( \sqrt{1-\eta}\, \alpha\right) \equiv \chi_{\sigma_\G} \left( \alpha \right) \qquad \forall\ \alpha\in \C^m,\quad \forall\ 0\leq \eta\leq 1\, .
\eee
Using~\eqref{Phi chi}, this translates to
\bb
\Phi_{\eta,\,\sigma_\G}(\sigma_\G) \equiv \sigma_\G\qquad \forall\ 0\leq \eta\leq 1\, .
\label{stability G}
\ee
Leveraging~\eqref{composition 1}--\eqref{composition 2},  we see that
\bbb
\Phi_{\lambda,\, \sigma_\G} = \Phi_{2\lambda,\, \sigma_\G} \circ \Phi_{1/2,\, \sigma_\G}
\eee
for all $0\leq \lambda\leq 1/2$. Note that $\Phi_{1/2,\, \sigma_\G}$ is anti-degradable by Theorem~\ref{symmetric sigma thm}. Since anti-degradable channels remain such upon post-processing~\cite[Lemma~17]{Cubitt2008}, we conclude that also $\Phi_{\lambda,\, \sigma_\G}$ is anti-degradable, completing the proof.
\end{proof}

Another consequence of Lemma~\ref{data processing attenuators lemma} is that the quantum capacity of a general attenuator is monotonically increasing as a function of the transmissivity for a fixed Gaussian environment state. By comparison, remember that in the main text we have instead shown that monotonicity fails to hold when the environment state is a Fock state.

\begin{cor}
Let $\sigma=\sigma_\G$ be an arbitrary $m$-mode Gaussian state. Then the function
\bb
\lambda\longmapsto Q \left(\Phi_{\lambda,\,\sigma_\G}\right)
\ee
is monotonically increasing for all $0\leq \lambda\leq 1$, and strictly zero for $0\leq \lambda\leq 1/2$.
\end{cor}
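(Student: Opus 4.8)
The plan is to derive both assertions as immediate corollaries of the composition rule in Lemma~\ref{data processing attenuators lemma} together with the stability property~\eqref{stability G} of centred Gaussian states, exactly mirroring the argument already used in Corollary~\ref{anti-deg convex hull G cor}. First I would reduce to the centred case: by the covariance formula~\eqref{covariance 2}, displacing $\sigma_\G$ amounts to post-composing $\Phi_{\lambda,\sigma_\G}$ with a unitary displacement channel, which changes neither the quantum capacity nor anti-degradability, so without loss of generality $\sigma_\G$ is centred and~\eqref{stability G} holds, i.e.\ $\Phi_{\eta,\sigma_\G}(\sigma_\G)=\sigma_\G$ for all $\eta\in[0,1]$.

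Next, for the monotonicity claim, I would fix $0\le \lambda_1\le \lambda_2\le 1$ and apply~\eqref{composition 1}--\eqref{composition 2} with the roles chosen so that $\lambda_1 = \lambda_2\mu$ for $\mu = \lambda_1/\lambda_2 \in [0,1]$. The compensating environment state is $\tau = \Phi_{\frac{\lambda_2(1-\mu)}{1-\lambda_2\mu},\,\sigma_\G}(\sigma_\G)$, which by~\eqref{stability G} equals $\sigma_\G$ itself. Hence $\Phi_{\lambda_1,\sigma_\G} = \Phi_{\lambda_2,\sigma_\G}\circ \Phi_{\mu,\sigma_\G}$, and by Corollary~\ref{anti-deg composition cor}(b) (post-processing cannot increase capacity, as any post-processing can be absorbed into the decoder) we get $Q(\Phi_{\lambda_1,\sigma_\G}) \le Q(\Phi_{\lambda_2,\sigma_\G})$, which is the desired monotonicity. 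For the vanishing claim on $[0,1/2]$, I would invoke Corollary~\ref{anti-deg convex hull G cor} directly — a Gaussian state lies in the convex hull of Gaussian states — or equivalently repeat its short argument: write $\Phi_{\lambda,\sigma_\G} = \Phi_{2\lambda,\sigma_\G}\circ \Phi_{1/2,\sigma_\G}$ using the same composition-plus-stability trick, note $\Phi_{1/2,\sigma_\G}$ is anti-degradable by Theorem~\ref{symmetric sigma thm}, and conclude via invariance of anti-degradability under post-processing that $\Phi_{\lambda,\sigma_\G}$ is anti-degradable, hence $Q(\Phi_{\lambda,\sigma_\G})=0$ for $0\le\lambda\le 1/2$.

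There is essentially no obstacle here: every ingredient — the composition rule, the Gaussian stability identity, the convexity and post-processing stability of anti-degradable channels, and the data-processing inequality for quantum capacity — has been established earlier in the excerpt, so the proof is a two-line concatenation of these facts. The only minor point requiring a word of care is the boundary case $\lambda_2 = 0$ (where $\mu = \lambda_1/\lambda_2$ is undefined), but then $\lambda_1 = 0$ as well and the statement is trivial; and one should note that monotonicity on all of $[0,1]$ together with the vanishing on $[0,1/2]$ is consistent, since the function is constantly zero there and non-decreasing thereafter. Thus the corollary follows without any new computation.
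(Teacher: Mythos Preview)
Your proposal is correct and follows essentially the same route as the paper: reduce to a centred Gaussian via~\eqref{covariance 2}, use the stability identity~\eqref{stability G} inside the composition rule~\eqref{composition 1}--\eqref{composition 2} to write $\Phi_{\lambda_1,\sigma_\G}=\Phi_{\lambda_2,\sigma_\G}\circ\Phi_{\mu,\sigma_\G}$ with $\mu=\lambda_1/\lambda_2$, and then invoke Corollary~\ref{anti-deg composition cor}(b). One small verbal slip: in this factorisation $\Phi_{\mu,\sigma_\G}$ is a \emph{pre}-processing of $\Phi_{\lambda_2,\sigma_\G}$ (absorbed into the encoder), not a post-processing absorbed into the decoder; the cited inequality~\eqref{composition capacity} covers both directions, so the conclusion stands.
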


\begin{proof}
The proof is along the same lines as that of Corollary~\ref{anti-deg convex hull G cor}. We can assume without loss of generality that $\sigma_\G$ is centred, which in turn implies that~\eqref{stability G} holds. Picking $0\leq \lambda'\leq \lambda\leq 1$ and setting $\mu\coloneqq \frac{\lambda'}{\lambda}$ and $\eta\coloneqq \frac{1-\lambda}{1-\lambda\mu} = \frac{1-\lambda}{1-\lambda'}$ in~\eqref{composition 2}--\eqref{composition capacity}, we deduce that
\bbb
\Phi_{\lambda',\, \sigma_\G} = \Phi_{\lambda\mu,\, \sigma_\G} = \Phi_{\lambda\mu,\, \Phi_{\eta,\,\sigma_\G}(\sigma_\G)} = \Phi_{\lambda,\sigma_\G} \circ \Phi_{\mu,\, \sigma_\G} \, .
\eee
Then, applying~\eqref{composition capacity} we conclude that $Q\left( \Phi_{\lambda',\, \sigma_\G} \right) \leq Q\left( \Phi_{\lambda,\sigma_\G}\right)$, completing the proof.
\end{proof}

\section{Positive capacity at $\boldsymbol{\lambda=1/2}$}

Given the fact that general attenuator channels of the form $\Phi_{1/2,\, \sigma}$ always output states with positive Wigner functions~\cite[Lemma~16]{QCLT}, one may be tempted to conjecture that their quantum capacities vanish. Interestingly, this is not the case, as the next example shows.

\begin{ex} \label{positive capacity at 1/2 ex}
For $0\leq \eta\leq 1$, set $\xi(\eta)=\ketbra{\xi(\eta)}$, with 
\bb
\ket{\xi(\eta)} \coloneqq \sqrt{\eta}\ket{0} - \sqrt{1-\eta} \ket{1}\, .
\label{xi eta}
\ee
We will see (numerically) that for all $\eta\in (0,1)$ the channel $\Phi_{1/2,\,\xi(\eta)}$ has nonzero quantum capacity. A lower bound on $Q\left( \Phi_{1/2,\,\xi(\eta)} \right)$ is plotted in Figure~\ref{Icoh figure}.

\smallskip
To estimate the quantum capacity of $\Phi_{1/2,\,\xi(\eta)}$ from below, we apply the achievability part of the LSD theorem. This is done by finding a suitable ansatz for the state $\ket{\Psi}_{AB}$ to be plugged into~\eqref{energy-constrained Q 2}. Let us define the family of two-mode states
\bb
\ket{\Psi(\eta)}_{AB} \coloneqq \sqrt{\eta (1-\eta)}\ket{0}_A\ket{0}_B + (1-\eta)\ket{0}_A \ket{1}_B+\sqrt{\eta}\ket{1}_A \ket{0}_B\, .
\ee
Upon re-ordering the terms, the joint state reads
\bb
\begin{aligned}
\ket{\Psi(\eta)}_{AB} \ket{\xi(\eta)}_E &= \eta\left( \ket{1}_A+\sqrt{1-\eta} \ket{0}_A \right) \ket{0}_B \ket{0}_E - \sqrt{\eta(1-\eta)}\left( \sqrt{1-\eta} \ket{0}_A + \ket{1}_A\right) \ket{0}_B \ket{1}_E \\
&\qquad +\sqrt{\eta} (1-\eta)\ket{0}_A \ket{1}_B \ket{0}_E - (1-\eta)^{3/2} \ket{0}_A \ket{1}_B \ket{1}_E\, .
\end{aligned}
\ee
Using the explicit representations~\eqref{U 1}--\eqref{U 2} of the action of the beam splitter unitary on the low photon number subspaces, it is not difficult to see that the tripartite output state, which we denote as
\bb
\ket{\zeta(\lambda, \eta)}_{A BE} \coloneqq U^{BE}_{\lambda}\, \ket{\Psi(\eta)}_{AB} \ket{\xi(\eta)}_E\, ,
\label{zeta}
\ee
reads
\begin{align*}
\ket{\zeta(\lambda, \eta)}_{\!A\!B\!E} =&\ \eta\left( \ket{1}_A +\sqrt{1-\eta} \ket{0}_A \right) \ket{0}_B \ket{0}_E - \sqrt{\eta(1-\eta)}\left( \sqrt{1-\eta} \ket{0}_A + \ket{1}_A \right) \! \left( \sqrt\lambda \ket{0}_B \ket{1}_E + \sqrt{1-\lambda} \ket{1}_B \ket{0}_E\right) \\
&\quad +\sqrt{\eta} (1-\eta)\ket{0}_A \left( -\sqrt{1-\lambda} \ket{0}_B \ket{1}_E + \sqrt\lambda \ket{1}_B \ket{0}_E \right) \\
&\quad - (1-\eta)^{3/2} \ket{0}_A \left( -\sqrt{2\lambda(1-\lambda)} \ket{0}_B \ket{2}_E + (2\lambda-1)\ket{1}_B \ket{1}_E + \sqrt{2\lambda(1-\lambda)} \ket{2}_B\ket{0}_E\right) \\
=&\ \bigg( \eta\sqrt{1-\eta} \ket{0}_A\ket{0}_B + \sqrt{\eta} (1-\eta) \left( \sqrt\lambda - \sqrt{1-\lambda} \right) \ket{0}_A \ket{1}_B \\
&\quad + \eta \ket{1}_A \ket{0}_B - \sqrt{\eta(1-\eta)(1-\lambda)} \ket{1}_A \ket{1}_B - (1-\eta)^{3/2}\sqrt{1-\lambda} \ket{0}_A \ket{2}_B \bigg) \ket{0}_E \\
&\quad - \left( \sqrt{\eta} (1-\eta) \left(\sqrt\lambda + \sqrt{1-\lambda}\right) \ket{0}_A \ket{0}_B + (1-\eta)^{3/2} (2\lambda -1) \ket{0}_A \ket{1}_B + \sqrt{\eta (1-\eta)\lambda} \ket{1}_A \ket{0}_B \right) \ket{1}_E \\
&\quad + (1-\eta)^{3/2} \sqrt{2\lambda (1-\lambda)} \ket{0}_A \ket{0}_B \ket{2}_E\, . 
\end{align*}
From now on, we focus only on the case $\lambda=1/2$. Upon tedious yet straightforward calculations, we find that with respect to the lexicographically ordered product basis $\{\ket{0}_A,\ket{1}_A\}\otimes \{\ket{0}_B,\ket{1}_B,\ket{2}_B\}$ we have that
\bb
\zeta_{AB}(1/2,\, \eta) = \left(I^A\otimes \Phi_{1/2,\,\xi(\eta)}^B \right)\left(\Psi_{AB}(\eta)\right) = \begin{pmatrix}
 \frac{1}{2} \left(1+\eta-3 \eta^2+\eta^3\right) & 0 & -\frac{(1-\eta)^2 \eta}{\sqrt{2}} & \eta\sqrt{1-\eta} & -\frac{(1-\eta) \eta^{3/2}}{\sqrt{2}} & 0 \\
 0 & 0 & 0 & 0 & 0 & 0 \\
 -\frac{(1-\eta)^2 \eta}{\sqrt{2}} & 0 & \frac{1}{2} (1-\eta)^3 & -\frac{(1-\eta)^{3/2} \eta}{\sqrt{2}} & \frac{1}{2} (1-\eta)^2 \sqrt{\eta} & 0 \\
 \eta\sqrt{1-\eta} & 0 & -\frac{(1-\eta)^{3/2} \eta}{\sqrt{2}} & \frac{1}{2} \eta (1+\eta) & -\frac{\eta^{3/2} \sqrt{1-\eta}}{\sqrt{2}} & 0 \\
 -\frac{(1-\eta) \eta^{3/2}}{\sqrt{2}} & 0 & \frac{1}{2} (1-\eta)^2 \sqrt{\eta} & -\frac{\eta^{3/2} \sqrt{1-\eta}}{\sqrt{2}} & \frac{1}{2} (1-\eta) \eta & 0 \\
 0 & 0 & 0 & 0 & 0 & 0
\end{pmatrix}
\label{rho AB kappa}
\ee
and that
\bb
\zeta_{B}(1/2,\, \eta) = \Phi_{1/2,\,\xi(\eta)}^B \left(\Psi_{B}(\eta)\right) = \begin{pmatrix}
 \frac{1}{2} \left(1+2 \eta-2 \eta^2+\eta^3\right) & -\frac{\eta^{3/2} \sqrt{1-\eta}}{\sqrt{2}} & -\frac{(1-\eta)^2 \eta}{\sqrt{2}} \\
 -\frac{\eta^{3/2} \sqrt{1-\eta}}{\sqrt{2}} & \frac{1}{2} (1-\eta) \eta & 0 \\
 -\frac{(1-\eta)^2 \eta}{\sqrt{2}} & 0 & \frac{1}{2} (1-\eta)^3
\end{pmatrix} .
\label{rho B kappa}
\ee

We are now ready to apply the LSD theorem to our case. Note that the mean photon number of $\phi_B(\eta)$ is precisely $(1-\eta)^2$. Then, employing~\eqref{energy-constrained Q 1}--\eqref{energy-constrained Q 2} we find that
\bb
Q\left( \Phi_{1/2,\,\xi(\eta)} \right) \geq Q\left( \Phi_{1/2,\,\xi(\eta)},\, (1-\eta)^2\right) \geq I_{\mathrm{coh}}(A\rangle B)_{\zeta_{AB}(1/2,\, \eta)} .
\label{positive capacity 1/2}
\ee
The coherent information $I_{\mathrm{coh}}(A\rangle B)_{\zeta_{AB}(1/2,\, \eta)}$ is plotted in Figure~\ref{Icoh figure}. The numerics shows clearly that this is strictly positive for all $\eta\in (0,1)$. We do not provide an analytical proof of this claim, because it is not necessary for what follows. In our proof of Theorem~\ref{Behemoth thm} we will only use the easily verified fact that $I_{\mathrm{coh}}(A\rangle B)_{\zeta_{AB}(1/2,\, \eta)}>0$ for \emph{some} values of $\eta$.

\begin{figure}[ht]
\centering
\includegraphics{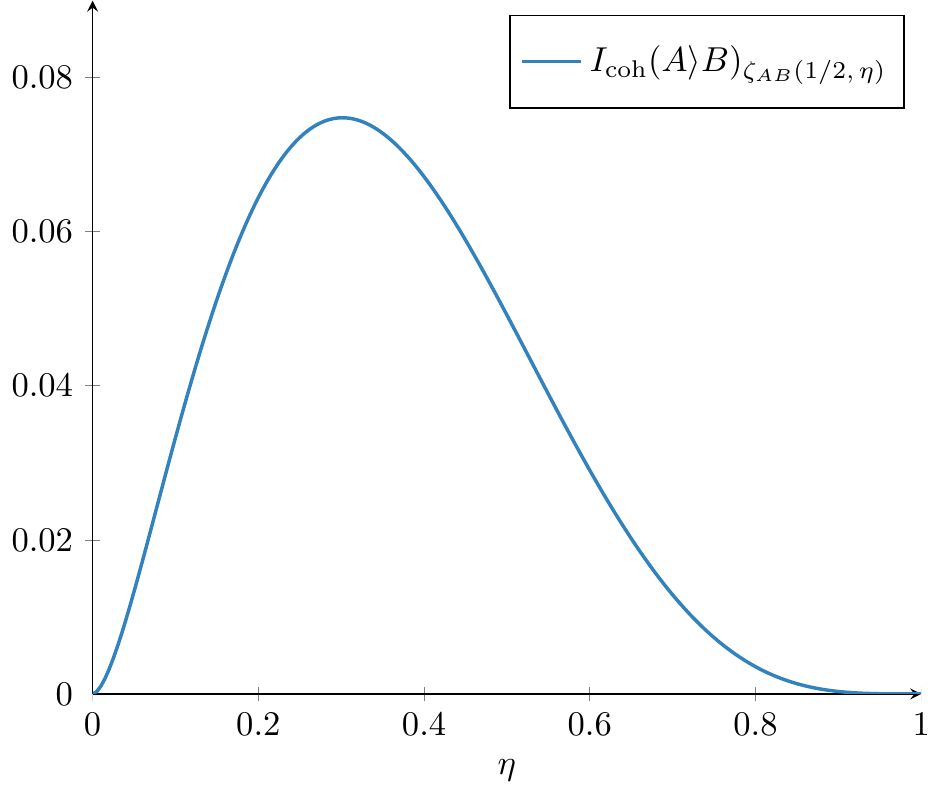}
\caption{The coherent information lower bound~\eqref{Icoh} on the quantum capacity of the channel $\Phi_{1/2,\, \xi(\eta)}$ defined by the environment state~\eqref{xi eta}. The maximum can be numerically evaluated, yielding $\max_{0\leq \eta\leq 1} I_{\mathrm{coh}}(A\rangle B)_{\zeta_{AB}(1/2,\, \eta)} \approx 0.0748$.}
\label{Icoh fig}
\end{figure}
\end{ex}


Incidentally, general attenuators can have a substantially larger transmission capacity if one allows for a higher input power to be deployed.

\begin{ex} \label{Martins ex}
For $n\geq 3$ to be fixed, consider the environment state $\xi'(n) = \ketbra{\xi'(n)}$, with
\bb
\ket{\xi'(n)}\coloneqq \frac{\ket{n-1}+\ket{n}}{\sqrt2}\, .
\label{xi'}
\ee
We look at the transmission scheme identified by an initial state
\bb
\ket{\Psi'(n)}_{AB} \coloneqq \frac12 \left( \ket{0}_A\left( \ket{n-1}_B+\ket{n}_B\right) + \ket{1}_A \left( \ket{n-3}_B + \ket{n-2}_B\right) \right) .
\ee
Note that the mean photon number of $\Psi'_B(n)$ is $n-\frac32$. Applying the LSD theorem in the form of~\eqref{energy-constrained Q 1}--\eqref{energy-constrained Q 2} then yields
\bb
Q\left( \Phi_{1/2,\,\xi'(n)} \right) \geq Q\left( \Phi_{1/2,\,\xi'(n)},\, n-\frac32 \right) \geq I_{\mathrm{coh}}\left( A\rangle B\right)_{\zeta'_{AB}\left(1/2,\,n\right)} \, ,
\label{lower bound capacity Martin}
\ee
with $\zeta'_{AB}(1/2,\,n)\coloneqq \left( I^A\otimes \Phi_{1/2,\, \xi'(n)}\right) \left( \Psi'_{AB}(n)\right)$. The values of the right-hand side of~\eqref{lower bound capacity Martin} for $n=3,\ldots, 35$ are reported in Figure~\ref{Martins plot}. For $n=54$ the lower bound evaluates to around $0.3530$.

\begin{figure}[ht]
\centering
\includegraphics{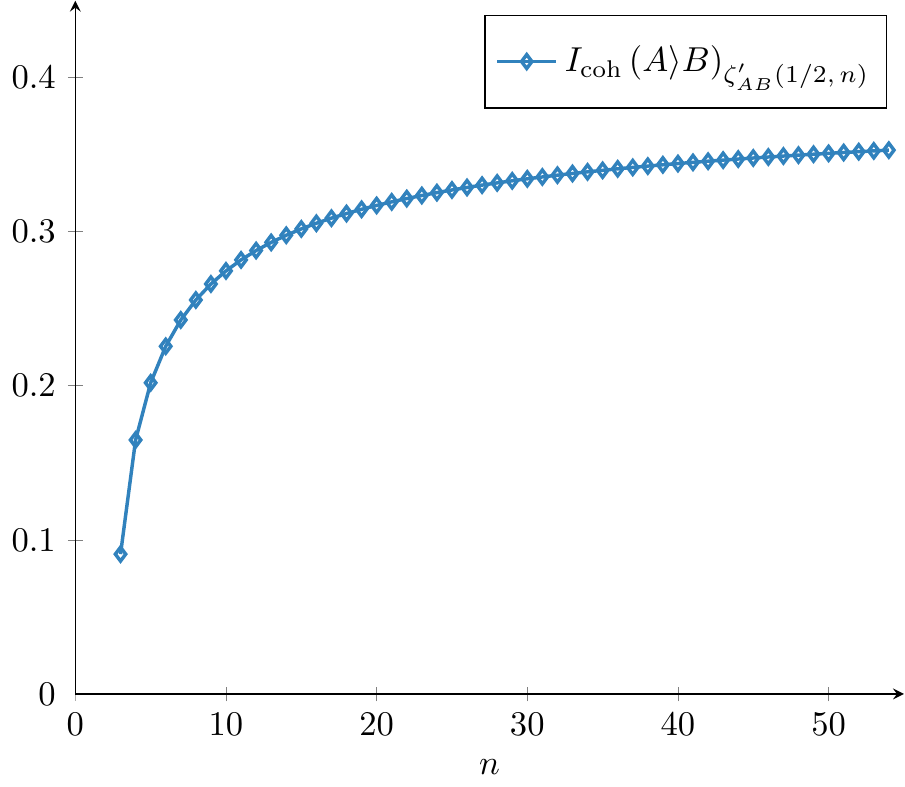}
\caption{The coherent information lower bound~\eqref{lower bound capacity Martin} on the quantum capacity of the channel $\Phi_{1/2,\, \xi'(n)}$ defined by the environment state in~\eqref{xi'}. For $n=54$ we obtain that $I_{\mathrm{coh}}\left( A\rangle B\right)_{\zeta'_{AB}\left(1/2,\, 54\right)} \approx 0.3530$.}
\label{Martins plot}
\end{figure}
\end{ex}

\section{Positive capacity at arbitrary transmissivity}

\begin{manualthm}{\ref{Behemoth thm}}
For all $0<\lambda \leq 1$ there exists a single-mode pure state $\sigma(\lambda)$ such that 
\bb
Q\big(\Phi_{\lambda,\,\sigma(\lambda)}\big) \geq Q\big(\Phi_{\lambda,\,\sigma(\lambda)},\, 1/2 \big) \geq c
\ee
for some universal constant $c>0$. 
For an arbitrary $\epsilon>0$, one can take $\sigma(\lambda)$ to be: (a)~the vacuum for $1/2+\epsilon\leq \lambda\leq 1$; (b)~the state $\xi(1/3)$ defined by~\eqref{xi eta} for $1/2-\epsilon\leq \lambda\leq 1/2+\epsilon$; (c)~the Fock state $\ketbra{2}$ for $1/3\leq \lambda\leq 1/2-\epsilon$; and (d)~the Fock state $\ketbra{n}$ for $1/(n+1)\leq \lambda \leq 1/n$, for all integers $n\geq 3$.
\end{manualthm}

\begin{rem}
The value of the constant $c$ in Theorem~\ref{Behemoth thm} will depend on $\epsilon$. For the numerical determination of the optimal value yielded by our argument, see Remark~\ref{value c rem}. We also examine a closely related question in the subsequent Remark~\ref{value c0 rem}.
\end{rem}

Throughout this section we will provide a complete proof of the above result. In light of its complexity, we will break it down into several elementary steps. Further considerations and some extensions of Theorem~\ref{Behemoth thm} are postponed to the end of this section. Here is a brief account of the content of the various subsections:
\begin{itemize}
\item[\ref{transmission scheme subsec}.] Here we fix a transmission scheme, that is, a family of environment states $\sigma(n)$ (for $n=1,2,\ldots$) and an ansatz $\ket{\Psi}_{AB}$ to be plugged into the coherent information~\eqref{energy-constrained Q 2}.
The result will be a lower bound of the form $Q\left(\Phi_{\lambda,\, \sigma(\lambda)}\right)\geq H(p(n,\lambda))-H(q(n,\lambda))$, where $p(n,\lambda)$ and $q(n,\lambda)$ are appropriate probability distributions over some index $\ell\in \{0,\ldots, n+1\}$ (Proposition~\ref{Phi n Q estimate prop}).
\item[\ref{sorting q subsec}.] We will then proceed to identify a range of values of $\lambda$ (depending on $n$) for which $q(n,\lambda)$ can be sorted in ascending order by a fixed permutation (luckily enough, this turns out to be the identity). The result is contained in Proposition~\ref{q sorted prop}.
\item[\ref{sorting p subsec}.] The same is then done for $p(n,\lambda)$, with considerably more effort and by keeping three distinct possibilities on the table (Proposition~\ref{p sorted prop}).
\item[\ref{majorisation subsec}.] The crux of the argument is to verify that for a sufficiently large range of values of $\lambda$ (for varying $n$) the probability distribution $q(n,\lambda)$ actually \emph{majorises} $p(n,\lambda)$.
\item[\ref{concluding subsec}.] The existence of a majorisation relation between $q(n,\lambda)$ and $p(n,\lambda)$ allows us to exploit a beautiful inequality due to Ho and Verd\'u~\cite[Theorem~3]{Ho2010} to lower bound their entropy difference by means of the relative entropy distance. In turn, this can be lower bounded in terms of their total variation distance thanks to Pinsker's inequality (see also~\cite[p.~58]{CSISZAR-KOERNER} and references therein). We look at the resulting bounds and draw our conclusions.
\item[\ref{further considerations subsec}] As a small addition to the actual proof of Theorem~\ref{Behemoth thm} (that concludes itself in Section~\ref{concluding subsec}), we show that, if one excludes the two base cases $n=2,3$, the majorisation between $q(n,\lambda)$ and $p(n,\lambda)$ becomes of a very special type (Proposition~\ref{majorisation n>3 prop}).
\end{itemize}

\subsection{A transmission scheme} \label{transmission scheme subsec}

\begin{prop} \label{Phi n Q estimate prop}
Set $\ket{\Psi}_{AB} \coloneqq \frac{1}{\sqrt2}\big( \ket{0}_A\ket{1}_B + \ket{1}_A \ket{0}_B\big)$ and $\sigma(n)\coloneqq \ketbra{n}$. Then for all $\lambda\in (0,1)$ it holds that
\bb
Q\left( \Phi_{\lambda,\, \sigma(n)}, 1/2\right) \geq \mathcal{I}(n,\lambda) \coloneqq \icoh_{\big(I^A\otimes \Phi^B_{\lambda,\, \sigma(n)}\big)(\Psi_{AB})} = H\left(p(n,\lambda)\right) - H\left(q(n,\lambda)\right)\, ,
\label{Phi n Q estimate}
\ee
where $H$ denotes the Shannon entropy, and the two probability distributions $p(n,\lambda)=\left( p_0(n,\lambda),\ldots, p_{n+1}(n,\lambda) \right)$ and $q(n,\lambda)=\left( q_0(n,\lambda),\ldots, q_{n+1}(n,\lambda) \right)$ are defined by
\begin{align}
p_\ell(n,\lambda) &\coloneqq \frac{1}{2(n+1)(1-\lambda)} \binom{n+1}{\ell} \left(1-\lambda\right)^{\ell}\lambda^{n-\ell} \left((1-\lambda)(n-\ell+ 1) + \left( (n+1)(1-\lambda) - \ell \right)^2\right) \label{p ell} \\[2ex]
q_\ell(n,\lambda) &\coloneqq \frac{1}{2(n+1)(1-\lambda)} \binom{n+1}{\ell} \left(1-\lambda\right)^{\ell}\lambda^{n-\ell} \left( \lambda\ell + \left( (n+1)(1-\lambda) - \ell \right)^2\right) \label{q ell}
\end{align}
\end{prop}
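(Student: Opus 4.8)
The plan is to obtain the claimed bound from the achievability part of the LSD theorem~\eqref{energy-constrained Q 1}--\eqref{energy-constrained Q 2} and then evaluate the resulting single-letter coherent information by a direct computation in the Fock basis. First I would note that the reduced input state $\Psi_B=\Tr_A\Psi_{AB}=\tfrac12\big(\ketbra{0}+\ketbra{1}\big)$ has mean photon number $\Tr[\Psi_B H_B]=1/2$, so $\Psi_{AB}$ is an admissible test state in~\eqref{energy-constrained Q 2} at energy $N=1/2$ and hence $Q\left(\Phi_{\lambda,\sigma(n)},1/2\right)\geq\icoh_{\omega_{AB}}$ with $\omega_{AB}\coloneqq\big(I^A\otimes\Phi^B_{\lambda,\sigma(n)}\big)(\Psi_{AB})$. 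Writing $\icoh_{\omega_{AB}}=S(\omega_B)-S(\omega_{AB})$ (which follows at once from the definition of $\icoh$), the task reduces to showing $S(\omega_B)=H(p(n,\lambda))$ and $S(\omega_{AB})=H(q(n,\lambda))$.

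To compute these two entropies I would purify the channel through its beam-splitter dilation~\eqref{Phi}, setting $\ket{\zeta}_{ABE}\coloneqq\big(I^A\otimes U_\lambda^{BE}\big)\ket{\Psi}_{AB}\ket{n}_E=\tfrac{1}{\sqrt2}\big(\ket{0}_A\otimes U_\lambda\ket{1}_B\ket{n}_E+\ket{1}_A\otimes U_\lambda\ket{0}_B\ket{n}_E\big)$, and plugging in the explicit Fock expansions~\eqref{U 0n}--\eqref{U 1n} of $U_\lambda\ket{0}\ket{n}$ and $U_\lambda\ket{1}\ket{n}$ (equivalently one may quote the formulae of Sabapathy and Winter~\cite[Section~III.B]{KK-VV}). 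Note that no explicit form of the full matrix $\omega_{AB}$ is needed, only its two marginals. Since the two branches carry orthogonal states of $A$, tracing out $A$ makes $\zeta_{BE}$ the equal-weight mixture of $\ketbra{U_\lambda\ket{1n}}$ and $\ketbra{U_\lambda\ket{0n}}$; and because within each branch the $E$-kets attached to distinct Fock levels of $B$ are orthogonal, tracing out $E$ leaves $\omega_B=\zeta_B$ diagonal in the Fock basis. Reading off the two sets of coefficients from~\eqref{U 0n}--\eqref{U 1n} and combining them with the elementary identity $\binom{n+1}{\ell}(n+1-\ell)=(n+1)\binom{n}{\ell}$, one finds that the $\ell$-th eigenvalue of $\omega_B$ equals exactly $p_\ell(n,\lambda)$ of~\eqref{p ell}; in particular $p(n,\lambda)$ is automatically a probability distribution, being the spectrum of a density operator, and $S(\omega_B)=H(p(n,\lambda))$.

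For $S(\omega_{AB})$ I would use purity of $\ket{\zeta}_{ABE}$ to get $S(\omega_{AB})=S(\zeta_{AB})=S(\zeta_E)$. The state $\zeta_E$ is the weak complementary output $\Phi^{\mathrm{wc}}_{\lambda,\sigma(n)}(\Psi_B)$, and by the identity $\Phi^{\mathrm{wc}}_{\lambda,\sigma}=\pazocal{V}\circ\Phi_{1-\lambda,\,\pazocal{V}(\sigma)}$ established inside the proof of Theorem~\ref{symmetric sigma thm}, together with $\inv\ketbra{n}\inv^\dag=\ketbra{n}$, we have $\zeta_E=\pazocal{V}\big(\Phi_{1-\lambda,\sigma(n)}(\Psi_B)\big)$, so it is enough to diagonalise $\Phi_{1-\lambda,\sigma(n)}(\Psi_B)$. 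Alternatively — and this is the route I would actually carry out, to keep everything uniform — one traces $A$ and $B$ out of $\ket{\zeta}_{ABE}$ directly: exactly as before $\zeta_E$ is Fock-diagonal, and after the substitution $m=n+1-\ell$ (which merely reverses the order of the eigenvalues) one checks, again using $\binom{n+1}{\ell}(n+1-\ell)=(n+1)\binom{n}{\ell}$ and the trivial identity $(n+1)(1-\lambda)-(n+1-m)=m-(n+1)\lambda$, that the eigenvalue of $\zeta_E$ at Fock level $m$ is $q_{n+1-m}(n,\lambda)$ of~\eqref{q ell}. Hence the spectrum of $\zeta_E$ is $\{q_\ell(n,\lambda)\}_{\ell=0}^{n+1}$ and $S(\omega_{AB})=H(q(n,\lambda))$. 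Putting the two marginal entropies together yields $\icoh_{\omega_{AB}}=H(p(n,\lambda))-H(q(n,\lambda))$, which is~\eqref{Phi n Q estimate}.

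The main obstacle here is not conceptual but bookkeeping: one must carry the normalisation prefactor $\tfrac{1}{\sqrt{(n+1)(1-\lambda)}}$ and the linear factor $(n+1)(1-\lambda)-\ell$ from~\eqref{U 1n} correctly through the Fock expansion, and — the one genuinely non-obvious point — realise that the spectrum of the complementary output $\zeta_E$ coincides with the list $\{q_\ell\}$ only after the reindexing $\ell\mapsto n+1-\ell$. Everything else is forced: the two branches decouple by orthogonality of the $A$-labels, and both marginals are Fock-diagonal because the residual $E$- (resp. $B$-) Fock states in each branch are mutually orthogonal, which is precisely what collapses the von Neumann entropies to the Shannon entropies of the explicit distributions~\eqref{p ell}--\eqref{q ell}.
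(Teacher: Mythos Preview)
Your proposal is correct and follows essentially the same line as the paper: both expand $U_\lambda^{BE}\ket{\Psi}_{AB}\ket{n}_E$ via~\eqref{U 0n}--\eqref{U 1n} and read off the two marginal spectra from the resulting Fock structure. The only cosmetic difference is that the paper diagonalises $\omega_{AB}$ directly by grouping terms into orthogonal vectors $\ket{\eta_\ell}_{AB}$ of fixed total photon number $\ell$, whereas you obtain $S(\omega_{AB})=S(\zeta_E)$ by purity and diagonalise the Fock-diagonal $\zeta_E$ instead (then reindex $\ell\mapsto n+1-\ell$); these are two sides of the same Schmidt decomposition and yield the same $q_\ell$.
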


\begin{proof}
Thanks to~\eqref{U 0n} and~\eqref{U 1n}, the action of the beam splitter on the $BE$ system can be expressed as
\begin{align*}
& U_\lambda^{BE} \ket{\Psi}_{\!AB}\ket{n}_{\!E} \\
&\quad = - \frac{1}{\sqrt2} \ket{0}_{\!A} \left( \frac{1}{\sqrt{(n\!+\!1)(1\!-\!\lambda)}} \sum_{\ell=0}^{n+1} \sqrt{\!\binom{n\!+\!1}{\ell}\!} \left(1\!-\!\lambda\right)^{\frac{\ell}{2}} \lambda^{\frac{n-\ell}{2}} \left((n\!+\!1)(1\!-\!\lambda)-\ell\right) \ket{\ell}_{\!B} \! \ket{n\!+\!1\!-\!\ell}_{\!E} \right) \\
&\quad \quad + \frac{1}{\sqrt2} \ket{1}_{\!A} \left( \sum_{\ell=0}^{n} \sqrt{\!\binom{n}{\ell}\!} \left(1\!-\!\lambda\right)^{\frac{\ell}{2}} \lambda^{\frac{n\!-\!\ell}{2}} \ket{\ell}_B \ket{n\!-\!\ell}_E \right) \\
&\quad = \frac{1}{\sqrt{2\,(n\!+\!1)(1\!-\!\lambda)}}\sum_{\ell=0}^{n+1} \sqrt{\!\binom{n\!+\!1}{\ell}\!} \left(1\!-\!\lambda\right)^{\frac{\ell}{2}}\lambda^{\frac{n-\ell}{2}} \\
&\quad \hspace{24ex}  \times \left( - \left((n\!+\!1)(1\!-\!\lambda)-\ell \right) \ket{0}_{\!A}\! \ket{\ell}_{\!B}\! \ket{n\!+\!1\!-\!\ell}_{\!E} + \sqrt{(1\!-\!\lambda)(n\!-\!\ell\!+\!1)}\, \ket{1}_{\!A}\!\ket{\ell}_{\!B}\!\ket{n\!-\!\ell}_{\!E} \right) \\
&\quad = \frac{1}{\sqrt{2\,(n\!+\!1)(1\!-\!\lambda)}} \sum_{\ell=0}^{n+1} \sqrt{\!\binom{n\!+\!1}{\ell}\!} \left(1\!-\!\lambda\right)^{\frac{\ell}{2}}\lambda^{\frac{n-\ell}{2}} \left( - \left((n\!+\!1)(1\!-\!\lambda)-\ell\right) \ket{0}_{\!A} \!\ket{\ell}_{\!B} + \sqrt{\lambda\ell}\,\ket{1}_{\!A}\!\ket{\ell\!-\!1}_{\!B}\right) \ket{n\!+\!1\!-\!\ell}_E , 
\end{align*}
with the convention that $\ket{-1}\equiv 0$. 
Introducing the normalised vectors
\begin{align}
\ket{\zeta_\ell(n,\lambda)} &\coloneqq \frac{1}{\sqrt{(1\!-\!\lambda)(n\!-\!\ell\!+\!1) + \left((n\!+\!1)(1\!-\!\lambda)-\ell\right)^2}} \left( - \left( (n\!+\!1)(1\!-\!\lambda)-\ell\right) \ket{0}\! \ket{n\!+\!1\!-\!\ell} + \sqrt{(1\!-\!\lambda)(n\!-\!\ell\!+\!1)}\, \ket{1}\! \ket{n\!-\!\ell} \right) \label{zeta ell} \\
\ket{\eta_\ell(n,\lambda)} &\coloneqq \frac{1}{\sqrt{\lambda\ell + \left((n\!+\!1)(1\!-\!\lambda) - \ell\right)^2}} \left( - \left((n\!+\!1)(1\!-\!\lambda) - \ell\right) \ket{0}\! \ket{\ell} + \sqrt{\lambda \ell}\, \ket{1}\! \ket{\ell\!-\!1}\right) , \label{eta ell} 
\end{align}
for $\ell=0,\ldots, n+1$, we finally arrive at
\begin{align}
&U_\lambda^{BE} \ket{\Psi}_{\!AB}\!\ket{n}_{\!E} \nonumber \\
&\quad = \frac{1}{\sqrt{2\,(n\!+\!1)(1\!-\!\lambda)}} \sum_{\ell=0}^{n+1} \sqrt{\!\binom{n\!+\!1}{\ell}\!} \left(1\!-\!\lambda\right)^{\frac{\ell}{2}}\lambda^{\frac{n-\ell}{2}} \sqrt{(1\!-\!\lambda)(n\!-\!\ell\!+\! 1) + \left( (n\!+\!1)(1\!-\!\lambda) - \ell \right)^2}\, \ket{\zeta_\ell(n,\lambda)}_{\!AE} \ket{\ell}_{\!B} . \label{Phi n output ABE alt} \\ 
&\quad = \frac{1}{\sqrt{2\,(n\!+\!1)(1\!-\!\lambda)}} \sum_{\ell=0}^{n+1} \sqrt{\!\binom{n\!+\!1}{\ell}\!} \left(1-\lambda\right)^{\frac{\ell}{2}}\lambda^{\frac{n-\ell}{2}} \sqrt{\lambda\ell + \left( (n\!+\!1)(1\!-\!\lambda)-\ell \right)^2}\, \ket{\eta_\ell(n,\lambda)}_{\!AB} \ket{n\!+\!1\!-\!\ell}_{\!E} . \label{Phi n output ABE} 
\end{align}
Tracing away the subsystem $E$ from~\eqref{Phi n output ABE} yields the output state of the channel as
\bb
\begin{aligned}
\omega_{AB}(n,\lambda) \coloneqq&\ \left( I^A\! \otimes \Phi_{\lambda,\, \sigma(n)}^B\right) \left( \Psi_{AB}\right) \\
=&\ \frac{1}{2(n\!+\!1)(1\!-\!\lambda)} \sum_{\ell=0}^{n+1} \binom{n\!+\!1}{\ell} \left(1\!-\!\lambda\right)^{\ell}\lambda^{n-\ell} \left( \lambda\ell + \left( (n\!+\!1)(1\!-\!\lambda) - \ell\right)^2\right) \ketbra{\eta_\ell(n,\lambda)}_{AB}\, .
\end{aligned}
\label{Phi n output AB}
\ee
Note that the total photon number of the state $\ket{\eta_\ell}$ is exactly $\ell$, for all $\ell=0,\ldots, n+1$:
\bb
(a^\dag a + b^\dag b) \ket{\eta_\ell}_{AB} = \ell \ket{\eta_\ell}_{AB}\, .
\ee
Hence, the vectors $\ket{\eta_\ell}$ are all orthogonal to each other. This allows us to immediately deduce the spectrum of $\omega_{AB}(n,\lambda)$. We obtain that
\bb
\mathrm{sp}\left( \omega_{AB}(n,\lambda) \right) = \left\{q_0(n,\lambda),\ldots, q_{n+1}(n,\lambda)\right\} ,
\ee
where the probability distribution $q(n,\lambda)$ is given by~\eqref{q ell}.

To derive an expression for $\omega_B(n,\lambda) = \Phi_{\lambda,\, \sigma(n)}^B \left( \Psi_{B}\right)$ we could trace away $A$ from~\eqref{Phi n output AB}. However, it is slightly more convenient to read off the result directly from~\eqref{Phi n output ABE alt}. We obtain that
\bb
\omega_B(n,\lambda) = \Phi_{\lambda,\, \sigma(n)}^B \left( \Psi_{B}\right) = \frac{1}{2\,(n\!+\!1)(1\!-\!\lambda)} \sum_{\ell=0}^{n+1} \binom{n\!+\!1}{\ell} \left(1\!-\!\lambda\right)^{\ell}\lambda^{n-\ell} \left((1\!-\!\lambda)(n\!-\!\ell\!+\! 1) + \left( (n\!+\!1)(1\!-\!\lambda) - \ell \right)^2\right) \ketbra{\ell}_{\!B}
\ee
The above decomposition allows us to write down the spectrum of the reduced output state on the $B$ system immediately. We obtain that
\bb
\mathrm{sp}\left( \omega_B(n,\lambda) \right) = \left\{p_0(n,\lambda),\ldots, p_{n+1}(n,\lambda)\right\} ,
\ee
where the probability distribution $p(n,\lambda)$ is given by~\eqref{p ell}.

Since the reduced input state $\Psi_B$ on the $B$ system has mean photon number $1/2$, the (energy-constrained) LSD theorem~\eqref{energy-constrained Q 1}--\eqref{energy-constrained Q 2} yields the estimate in~\eqref{Phi n Q estimate}, thus concluding the proof.
\end{proof}

\subsection{Sorting $\boldsymbol{q(n,\lambda)}$} \label{sorting q subsec}

In the following, for a given probability distribution $r=\left(r_0,\ldots, r_N\right)$, we denote with $r^\uparrow=\left( r^\uparrow_0,\ldots, r^\uparrow_N\right)$ the distribution obtained by sorting it in ascending order, so that e.g.\ $r^\uparrow_0 = \min_{\ell=0,\ldots, N} r_\ell$. Our first result tells us that for a wide range of values of $\lambda$ the distribution $q(n,\lambda)$ is actually already sorted. It is useful to define the two functions
\begin{align}
\lambda_+(n) &\coloneqq \frac{3}{n+2} \left( 1 -\sqrt{\frac{n-1}{3(n+1)}}\right) , \label{lambda+}\\
\lambda_-(n) &\coloneqq \frac{2}{n+2} \left( 1 -\sqrt{\frac{n}{2(n+1)}}\right) . \label{lambda-}
\end{align}
We are now ready to state and prove our first result.

\begin{prop} \label{q sorted prop}
For all $n\geq 2$, 
\bb
q^\uparrow(n,\lambda) = q(n,\lambda)\qquad \forall\quad \frac{1}{n+1} \leq \lambda \leq \lambda_+(n) \, .
\ee
\end{prop}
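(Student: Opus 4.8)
The plan is to show that for $\lambda$ in the stated interval the sequence $\ell\mapsto q_\ell(n,\lambda)$ is already nondecreasing, so that sorting it in ascending order changes nothing; equivalently, it suffices to prove $q_\ell(n,\lambda)\le q_{\ell+1}(n,\lambda)$ for every $\ell\in\{0,\dots,n\}$. Since $q_\ell(n,\lambda)>0$ throughout $\lambda\in(0,1)$ (the factor $\lambda\ell+((n+1)(1-\lambda)-\ell)^2$ cannot vanish there), I would divide each target inequality by the common positive factor $\tfrac{1}{2(n+1)(1-\lambda)}\binom{n+1}{\ell}(1-\lambda)^\ell\lambda^{n-\ell}$ and use $\binom{n+1}{\ell+1}/\binom{n+1}{\ell}=\tfrac{n+1-\ell}{\ell+1}$, turning $q_\ell\le q_{\ell+1}$ into the polynomial inequality
\[
(n+1-\ell)(1-\lambda)\big(\lambda(\ell+1)+((n+1)(1-\lambda)-\ell-1)^2\big)\ \ge\ (\ell+1)\lambda\big(\lambda\ell+((n+1)(1-\lambda)-\ell)^2\big).
\]
It is convenient to pass to the variable $M\coloneqq(n+1)\lambda$, so that the range $\tfrac1{n+1}\le\lambda\le\lambda_+(n)$ becomes $1\le M\le M_+(n)$ with $M_+(n)\coloneqq(n+1)\lambda_+(n)$, and to clear denominators; what remains is a polynomial inequality $R_{n,\ell}(M)\ge0$ of degree at most three in $M$.

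The crux is the behaviour near the top of the distribution, i.e.\ the steps $\ell\in\{n-3,n-2,n-1,n\}$, which I would treat by direct expansion. One finds that $q_{n+1}-q_n$ is a positive multiple of $2-\lambda(n+2)$; that $R_{n,n-1}(M)=(n+1)(M-1)\big(-(n+2)M^2+4(n+1)M-2(n+1)\big)$; and --- the decisive identity --- that, after regrouping, $R_{n,n-2}(M)=(n+1)(2-M)\,\phi(M)$ with $\phi(M)\coloneqq(n+2)M^2-6(n+1)M+6(n+1)$. The definition of $\lambda_+(n)$ is precisely equivalent to $(n+1)(n+2)\lambda_+(n)^2-6(n+1)\lambda_+(n)+6=0$, i.e.\ $M_+(n)$ is the \emph{smaller} root of $\phi$; two elementary radical inequalities give $1<M_+(n)\le\tfrac32$. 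Hence on $1\le M\le M_+(n)$ one has $2-M>0$ and $\phi(M)\ge0$, so the $\ell=n-2$ step holds --- with equality exactly at $M=M_+(n)$, which is where the cutoff $\lambda_+(n)$ originates. Comparing radicals one checks $M_+(n)\le\tfrac{2(n+1)}{n+2}\le\tfrac{2(n+1)+\sqrt{2n(n+1)}}{n+2}$, disposing of the $\ell=n$ and $\ell=n-1$ steps; and for $\ell=n-3$, where $R_{n,n-3}(M)=(n+1)\big(-(n+2)M^3+(11n+14)M^2-36(n+1)M+36(n+1)\big)$, one uses $\phi(M_+(n))=0$ to reduce this cubic and its derivative at $M=M_+(n)$, concluding that $R_{n,n-3}$ is positive at $M_+(n)$ and strictly decreasing on $[1,M_+(n)]$, hence positive there.

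For the remaining steps $\ell\le n-4$ a crude estimate suffices. Writing $d\coloneqq(n+1)(1-\lambda)-\ell=(n+1)-M-\ell$, one has $d\ge 5-M_+(n)\ge\tfrac72$, and
\[
\frac{q_{\ell+1}(n,\lambda)}{q_\ell(n,\lambda)}=\frac{(n+1-\ell)(1-\lambda)}{(\ell+1)\lambda}\cdot\frac{\lambda(\ell+1)+(d-1)^2}{\lambda\ell+d^2}\ \ge\ \Big(1+\frac d2\Big)\frac{(d-1)^2}{d^2+2}\ \ge\ 1,
\]
where in the middle inequality one discards $\lambda(\ell+1)\ge0$ from the numerator, bounds $\lambda\ell<M_+(n)<2$, and uses $\tfrac{n+1-\ell}{\ell+1}\cdot\tfrac{1-\lambda}{\lambda}>\tfrac{d+M}{M}\ge1+\tfrac d2$; the last inequality is $(2+d)(d-1)^2\ge2(d^2+2)$, i.e.\ $d^3-2d^2-3d-2\ge0$, which holds for $d\ge\tfrac72$ since that polynomial is increasing and positive at $d=\tfrac72$. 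Assembling the cases finishes the proof. (For small $n$ several cases are vacuous: e.g.\ for $n=2$ only the steps $\ell\in\{0,1,2\}$ occur, $\ell=0$ being the binding one.)

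The step I expect to be the main obstacle is twofold: first, recognising the factorisation $R_{n,n-2}(M)=(n+1)(2-M)\phi(M)$ --- everything hinges on this, since it is exactly what identifies $\lambda_+(n)$ as the threshold; and second, organising the case analysis so that only the four top steps $\ell\in\{n-3,n-2,n-1,n\}$ require individual (and somewhat laborious) expansion while all of $\ell\le n-4$ collapse to a single robust estimate in the variable $d$.
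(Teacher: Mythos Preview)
Your approach is correct, but it differs from the paper's in organisation. The paper does not split the analysis by $\ell$; instead it derives the single factorisation
\[
\lambda(\ell+1)\Big(\lambda\ell + \big((n+1)(1-\lambda)-\ell\big)^2\Big)\left(\frac{q_{\ell+1}}{q_\ell}-1\right) = \big((n+1)(1-\lambda)-\ell-1\big)\,f_{n,\lambda}(\ell),
\]
with $f_{n,\lambda}(\ell)=\ell^2 - 2\big((n+1)(1-\lambda)-\tfrac12\big)\ell + (n+1)(1-\lambda)\big(n-(n+2)\lambda\big)$, and then studies $f_{n,\lambda}$ as a quadratic in $\ell$ with roots $\ell_\pm(n,\lambda)$. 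The entire argument then reduces to locating $\ell_-$ between $n-2$ and $n-1$ and $\ell_+$ above $n$, which forces both factors on the right to have the same sign for every integer $\ell$. Your factorisation $R_{n,n-2}(M)=(n+1)(2-M)\phi(M)$ is exactly the paper's identity evaluated at $\ell=n-2$, and your identification of $M_+(n)$ as the smaller root of $\phi$ is equivalent to the paper's condition $\ell_-(n,\lambda)=n-2$ at $\lambda=\lambda_+(n)$; so you have correctly pinpointed where $\lambda_+(n)$ comes from. The paper's route buys uniformity --- no top-four case split and no separate crude bound for $\ell\le n-4$ --- while yours trades that for elementary single-variable polynomial checks that avoid analysing roots in the $\ell$-variable.

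One small gap: in your $\ell=n-3$ step you assert $R_{n,n-3}$ is ``strictly decreasing on $[1,M_+(n)]$'' by reducing the derivative at $M=M_+(n)$ via $\phi(M_+(n))=0$, but evaluating the derivative only at the endpoint does not yield monotonicity on the whole interval; you still need to control $R'_{n,n-3}$ throughout $[1,M_+(n)]$ (e.g.\ by checking that the quadratic $R'_{n,n-3}$ has no root there). This is easily fillable, but should be said explicitly.
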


\begin{proof}
For $\ell=0,\ldots, n$, leveraging the fact that
\bb
\frac{\binom{n+1}{\ell+1}}{\binom{n+1}{\ell}} = \frac{n-\ell+1}{\ell+1}
\label{ratio binomial}
\ee
the formula~\eqref{q ell} yields
\begin{align*}
&\lambda (\ell+1)\left( \lambda\ell + \left((n+1)(1-\lambda)-\ell\right)^2\right) \left( \frac{q_{\ell+1}(n,\lambda)}{q_\ell(n,\lambda)} - 1 \right) \\
&\qquad = (n-\ell+1)(1-\lambda) \left( \lambda (\ell+1) + \left( (n+1)(1-\lambda) -  \ell - 1\right)^2 \right) - \lambda (\ell+1)\left( \lambda\ell + \left((n+1)(1-\lambda) - \ell\right)^2\right) \\
&\qquad = \lambda (\ell+1) \left( (n-\ell+1)(1-\lambda) - \lambda\ell - \left((n+1)(1-\lambda)-\ell\right)^2 \right) + (n-\ell+1)(1-\lambda) \left( (n+1)(1-\lambda) - \ell - 1\right)^2 \\
&\qquad = - \lambda (\ell+1) \left( (n+1)(1-\lambda) - \ell -1 \right) \left( (n+1)(1-\lambda) - \ell \right) + (n-\ell+1)(1-\lambda) \left( (n+1)(1-\lambda) - \ell-1 \right)^2 \\
&\qquad = \left( (n+1)(1-\lambda) - \ell - 1 \right)\, \Big( - \lambda (\ell+1) \left( (n+1)(1-\lambda) - \ell \right) + (n-\ell+1)(1-\lambda) \left( (n+1)(1-\lambda) - \ell-1 \right) \Big) \\
&\qquad = \left( (n+1)(1-\lambda) - \ell - 1 \right) \left( \ell^2 - 2\left((n+1)(1-\lambda) - \frac12\right) \ell + (n+1)(1-\lambda) \left( n - (n+2)\lambda\right) \right) .
\end{align*}
Setting
\bb
f_{n,\lambda}(\ell) \coloneqq \ell^2 - 2\left((n+1)(1-\lambda) - \frac12\right) \ell + (n+1)(1-\lambda) \left( n - (n+2)\lambda\right) ,
\label{q sorted f}
\ee
we arrive at the identity
\bb
\lambda (\ell+1)\left( \lambda\ell + \left((n+1)(1-\lambda)-\ell\right)^2\right) \left( \frac{q_{\ell+1}(n,\lambda)}{q_\ell(n,\lambda)} - 1 \right) = \left( (n+1)(1-\lambda) - \ell - 1 \right)\, f_{n,\lambda}(\ell)\, .
\label{q sorted main identity}
\ee
Now, the function $f_{n,\lambda}(\ell)$ is a second-degree polynomial in the variable $\ell$. By finding its roots we can determine its sign on the whole real line. We see that
\bbb
\begin{array}{ll} f_{n,\lambda}(\ell) \leq 0 &\quad \text{if $\ell_-(n,\lambda)\leq \ell \leq \ell_+(n,\lambda)$,} \\[1ex]
f_{n,\lambda}(\ell) \geq 0 &\quad \text{otherwise,}
\end{array}
\eee
where
\bbb
\ell_{\pm}(n,\lambda) \coloneqq n+\frac12 - (n+1)(1-\lambda) \pm \sqrt{\frac14 + (n+1)\lambda(1-\lambda)}\, .
\eee
One can show that
\bbb
\ell_-(n,\lambda)\geq n-2 \qquad \forall\quad 0\leq \lambda \leq \min\left\{ \frac{5}{2(n+1)},\, \lambda_+(n) \right\} = \lambda_+(n)\, .
\eee
Moreover,
\bbb
\ell_-(n,\lambda)\leq n-1 \qquad \forall\quad \lambda \geq \lambda_-(n)\, .
\eee
Putting all together, we find that
\bbb
n-2\leq \ell_-(n,\lambda) \leq n-1\qquad \forall\quad \lambda_-(n)\leq \lambda\leq\lambda_+(n)\, .
\eee
It is also easy to verify that
\bbb
\ell_+(n,\lambda)\geq n\qquad \forall\quad 0\leq \lambda\leq \frac{2}{n+2}\, .
\eee
Since $\frac{2}{n+2}\geq \lambda_+(n)$ for all $n\geq 2$, we deduce that
\bbb
\ell_+(n,\lambda)\geq n\qquad \forall\quad 0\leq \lambda\leq \lambda_+(n)\, .
\eee
Going back to the function $f_{n,\lambda}(\ell)$, the above discussion implies that
\bb
\begin{array}{ll} f_{n,\lambda}(\ell) \leq 0 &\quad \text{if $\ell=n-1,n$,} \\[1ex]
f_{n,\lambda}(\ell) \geq 0 &\quad \text{if $\ell=0,\ldots,n-2$,}
\end{array} \qquad\quad \forall\quad \lambda_-(n)\leq \lambda\leq \lambda_+(n)\, .
\label{q sorted study f}
\ee
Also, it is not difficult to verify that
\bbb
n-2\leq (n+1)\left(1-\lambda\right) - 1\leq n-1\qquad \forall\quad \frac{1}{n+1}\leq \lambda\leq \frac{2}{n+1}\, ;
\eee
we infer that
\bb
\begin{array}{ll} (n+1)\left(1-\lambda\right) - \ell - 1 \leq 0 &\quad \text{if $\ell=n-1,n$,} \\[1ex]
(n+1)\left(1-\lambda\right) - \ell - 1 \geq 0 &\quad \text{if $\ell=0,\ldots,n-2$,}
\end{array} \qquad\quad \forall\quad \frac{1}{n+1}\leq \lambda\leq \frac{2}{n+1}\, .
\label{q sorted study other factor}
\ee
Using the fact that $\frac{1}{n+1}\geq \lambda_-(n)$ and $\lambda_+(n)\leq \frac{2}{n+1}$ for all $n$, and combining~\eqref{q sorted main identity} on the one hand with~\eqref{q sorted study f}--\eqref{q sorted study other factor} on the other, we finally see that
\bbb
q_{\ell+1}(n,\lambda)\geq q_\ell(n,\lambda)\qquad \forall\quad \frac{1}{n+1}\leq \lambda\leq \lambda_+(n)\, ,
\eee
which proves the claim.
\end{proof}

\subsection{Sorting $\boldsymbol{p(n,\lambda)}$} \label{sorting p subsec}

As it turns out, for an analogous range of values of $\lambda$ the probability distribution $p(n,\lambda)$, unlike $q(n,\lambda)$, is \emph{not} automatically sorted in ascending order. The next lemma represents a first step in the direction of ascertaining how $p(n,\lambda)$ can be sorted.

\begin{lemma} \label{p sorted lemma1}
For all $n\geq 2$,
\bb
p_0(n,\lambda) \leq p_1(n,\lambda)\leq \ldots \leq p_{n-1}(n,\lambda) \geq p_n(n,\lambda)\leq p_{n+1}(n,\lambda)\qquad \forall\quad \frac{1}{n+1} \leq \lambda \leq \lambda_+(n)\, .
\ee
\end{lemma}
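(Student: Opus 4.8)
The plan is to re-run the argument behind Proposition~\ref{q sorted prop}, this time for the distribution $p(n,\lambda)$, by analysing the consecutive ratios $p_{\ell+1}(n,\lambda)/p_\ell(n,\lambda)$ for $\ell=0,\ldots,n$ (all $p_\ell(n,\lambda)$ are strictly positive when $0<\lambda<1$, so these ratios make sense). First I would combine the binomial identity~\eqref{ratio binomial} with the closed form~\eqref{p ell}, clear denominators, and abbreviate $a\coloneqq (n+1)(1-\lambda)-\ell$, so that the two successive weights appearing in~\eqref{p ell} read $(1-\lambda)(n-\ell+1)+a^2 = a^2+a+\lambda\ell$ and $(1-\lambda)(n-\ell)+(a-1)^2 = a^2-a+\lambda(\ell+1)$. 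A short polynomial computation---the exact analogue of the one producing~\eqref{q sorted main identity}---then yields the identity
\bb
\lambda(\ell+1)\left( (1-\lambda)(n-\ell+1) + a^2 \right)\left( \frac{p_{\ell+1}(n,\lambda)}{p_\ell(n,\lambda)} - 1 \right) = a\, f_{n,\lambda}(\ell)\, ,
\label{p sorted main identity}
\ee
where, remarkably, $f_{n,\lambda}$ is \emph{exactly} the quadratic~\eqref{q sorted f} that controlled the $q$ case; the only change relative to~\eqref{q sorted main identity} is that the linear factor $(n+1)(1-\lambda)-\ell-1$ is replaced by $a=(n+1)(1-\lambda)-\ell$.

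Granting~\eqref{p sorted main identity}, the remainder is sign bookkeeping. The prefactor $\lambda(\ell+1)\left((1-\lambda)(n-\ell+1)+a^2\right)$ is strictly positive for $0<\lambda<1$ and $0\leq\ell\leq n$ (since then $n-\ell+1\geq 1$), so $\sgn\!\left(p_{\ell+1}(n,\lambda)-p_\ell(n,\lambda)\right)=\sgn\!\left(a\,f_{n,\lambda}(\ell)\right)$. The sign pattern of $f_{n,\lambda}$ is already established inside the proof of Proposition~\ref{q sorted prop}: for $\lambda_-(n)\leq\lambda\leq\lambda_+(n)$---and therefore, since $\lambda_-(n)\leq\frac{1}{n+1}$, for $\frac{1}{n+1}\leq\lambda\leq\lambda_+(n)$---one has $f_{n,\lambda}(\ell)\geq 0$ for $\ell=0,\ldots,n-2$ and $f_{n,\lambda}(\ell)\leq 0$ for $\ell=n-1,n$. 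For the new linear factor, the bound $\lambda_+(n)\leq\frac{2}{n+1}$ (used already in Proposition~\ref{q sorted prop}) gives $n-1\leq(n+1)(1-\lambda)\leq n$ on $\frac{1}{n+1}\leq\lambda\leq\lambda_+(n)$, whence $a=(n+1)(1-\lambda)-\ell\geq n-1-\ell\geq 0$ for $\ell=0,\ldots,n-1$ and $a\leq 0$ for $\ell=n$.

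Multiplying these signs together then finishes the proof: on $\frac{1}{n+1}\leq\lambda\leq\lambda_+(n)$ we get $a\,f_{n,\lambda}(\ell)\geq 0$ for $\ell=0,\ldots,n-2$, so $p_0\leq p_1\leq\cdots\leq p_{n-1}$; next $a\,f_{n,\lambda}(n-1)\leq 0$ (non-negative times non-positive), so $p_{n-1}\geq p_n$; and $a\,f_{n,\lambda}(n)\geq 0$ (non-positive times non-positive), so $p_n\leq p_{n+1}$---which is exactly the asserted chain. The only genuine labour is the algebraic verification of~\eqref{p sorted main identity}, and the key observation that makes the rest cheap is that the quadratic factor dropping out is literally $f_{n,\lambda}$ again, so the delicate root-localisation estimates of Proposition~\ref{q sorted prop}---in particular the calibration of $\lambda_+(n)$ in~\eqref{lambda+}, which keeps the relevant root of $f_{n,\lambda}$ wedged between $n-2$ and $n-1$ on the interval in question---can be imported wholesale rather than redone.
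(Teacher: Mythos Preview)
Your proposal is correct and follows essentially the same route as the paper's own proof: the paper derives exactly the identity you call~\eqref{p sorted main identity} (with $a=(n+1)(1-\lambda)-\ell$ as the linear factor in place of $a-1$), observes that the quadratic factor is again $f_{n,\lambda}$, and then combines the sign analysis~\eqref{q sorted study f} of $f_{n,\lambda}$ with the bounds $n-1\leq (n+1)(1-\lambda)\leq n$ on the interval to read off the chain of inequalities. Your notational shortcut $a^2+a+\lambda\ell$, $a^2-a+\lambda(\ell+1)$ for the two weights is a nice abbreviation, but the substance is the same.
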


\begin{proof}
For all $\ell = 0,\ldots, n$, employing~\eqref{p ell} and~\eqref{ratio binomial} we compute
\begin{align*}
&\lambda (\ell+1) \left((1-\lambda)(n-\ell+1) + \left( (n+1)(1-\lambda) - \ell\right)^2 \right) \left( \frac{p_{\ell+1}(n,\lambda)}{p_\ell(n,\lambda)} - 1\right) \\
&\qquad = (n-\ell+1)(1-\lambda) \left( (1-\lambda)(n-\ell) + \left((n+1)(1-\lambda) -\ell -1\right)^2 \right) \\
&\qquad \quad - \lambda (\ell+1) \left( (1-\lambda)(n-\ell+1) + \left((n+1)(1-\lambda) -\ell\right)^2 \right) \\
&\qquad = (n-\ell+1)(1-\lambda) \left( (1-\lambda)(n-\ell) + \left((n+1)(1-\lambda) -\ell -1\right)^2 - \lambda (\ell+1) \right) - \lambda (\ell+1) \left((n+1)(1-\lambda) -\ell\right)^2 \\
&\qquad = (n-\ell+1)(1-\lambda) \left( (n+1)(1-\lambda) - \ell\right) \left( (n+1)(1-\lambda) -\ell -1 \right) - \lambda (\ell+1) \left((n+1)(1-\lambda) -\ell\right)^2 \\
&\qquad = \left((n+1)(1-\lambda) - \ell\right) \, \Big( (n-\ell+1)(1-\lambda) \left( (n+1)(1-\lambda) -\ell -1 \right) - \lambda (\ell+1) \left((n+1)(1-\lambda) -\ell\right) \Big) \\
&\qquad = \left((n+1)(1-\lambda) - \ell\right) \left( \ell^2 - 2\left((n+1)(1-\lambda) - \frac12\right) \ell + (n+1)(1-\lambda) \left( n - (n+2)\lambda\right) \right) .
\end{align*}
Thus,
\bb
\lambda (\ell+1) \left((1-\lambda)(n-\ell+1) + \left( (n+1)(1-\lambda) - \ell\right)^2 \right) \left( \frac{p_{\ell+1}(n,\lambda)}{p_\ell(n,\lambda)} - 1\right) = \left((n+1)(1-\lambda) - \ell\right) \, f_{n,\lambda}(\ell)\, ,
\label{p sorted main identity}
\ee
where $f_{n,\lambda}(\ell)$, defined by~\eqref{q sorted f}, is -- luckily enough -- the same function that we already encountered in the proof of Proposition~\ref{q sorted prop}, which makes~\eqref{q sorted study f} available. Since
\bbb
n-1\leq (n+1)\left(1-\lambda\right) \leq n\qquad \forall\quad \frac{1}{n+1}\leq \lambda\leq \frac{2}{n+1}\, ,
\eee
we obtain that
\bb
\begin{array}{ll} (n+1)\left(1-\lambda\right) - \ell \leq 0 &\quad \text{if $\ell=n$,} \\[1ex]
(n+1)\left(1-\lambda\right) - \ell - 1 \geq 0 &\quad \text{if $\ell=0,\ldots,n-1$,}
\end{array} \qquad\quad \forall\quad \frac{1}{n+1}\leq \lambda\leq \frac{2}{n+1}\, .
\label{p sorted study other factor}
\ee
Combining~\eqref{p sorted main identity} with~\eqref{q sorted study f} and~\eqref{p sorted study other factor} shows that for all $\frac{1}{n+1}\leq \lambda\leq \lambda_+(n)$ the inequalities $p_{\ell}(n,\lambda)\leq p_{\ell+1}(n,\lambda)$ hold true for $\ell=0,\ldots, n-2$ or $\ell=n$, while for $\ell=n-1$ we have the opposite relation $p_{n-1}(n,\lambda)\geq p_{n}(n,\lambda)$. This completes the proof.
\end{proof}

\begin{lemma} \label{p sorted lemma2}
For all $n\geq 3$,
\bb
p_{n-3}(n,\lambda) \leq p_n(n,\lambda)\qquad \forall\quad \frac{1}{n+1}\leq \lambda \leq \widetilde{\lambda}_+(n)\, ,
\ee
where
\bb
\widetilde{\lambda}_+(n) \coloneqq \frac{3^{1/3}}{2^{1/3} n + 3^{1/3} - 2^{1/3}}\, .
\label{lambda+ tilde}
\ee
\end{lemma}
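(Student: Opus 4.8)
The plan is to reduce the claim to showing $p_n(n,\lambda)/p_{n-3}(n,\lambda)\ge 1$, and to obtain this by factoring the ratio as a telescoping product of consecutive ratios. Writing $g(\ell):=(1-\lambda)(n-\ell+1)+\big((n+1)(1-\lambda)-\ell\big)^2$ for the quadratic factor appearing in \eqref{p ell}, so that $p_\ell(n,\lambda)=\frac{\binom{n+1}{\ell}(1-\lambda)^{\ell}\lambda^{n-\ell}}{2(n+1)(1-\lambda)}\,g(\ell)$, and using \eqref{ratio binomial}, each consecutive ratio is $\frac{p_{\ell+1}}{p_\ell}=\frac{n-\ell+1}{\ell+1}\cdot\frac{1-\lambda}{\lambda}\cdot\frac{g(\ell+1)}{g(\ell)}$. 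Taking the product over $\ell=n-3,n-2,n-1$, the binomial factors collapse to $\frac{4\cdot 3\cdot 2}{(n-2)(n-1)n}=\frac{24}{n(n-1)(n-2)}$ and the $g$-factors telescope to $g(n)/g(n-3)$, giving
\[
\frac{p_n(n,\lambda)}{p_{n-3}(n,\lambda)}=\frac{24}{n(n-1)(n-2)}\left(\frac{1-\lambda}{\lambda}\right)^{3}\frac{g(n)}{g(n-3)},\qquad g(n)=(1-\lambda)+(1-(n+1)\lambda)^2,\quad g(n-3)=4(1-\lambda)+(4-(n+1)\lambda)^2 .
\]

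I would then bound the two pieces separately, keeping track of the constant $16$. For the algebraic factor, observe that $\lambda\le\widetilde{\lambda}_+(n)$ is, after clearing denominators in \eqref{lambda+ tilde}, exactly equivalent to $\frac{1-\lambda}{\lambda}\ge\big(\tfrac{2}{3}\big)^{1/3}(n-1)$, hence to $\big(\tfrac{1-\lambda}{\lambda}\big)^{3}\ge\tfrac{2}{3}(n-1)^3$; therefore $\frac{24}{n(n-1)(n-2)}\big(\tfrac{1-\lambda}{\lambda}\big)^{3}\ge\frac{16(n-1)^2}{n(n-2)}\ge 16$, using $(n-1)^2\ge n(n-2)$. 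For the quadratic factor I would prove $16\,g(n)\ge g(n-3)$ on the whole half-line $\lambda\ge\frac{1}{n+1}$. Substituting $u:=(n+1)\lambda\ge 1$ one finds $16\,g(n)-g(n-3)=12(1-\lambda)+15u^2-24u=15u^{2}-\big(24+\tfrac{12}{n+1}\big)u+12=:\psi_n(u)$, and since $\psi_n(1)=\frac{3(n-3)}{n+1}\ge 0$ (this is where $n\ge 3$ enters) while $\psi_n'(1)=\frac{6(n-1)}{n+1}>0$ with $\psi_n'$ linear and increasing, $\psi_n$ is nonnegative on $[1,\infty)$. Multiplying the two lower bounds yields $\frac{p_n}{p_{n-3}}\ge 16\cdot\frac{1}{16}=1$, which is the assertion; I would also record that $\frac{1}{n+1}\le\widetilde{\lambda}_+(n)$ (immediate since $12^{1/3}>2$) so the stated interval is nonempty, and that $g(\ell)>0$ throughout the relevant range, so every ratio above is legitimate.

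The one genuine subtlety — and the reason I would avoid directly expanding the degree-five polynomial inequality $24(1-\lambda)^3 g(n)\ge n(n-1)(n-2)\lambda^3 g(n-3)$ — is that neither factor of the product is individually $\ge 1$: the quadratic factor $g(n)/g(n-3)$ is always strictly below $1$ and the algebraic factor always strictly above $1$, so the argument works only if one splits off exactly the right matching constant. The threshold $\widetilde{\lambda}_+(n)$ in \eqref{lambda+ tilde} is calibrated precisely so that the algebraic factor stays at least $16$ on $[\frac{1}{n+1},\widetilde{\lambda}_+(n)]$; once that constant is identified, each of the two remaining inequalities is a routine one-variable estimate and the proof closes.
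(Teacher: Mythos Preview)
Your proof is correct and follows essentially the same strategy as the paper: both compute the ratio of $p_{n-3}$ and $p_n$, split it into the combinatorial piece $\frac{24}{n(n-1)(n-2)}\big(\tfrac{1-\lambda}{\lambda}\big)^3$ and the quadratic piece $g(n)/g(n-3)$ with the matching constant $16$, and use $(n-1)^2\ge n(n-2)$ together with the definition of $\widetilde{\lambda}_+(n)$ for the first factor. The only notable difference is in the treatment of the second factor: the paper rewrites $16g(n)\ge g(n-3)$ as $\tfrac{5}{4}(n+1)^2\lambda^2-(2n+3)\lambda+1\ge 0$, checks that the discriminant $-n^2+2n+4$ is negative for $n\ge 4$, and then handles $n=3$ by a separate numerical verification; your substitution $u=(n+1)\lambda$ and the observation $\psi_n(1)=\tfrac{3(n-3)}{n+1}\ge 0$, $\psi_n'(1)>0$ dispose of all $n\ge 3$ uniformly, which is a small but genuine simplification.
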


\begin{proof}
Using the explicit formulae~\eqref{p ell} and~\eqref{q ell}, we compute
\begin{align*}
\frac{p_{n-3}(n,\lambda)}{p_n(n,\lambda)} &= \frac{1}{24}\, n(n-1)(n-2) \left( \frac{\lambda}{1-\lambda}\right)^3 \frac{4(1-\lambda)+\left(4-(n+1)\lambda \right)^2}{1-\lambda+\left(1-(n+1)\lambda \right)^2} \\
&= \left( \frac{2}{3}\, n(n-1)(n-2) \left( \frac{\lambda}{1-\lambda}\right)^3 \right) \left( \frac{1}{16}\, \frac{4(1-\lambda)+\left(4-(n+1)\lambda \right)^2}{1-\lambda+\left(1-(n+1)\lambda \right)^2} \right)
\end{align*}
We now evaluate separately the above two factors, and show that they are both upper bounded by $1$. The first one can be estimated by resorting to the elementary inequality $n(n-2)\leq (n-1)^2$; one obtains that
\bbb
\frac{2}{3}\, n(n-1)(n-2) \left( \frac{\lambda}{1-\lambda}\right)^3 \leq \frac{2}{3}\, (n-1)^3 \left( \frac{\lambda}{1-\lambda}\right)^3\leq 1\qquad \forall\quad 0\leq \lambda\leq \widetilde{\lambda}_+(n)\, ,
\eee
where the last inequality can be easily proved by taking the cubic root of both sides. Upon simple algebraic manipulations, the inequality 
\bbb
\frac{1}{16}\, \frac{4(1-\lambda)+\left(4-(n+1)\lambda \right)^2}{1-\lambda+\left(1-(n+1)\lambda \right)^2} \leq 1\, ,
\eee
which is to be proved, becomes
\bbb
\frac{5}{4} (1+n)^2 \lambda ^2 - (3+2 n) \lambda - 1 \geq 0\, .
\eee
The discriminant of the second-degree polynomial on the left-hand side is $-n^2+2n+4$. This is negative for all $n\geq 4$, and hence in this case the above inequality is satisfied for all $0\leq \lambda\leq 1$ and a fortiori in the prescribed range. If $n=3$, an explicit calculation shows that the inequality holds true for $\lambda\leq \frac15$ or $\lambda\geq \frac14$, i.e.\ in particular for all $\lambda\geq \frac{1}{n+1}=\frac14$. This completes the proof.
\end{proof}

\begin{lemma} \label{max p ell lemma}
For all $n\geq 2$,
\bbb
p_{n-1}(n,\lambda) = \max_{\ell=0,\ldots, n+1} p_\ell(n,\lambda)\qquad \forall\quad \frac{1}{n+1}\leq \lambda\leq \lambda_+(n)\, .
\eee
\end{lemma}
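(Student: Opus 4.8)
The plan is to derive the lemma from Lemma~\ref{p sorted lemma1} together with a single pairwise comparison. Lemma~\ref{p sorted lemma1} already tells us that throughout the interval $\frac{1}{n+1}\le\lambda\le\lambda_+(n)$ one has $p_0(n,\lambda)\le p_1(n,\lambda)\le\cdots\le p_{n-1}(n,\lambda)$ and $p_{n-1}(n,\lambda)\ge p_n(n,\lambda)$, so that $p_{n-1}(n,\lambda)=\max_{\ell=0,\ldots,n}p_\ell(n,\lambda)$. Hence the only thing left to establish is the inequality $p_{n+1}(n,\lambda)\le p_{n-1}(n,\lambda)$, after which the claim is immediate.

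To do this I would evaluate the two relevant entries of~\eqref{p ell} in closed form. Taking $\ell=n+1$ makes the first summand in the bracket of~\eqref{p ell} vanish, leaving $p_{n+1}(n,\lambda)=\tfrac12(n+1)\lambda(1-\lambda)^n$; taking $\ell=n-1$ gives $p_{n-1}(n,\lambda)=\tfrac14 n\,\lambda(1-\lambda)^{n-2}\big(2(1-\lambda)+(2-(n+1)\lambda)^2\big)$. Since $p_{n-1}(n,\lambda)>0$ for $0<\lambda<1$, the desired inequality is equivalent to $P(\lambda)\ge 0$, where
\bb
P(\lambda)\coloneqq n\big(2(1-\lambda)+(2-(n+1)\lambda)^2\big)-2(n+1)(1-\lambda)^2 .
\ee
Expanding, one finds $P(\lambda)=(n-1)(n+1)(n+2)\,\lambda^2-(4n^2+2n-4)\,\lambda+(4n-2)$, a quadratic in $\lambda$ with positive leading coefficient for every $n\ge 2$.

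The last step is to verify that $P$ has no real zeros. A direct computation of its discriminant yields $\Delta=(4n^2+2n-4)^2-4(n-1)(n+1)(n+2)(4n-2)=4n\,(2+n-2n^2)$, which is strictly negative for all $n\ge 2$ (indeed $2+n-2n^2$ is decreasing in $n$ and equals $-4$ at $n=2$). Therefore $P(\lambda)>0$ for all real $\lambda$, so in fact $p_{n+1}(n,\lambda)<p_{n-1}(n,\lambda)$ on the whole of $(0,1)$, hence in particular on $[\tfrac{1}{n+1},\lambda_+(n)]$; together with the first paragraph this proves the lemma. The only mildly delicate point is the bookkeeping in expanding $P$ and in computing $\Delta$; everything else is routine, and it is worth noting that the bound obtained is in fact stronger than what the lemma requires, since the upper endpoint $\lambda_+(n)$ plays no role in this argument.
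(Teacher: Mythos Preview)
Your proof is correct and essentially identical to the paper's: both reduce via Lemma~\ref{p sorted lemma1} to the single comparison $p_{n-1}\ge p_{n+1}$, then show that the resulting quadratic in $\lambda$ has negative discriminant for $n\ge 2$. Your polynomial $P(\lambda)$ is exactly twice the paper's, so your discriminant $4n(2+n-2n^2)$ is four times theirs, $n(-2n^2+n+2)$, and the conclusion is the same.
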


\begin{proof}
Since Lemma~\ref{p sorted lemma1} holds in the prescribed interval in $\lambda$, we need only to prove that $p_{n-1}(n,\lambda)\geq p_{n+1}(n,\lambda)$. Indeed, one verifies that
\bbb
(n+1)^2(1-\lambda)^2\left( \frac{p_{n-1}(n,\lambda)}{p_{n+1}(n,\lambda)} - 1 \right) = \frac{1}{2} (n-1) (n+1) (n+2) \lambda^2 -\left(2n^2+n-2\right) \lambda +2n -1\geq 0\, ,
\eee
where the last inequality holds because the above second-degree polynomial in $\lambda$ has discriminant $n(-2n^2+n+2)<0$ as soon as $n\geq 2$.
\end{proof}

\begin{prop} \label{p sorted prop}
For all $n\geq 2$ and all $\frac{1}{n+1}\leq \lambda\leq \frac1n$, 
\bb
\begin{array}{rl}
\text{either} &\qquad p^\uparrow(n,\lambda) = \Big( p_0(n,\lambda),\, p_1(n,\lambda), \ldots,\, p_{n-3}(n,\lambda),\, p_n(n,\lambda),\, p_{n+1}(n,\lambda),\, p_{n-2}(n,\lambda),\, p_{n-1}(n,\lambda) \Big) \, , \\[1.5ex]
\text{or} &\qquad p^\uparrow(n,\lambda) = \Big( p_0(n,\lambda),\, p_1(n,\lambda), \ldots,\, p_{n-3}(n,\lambda),\, p_n(n,\lambda),\, p_{n-2}(n,\lambda),\, p_{n+1}(n,\lambda),\, p_{n-1}(n,\lambda) \Big) \, , \\[1.5ex]
\text{or} &\qquad p^\uparrow(n,\lambda) = \Big( p_0(n,\lambda),\, p_1(n,\lambda), \ldots,\, p_{n-3}(n,\lambda),\, p_{n-2}(n,\lambda),\, p_n(n,\lambda),\, p_{n+1}(n,\lambda),\, p_{n-1}(n,\lambda) \Big) \, .
\end{array}
\ee
When $n=2$, it is understood that only the last $4$ entries are to be taken into account.
\end{prop}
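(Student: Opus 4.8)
The plan is to read the statement as a purely combinatorial consequence of the three preceding results, Lemma~\ref{p sorted lemma1}, Lemma~\ref{p sorted lemma2}, and Lemma~\ref{max p ell lemma}. The first preliminary step is to check that the interval $\frac{1}{n+1}\le\lambda\le\frac1n$ of the present proposition is contained in the ranges on which those lemmas hold, i.e.\ that $\frac1n\le\lambda_+(n)$ for all $n\ge2$ and $\frac1n\le\widetilde\lambda_+(n)$ for all $n\ge3$. For $\widetilde\lambda_+$ this is immediate after clearing the denominator in~\eqref{lambda+ tilde}: the inequality reduces to $(3^{1/3}-2^{1/3})n\ge 3^{1/3}-2^{1/3}$, which holds for every $n\ge1$. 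For $\lambda_+$ one rewrites $\frac1n\le\lambda_+(n)$, using~\eqref{lambda+}, as $\sqrt{\tfrac{n-1}{3(n+1)}}\le\tfrac{2(n-1)}{3n}$ and squares; after cancelling $n-1>0$ this becomes $9n^2\le 12(n^2-1)$, i.e.\ $n^2\ge4$, with equality exactly at $n=2$. This square-root inequality is the one place where a genuine, if elementary, computation is needed.

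For $n\ge3$ I would then combine the lemmas as follows. Lemma~\ref{p sorted lemma1} supplies the chain $p_0\le p_1\le\dots\le p_{n-1}$ together with $p_{n-1}\ge p_n$ and $p_n\le p_{n+1}$; in particular $p_0\le\dots\le p_{n-3}\le p_{n-2}$. Lemma~\ref{p sorted lemma2} adds $p_{n-3}\le p_n$, so that $p_{n-3}$, and hence every earlier entry, lies below each of $p_{n-2}$, $p_n$, $p_{n+1}$. Lemma~\ref{max p ell lemma} identifies $p_{n-1}$ as the global maximum, so it dominates $p_{n-2}$, $p_n$, $p_{n+1}$ as well. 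Consequently the sorted vector must have the form $p^\uparrow(n,\lambda)=\bigl(p_0,\dots,p_{n-3},\,\star,\star,\star,\,p_{n-1}\bigr)$, where $(\star,\star,\star)$ is the ascending rearrangement of the triple $(p_{n-2},p_n,p_{n+1})$. Since $p_n\le p_{n+1}$ is already known, this rearrangement is fixed by where $p_{n-2}$ falls relative to $p_n$ and $p_{n+1}$, and the three possibilities --- $p_{n-2}$ above both, between them, or below both --- are precisely the three cases in the statement.

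The base case $n=2$ is handled directly, since Lemma~\ref{p sorted lemma2} is vacuous there (the index $n-3=-1$ does not occur) and is not required: the alphabet is $\{0,1,2,3\}$, Lemma~\ref{p sorted lemma1} gives $p_0\le p_1\ge p_2\le p_3$, and Lemma~\ref{max p ell lemma} gives $p_1=\max$, so $p^\uparrow(2,\lambda)=(\star,\star,\star,p_1)$ with $(\star,\star,\star)$ a sorting of $(p_0,p_2,p_3)$; using $p_2\le p_3$ this again produces exactly the three cases, with the leading $p_{n-3}$ slot dropped as the statement prescribes.

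I do not anticipate a real obstacle here: the proposition is essentially a logical consolidation of Lemmas~\ref{p sorted lemma1}--\ref{max p ell lemma}, and the only nontrivial point is the verification $\frac1n\le\lambda_+(n)$, which I would carry out carefully while keeping track of the equality case $n=2$.
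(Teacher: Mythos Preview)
Your proposal is correct and follows exactly the paper's approach: the paper's proof consists of the single sentence ``It suffices to combine Lemmata~\ref{p sorted lemma1},~\ref{p sorted lemma2} and~\ref{max p ell lemma}; note that $\frac1n \leq \min\{\lambda_+(n),\, \widetilde{\lambda}_+(n)\}$ for all $n\geq 2$,'' and you have simply supplied the details that the paper leaves implicit, including the explicit verification of the range inclusion and the combinatorial case-split on the position of $p_{n-2}$.
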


\begin{proof}
It suffices to combine Lemmata~\ref{p sorted lemma1},~\ref{p sorted lemma2} and~\ref{max p ell lemma}. Note that $\frac1n \leq \min\left\{\lambda_+(n),\, \widetilde{\lambda}_+(n)\right\}$ for all $n\geq 2$.
\end{proof}


\subsection{Majorisation} \label{majorisation subsec}

Let $r=(r_0,\ldots, r_N)$ and $s=(s_0,\ldots, s_N)$ be two probability distributions. We remind the reader that $r$ is said to be \emph{majorised} by $s$, and we write $r\prec s$, if
\bb
\sum_{\ell=0}^k r^\uparrow_\ell \geq \sum_{\ell=0}^k s^\uparrow_\ell\qquad \forall\ k=0,\ldots, N\, .
\label{maj}
\ee
Of course, the above inequality becomes an equality for $k=N$, since the elements of both distributions add up to $1$. For a complete introduction to the theory of majorisation, we refer the reader to the excellent monograph by Marshall and Olkin~\cite{MARSHALL-OLKIN}.

The goal of this subsection is to show that the two probability distributions $p(n,\lambda)$ and $q(n,\lambda)$ obey precisely a majorisation relation $p(n,\lambda)\prec q(n,\lambda)$. Our first step in this direction is a simple lemma.

\begin{lemma} \label{p ell < q ell}
For all $n\geq 2$,
\bb
q_\ell(n,\lambda)\leq p_\ell(n,\lambda) \qquad \forall\quad \ell=0,\ldots, n-1\, , \qquad \forall\quad 0\leq \lambda\leq \frac{2}{n+1}\, .
\ee
\end{lemma}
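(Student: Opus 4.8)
The plan is to compare the two closed forms~\eqref{p ell} and~\eqref{q ell} termwise. They differ only in the factor multiplying the common, manifestly nonnegative prefactor $\frac{1}{2(n+1)(1-\lambda)}\binom{n+1}{\ell}(1-\lambda)^\ell\lambda^{n-\ell}$: in $p_\ell$ it is $(1-\lambda)(n-\ell+1) + \left((n+1)(1-\lambda)-\ell\right)^2$, whereas in $q_\ell$ it is $\lambda\ell + \left((n+1)(1-\lambda)-\ell\right)^2$. Subtracting, the square term cancels and one is left with
\[
p_\ell(n,\lambda) - q_\ell(n,\lambda) = \frac{\binom{n+1}{\ell}(1-\lambda)^\ell\lambda^{n-\ell}}{2(n+1)(1-\lambda)}\,\Big((1-\lambda)(n-\ell+1) - \lambda\ell\Big).
\]
Since $0\le\lambda\le\frac{2}{n+1}<1$, the fraction in front is nonnegative, so the whole problem reduces to showing that the bracket is nonnegative.

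The next step is the one-line simplification $(1-\lambda)(n-\ell+1) - \lambda\ell = (n-\ell+1) - \lambda(n+1)$. As a function of $\ell$ this is affine and strictly decreasing, so on the index set $\{0,\ldots,n-1\}$ its minimum is attained at $\ell=n-1$, where it equals $2-\lambda(n+1)$. This is nonnegative exactly when $\lambda\le\frac{2}{n+1}$, which is precisely the hypothesis; hence $p_\ell(n,\lambda)\ge q_\ell(n,\lambda)$ for all $\ell\le n-1$ in the stated range, proving the claim.

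I do not expect any genuine difficulty: the statement collapses to a single affine inequality in $\ell$. The only items worth double-checking are that the common prefactor is indeed nonnegative on the closed interval (it is; the only subtlety being the boundary $\lambda=0$, where $p_\ell$ and $q_\ell$ both vanish for $\ell<n$), and that the threshold $\frac{2}{n+1}$ is sharp and coincides with the value below which $p_{n-1}(n,\lambda)\ge p_n(n,\lambda)$ as in Lemma~\ref{p sorted lemma1} --- a consistency check rather than part of the argument.
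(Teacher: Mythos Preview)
Your proof is correct and follows essentially the same approach as the paper: both arguments isolate the common nonnegative prefactor, cancel the squared term, and reduce the claim to the affine inequality $(n-\ell+1)-(n+1)\lambda\ge 0$, which for $\ell\le n-1$ is exactly the hypothesis $\lambda\le \frac{2}{n+1}$. The only cosmetic difference is that the paper computes $\frac{p_\ell}{q_\ell}-1$ whereas you compute $p_\ell-q_\ell$ directly.
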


\begin{proof}
Using the expressions~\eqref{p ell} and~\eqref{q ell}, one verifies that
\begin{align*}
&\left( \lambda\ell + \left( (n+1)(1-\lambda) - \ell \right)^2\right) \left( \frac{p_\ell(n,\lambda)}{q_\ell(n,\lambda)} -1\right) \\
&\qquad = (1-\lambda)(n-\ell+ 1) + \left( (n+1)(1-\lambda) - \ell \right)^2 - \lambda\ell - \left( (n+1)(1-\lambda) - \ell \right)^2 \\
&\qquad = n-\ell+1-(n+1)\lambda \\
&\qquad \geq 0\, ,
\end{align*}
where the last inequality holds provided that $\lambda\leq \frac{2}{n+1}$ and $\ell\leq n-1$.
\end{proof}

\begin{lemma} \label{p vs q sorted n+1 lemma}
For all $n\geq 3$,
\bb
q_{n+1}(n,\lambda) - p_{n-1}(n,\lambda) \geq \frac{(n+1)(n-2)}{4n(n-1)} \left(1-\frac{1}{n} \right)^{n} \qquad \forall\quad 0\leq \lambda\leq \frac1n\, .
\label{p vs q sorted n+1}
\ee
When $n=2$, we have instead that
\bb
q_{3}(2,\lambda) - p_{1}(2,\lambda) \geq \frac{\epsilon}{4} \qquad \forall\quad 0\leq \lambda\leq \frac12-\epsilon
\label{p vs q sorted n+1 special case n=2}
\ee
for any fixed $\epsilon>0$.
\end{lemma}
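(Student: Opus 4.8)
\emph{Proof plan.} The plan is to reduce both sides to explicit elementary functions of $\lambda$ and then run a one-variable analysis on $[0,1/n]$. First I would evaluate the two extreme probabilities in closed form. Putting $\ell=n+1$ in~\eqref{q ell}, using $\binom{n+1}{n+1}=1$ and $(n+1)(1-\lambda)-(n+1)=-(n+1)\lambda$, gives $q_{n+1}(n,\lambda)=\tfrac12(1-\lambda)^{n}\bigl(1+(n+1)\lambda\bigr)$; putting $\ell=n-1$ in~\eqref{p ell}, using $\binom{n+1}{n-1}=\tfrac{n(n+1)}{2}$ and $(n+1)(1-\lambda)-(n-1)=2-(n+1)\lambda$, gives $p_{n-1}(n,\lambda)=\tfrac{n\lambda}{4}(1-\lambda)^{n-2}\bigl(2(1-\lambda)+(2-(n+1)\lambda)^{2}\bigr)$. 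Subtracting and factoring out $\tfrac14(1-\lambda)^{n-2}$, I obtain $q_{n+1}(n,\lambda)-p_{n-1}(n,\lambda)=\tfrac14(1-\lambda)^{n-2}B(n,\lambda)$ with $B(n,\lambda)\coloneqq 2-2(2n+1)\lambda+2(2n-1)(n+1)\lambda^{2}-(n-1)(n+1)(n+2)\lambda^{3}$. A short evaluation gives $B(n,1/n)=\tfrac{(n-2)(n-1)(n+1)}{n^{3}}$, so the right-hand side of~\eqref{p vs q sorted n+1} is \emph{exactly} $\tfrac14(1-\lambda)^{n-2}B(n,\lambda)$ at $\lambda=1/n$. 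Hence, for $n\ge 3$, the lemma is equivalent to the claim that $G(\lambda)\coloneqq(1-\lambda)^{n-2}B(n,\lambda)$ is non-increasing on $[0,1/n]$ (so that its minimum there is $G(1/n)$).

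Next I would establish that monotonicity by a derivative estimate. Writing $G'(\lambda)=(1-\lambda)^{n-3}R(\lambda)$ with $R(\lambda)\coloneqq(1-\lambda)B'(n,\lambda)-(n-2)B(n,\lambda)$, and noting $(1-\lambda)^{n-3}>0$ on $[0,1/n]$, it suffices to prove $R\le 0$ there. Expanding, $R(\lambda)=-2(3n-1)+2(6n^{2}+n-3)\lambda-(n+1)^{2}(7n-6)\lambda^{2}+(n-1)(n+1)^{2}(n+2)\lambda^{3}$, whose derivative $R'(\lambda)=2(6n^{2}+n-3)-2(n+1)^{2}(7n-6)\lambda+3(n-1)(n+1)^{2}(n+2)\lambda^{2}$ is an upward-opening parabola with vertex $\lambda_{v}=\tfrac{7n-6}{3(n-1)(n+2)}$. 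The one structural fact I need is $\lambda_{v}>1/n$, which rearranges to $4n^{2}-9n+6>0$ and holds for every $n$ since the discriminant $81-96$ is negative. Therefore $R'$ is decreasing on $[0,1/n]\subseteq[0,\lambda_{v}]$, so $R'(\lambda)\ge R'(1/n)$ on $[0,1/n]$, and a short computation yields $R'(1/n)=\tfrac{(n-1)(n^{3}-4n^{2}+3n+6)}{n^{2}}$, which is positive for $n\ge 3$ (for $n\ge 4$ because $n^{3}-4n^{2}=n^{2}(n-4)\ge 0$, and it equals $6$ at $n=3$). Hence $R'\ge 0$ on $[0,1/n]$, so $R$ is non-decreasing there, so $R(\lambda)\le R(1/n)=-\tfrac{(n-1)^{2}(n+2)}{n^{3}}<0$. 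Thus $G'\le 0$ on $[0,1/n]$, whence $G(\lambda)\ge G(1/n)$; and since $\tfrac14 G(1/n)=\tfrac{(n+1)(n-2)}{4n(n-1)}(1-1/n)^{n}$, this is precisely~\eqref{p vs q sorted n+1}.

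The base case $n=2$ is self-contained. Here $q_{3}(2,\lambda)-p_{1}(2,\lambda)$ simplifies to $\phi(\lambda)\coloneqq\tfrac12-\tfrac52\lambda+\tfrac92\lambda^{2}-3\lambda^{3}$, which satisfies $\phi(1/2)=0$ and $\phi'(\lambda)=-\tfrac52+9\lambda-9\lambda^{2}=-9\bigl(\lambda-\tfrac12\bigr)^{2}-\tfrac14\le-\tfrac14$ for all real $\lambda$. Integrating from $\lambda$ up to $1/2$, $\phi(\lambda)=\phi(1/2)-\int_{\lambda}^{1/2}\phi'(t)\,dt\ge\tfrac14\bigl(\tfrac12-\lambda\bigr)$, and this is $\ge\tfrac{\epsilon}{4}$ as soon as $\lambda\le\tfrac12-\epsilon$, which is~\eqref{p vs q sorted n+1 special case n=2}.

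The main obstacle is that the naive approach — bounding $q_{n+1}(n,\lambda)$ from below and $p_{n-1}(n,\lambda)$ from above separately on $[0,1/n]$ — fails, since on that interval neither probability need be monotone (in fact $p_{n-1}$ typically has an interior maximum) and $B(n,\lambda)$ itself is not monotone once $n\ge 4$; one really must control the combined function $G$. What makes this tractable is that the factor $(1-\lambda)^{n-3}$ appearing in $G'$ is positive, so the sign of $G'$ on the short interval $[0,1/n]$ is governed by the explicit cubic $R$, whose own monotonicity there is pinned down by the single inequality $\lambda_{v}>1/n$. Everything else reduces to verifying the polynomial identities for $B(n,1/n)$, $R$, $R'(1/n)$ and $R(1/n)$, which is routine but must be done with care; the only genuine (if trivial) case check is confirming $R'(1/n)>0$, i.e.\ $n^{3}-4n^{2}+3n+6>0$, by hand at $n=3$.
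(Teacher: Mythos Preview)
Your proof is correct and follows essentially the same route as the paper's. Both arguments reduce to showing that $q_{n+1}(n,\lambda)-p_{n-1}(n,\lambda)$ is non-increasing on $[0,1/n]$ by controlling the same cubic factor (your $R(\lambda)$ coincides with the paper's $g_n(\lambda)$, since $q_{n+1}-p_{n-1}=\tfrac14 G$ and hence $\tfrac{\partial}{\partial\lambda}(q_{n+1}-p_{n-1})=\tfrac14(1-\lambda)^{n-3}R$); your vertex criterion $\lambda_v>1/n$, i.e.\ $4n^2-9n+6>0$, is literally the same inequality the paper extracts from $g_n''(1/n)\le 0$, and both then conclude via $R'(1/n)>0$ and $R(1/n)<0$. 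The $n=2$ case is also handled equivalently, your integration of the bound $\phi'\le-\tfrac14$ reproducing the paper's direct evaluation $\phi(1/2-\epsilon)=\tfrac{\epsilon}{4}+3\epsilon^3$.
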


\begin{proof}
For all $n\geq 2$, one verifies that
\begin{align*}
&\frac{\partial}{\partial \lambda} \left( q_{n+1}(n,\lambda) - p_{n-1}(n,\lambda) \right) \\
&\qquad = \frac{1}{4} (1-\lambda )^{n-3} \left(2-6 n+2 \left(6n^2+n-3\right) \lambda - (n+1)^2 (7n-6) \lambda^2 + (n+1)^2 \left(n^2+n-2\right) \lambda^3 \right) \\
&\qquad \eqqcolon \frac{1}{4} (1-\lambda )^{n-3} g_n(\lambda)\, .
\end{align*}
Now, since
\begin{align*}
\frac{d^2 g_n(\lambda)}{d\lambda} &= 2 (n+1)^2 \left(6 - 7 n+ 3 \left(n^2+n-2\right) \lambda \right) \\
&\leq 2 (n+1)^2 \left(6 - 7 n+ 3 \left(n^2+n-2\right) \frac1n \right) \\
&= -\frac2n (n+1)^2 \left( 4n^2 - 9n+6 \right) \leq 0
\end{align*}
for $0\leq \lambda\leq \frac1n$, the first derivative $\frac{d g_n(\lambda)}{d\lambda}$ of $g_n(\lambda)$ is a decreasing function of $\lambda$ in the same interval $\left[ 0, \frac1n\right]$ . Hence,
\bbb
\min_{\frac{1}{n+1}\leq \lambda\leq \frac1n} \frac{dg_n(\lambda)}{d\lambda} = \frac{dg_n(\lambda)}{d\lambda} \Big|_{\lambda=\frac1n} = \frac{n+1}{n^2}\left( 6 + n(n-1)(n-3) \right) \geq 0\, .
\eee
In turn, this implies that
\bbb
\max_{\frac{1}{n+1}\leq \lambda\leq\frac1n}g_n(\lambda) = g_n\left( \frac1n\right) = -\frac{(n-1)^2(n+2)}{n^3} \leq 0\, .
\eee
Thus, $q_{n+1}(n,\lambda) - p_{n-1}(n,\lambda)$ is decreasing in $\lambda$. Finally, we conclude from this that
\bbb
\min_{\frac{1}{n+1}\leq \lambda\leq\frac1n} \left\{q_{n+1}(n,\lambda) - p_{n-1}(n,\lambda)\right\} = q_{n+1}\left(n,\frac1n\right) - p_{n-1}\left(n,\frac1n\right) = \frac{(n+1)(n-2)}{4n(n-1)} \left(1-\frac{1}{n} \right)^{n} \, .
\eee
When $n=2$, we have instead that
\bbb
\min_{\frac{1}{3}\leq \lambda\leq\frac12-\epsilon} \left\{q_{3}(2,\lambda) - p_{1}(2,\lambda)\right\} = q_{3}\left(2,\,\frac12-\epsilon\right) - p_{1}\left(2,\,\frac12-\epsilon\right) = \frac{\epsilon}{4} + 3\epsilon^3\geq \frac{\epsilon}{4}\, .
\eee
This concludes the proof.
\end{proof}

\begin{prop} \label{majorisation prop}
For all $n\geq 2$,
\bb
p(n,\lambda)\prec q(n,\lambda)\qquad \forall\quad \frac{1}{n+1}\leq \lambda\leq \frac1n\, .
\ee
\end{prop}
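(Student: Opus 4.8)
The plan is to verify directly the defining partial-sum inequalities~\eqref{maj} for the pair $\big(p(n,\lambda),\,q(n,\lambda)\big)$ on the interval $\frac{1}{n+1}\leq\lambda\leq\frac1n$. Two structural inputs make this tractable. First, since $\frac1n\leq\lambda_+(n)$ for all $n\geq 2$, Proposition~\ref{q sorted prop} tells us that $q(n,\lambda)$ is \emph{already} sorted in ascending order throughout this range, so $q^\uparrow(n,\lambda)=q(n,\lambda)$. Second, by Proposition~\ref{p sorted prop} the sorted vector $p^\uparrow(n,\lambda)$ has first $n-2$ entries $p_0(n,\lambda),\ldots,p_{n-3}(n,\lambda)$, last entry $p_{n-1}(n,\lambda)$, and the three middle entries forming one of exactly three explicit permutations of $\big(p_{n-2}(n,\lambda),\,p_n(n,\lambda),\,p_{n+1}(n,\lambda)\big)$. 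Hence only a finite, case-controlled computation remains.

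I would phrase~\eqref{maj} in the equivalent complementary form: since $\sum_\ell p_\ell(n,\lambda)=\sum_\ell q_\ell(n,\lambda)=1$, the relation $p(n,\lambda)\prec q(n,\lambda)$ holds if and only if, for every $j=0,1,\ldots,n+2$, the sum of the $j$ largest entries of $p(n,\lambda)$ does not exceed the sum of the $j$ largest entries of $q(n,\lambda)$. The cases $j=0$ and $j=n+2$ are trivial. For $4\leq j\leq n+1$, the complement of the $j$ largest entries of $p(n,\lambda)$ is the prefix $p_0(n,\lambda),\ldots,p_{n+1-j}(n,\lambda)$ (note $n+1-j\leq n-3$), so Lemma~\ref{p ell < q ell} --- applicable because $\frac1n\leq\frac{2}{n+1}$ --- gives $\sum_{\ell=0}^{n+1-j}p_\ell(n,\lambda)\geq\sum_{\ell=0}^{n+1-j}q_\ell(n,\lambda)$ termwise, and subtracting from $1$ yields the desired inequality. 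The case $j=1$ is exactly Lemma~\ref{p vs q sorted n+1 lemma}, i.e.\ $p_{n-1}(n,\lambda)\leq q_{n+1}(n,\lambda)$.

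What is left is $j=2$ and $j=3$, which I would settle by going through the three orderings of Proposition~\ref{p sorted prop} one at a time. In each branch, after cancelling the common prefix contribution via Lemma~\ref{p ell < q ell} and using $p_{n-1}(n,\lambda)\leq q_{n+1}(n,\lambda)$, the required inequality collapses to a single comparison between two individual entries --- concretely $p_n(n,\lambda)\geq q_{n-2}(n,\lambda)$, or $p_{n-2}(n,\lambda)\leq q_n(n,\lambda)$, or $p_{n+1}(n,\lambda)\leq q_n(n,\lambda)$, depending on which ordering is in force (and in one branch no extra inequality is needed at all). Each of these is proven in the same style as Lemmata~\ref{p ell < q ell} and~\ref{p vs q sorted n+1 lemma}: using the binomial ratio~\eqref{ratio binomial} one writes the relevant difference as an explicitly signed prefactor times a low-degree polynomial in $\lambda$, factors it, and checks its sign on $[\frac{1}{n+1},\frac1n]$ by locating the roots; the discriminants turn out to have the favourable sign, with the small cases $n=2,3$ treated by a direct but elementary calculation and with due care at endpoints where a comparison becomes an equality.

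The main obstacle is this final step: it is not one clean inequality but a short web of entrywise comparisons whose precise form depends on which of the three orderings of $p^\uparrow(n,\lambda)$ occurs, so the argument has to keep all three branches alive and confirm the residual polynomial inequality in every one of them over the full interval $[\frac{1}{n+1},\frac1n]$. (Once Theorem~\ref{Behemoth thm} is established, Proposition~\ref{majorisation n>3 prop} will reveal that for $n\geq 4$ this picture rigidifies into a much more restrictive, essentially single-transfer majorisation, so the genuinely delicate work is concentrated in the small-$n$ regime.)
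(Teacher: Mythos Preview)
Your overall architecture matches the paper's: reduce to the cases $j=1,2,3$ (equivalently $k=n,n-1,n-2$) using Proposition~\ref{q sorted prop}, Proposition~\ref{p sorted prop}, and Lemma~\ref{p ell < q ell}, then handle each branch by an explicit computation. However, the specific reduction you describe for $j=2$ in the first ordering of Proposition~\ref{p sorted prop} does not go through. In that ordering the second largest entry of $p$ is $p_{n-2}(n,\lambda)$, so after ``cancelling'' $p_{n-1}(n,\lambda)\leq q_{n+1}(n,\lambda)$ you are left with the putative inequality $p_{n-2}(n,\lambda)\leq q_n(n,\lambda)$ --- one of the three single comparisons you list. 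But this inequality is \emph{false} on $[\tfrac{1}{n+1},\tfrac1n]$ for all $n\geq 4$: the paper itself later proves the \emph{reverse} direction $q_n(n,\lambda)\leq p_{n-2}(n,\lambda)$ (Lemma~\ref{p vs q sorted n lemma}), and indeed Proposition~\ref{majorisation n>3 prop} shows that for $n\geq 4$ one has $p^\uparrow_\ell(n,\lambda)\geq q^\uparrow_\ell(n,\lambda)$ for every $\ell\leq n$. Since by Proposition~\ref{p sorted prop n>3} this first ordering is precisely the one that occurs for all $n\geq 4$, your plan fails in the generic regime, not just at a boundary case.

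Concretely, the entire deficit between the top-$2$ sums is carried by the single gap $q_{n+1}-p_{n-1}$; there is no entrywise splitting that proves $p_{n-1}+p_{n-2}\leq q_{n+1}+q_n$. The paper therefore treats this branch (its case $k=n-1$, $p^\uparrow_n=p_{n-2}$) by writing $q_{n+1}+q_n-p_{n-1}-p_{n-2}=\tfrac12(1-\lambda)^{n-3}h_n(\lambda)$ for an explicit degree-$4$ polynomial $h_n$ and showing $h_n\geq 0$ on $[\tfrac{1}{n+1},\tfrac1n]$ via a derivative analysis. If you want to stay within an entrywise framework, the working alternative is to argue from the complementary side: after cancelling the prefix via Lemma~\ref{p ell < q ell} you need $p_n+p_{n+1}\geq q_{n-2}+q_{n-1}$, which does split as $p_n\geq q_{n-2}$ and $p_{n+1}\geq q_{n-1}$ (both true, cf.\ Lemmata~\ref{p vs q sorted n-2 lemma} and~\ref{p vs q sorted n-1 lemma}) --- but that is two comparisons, not one, and neither is the inequality $p_{n-2}\leq q_n$ you proposed.
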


\begin{proof}
According to~\eqref{maj}, we need to verify that
\bb
\sum_{\ell=0}^k p^\uparrow_\ell(n,\lambda) \geq \sum_{\ell=0}^k q^\uparrow_\ell(n,\lambda)\qquad \forall\quad k=0,\ldots, n\, ,\qquad \forall\quad \frac{1}{n+1}\leq \lambda\leq \frac1n\, ,
\label{majorisation proof eq1}
\ee
where we used the fact that the inequality corresponding to $k=n+1$ is in fact an equality, by normalisation. Using Proposition~\ref{p sorted prop} and Lemma~\ref{p ell < q ell}, and observing that $\frac1n\leq \frac{2}{n+1}$ for all $n\geq 1$, we see that (when $n\geq 3$) the cases $k=0,\ldots, n-3$ of~\eqref{majorisation proof eq1} are automatically satisfied. Exploiting again normalisation, we recast the difference of the two sides of~\eqref{majorisation proof eq1} (for arbitrary $k$) as
\bb
\sum_{\ell=0}^{k} p^\uparrow_\ell(n,\lambda) - \sum_{\ell=0}^{k} q^\uparrow_\ell(n,\lambda) = \sum_{\ell=k+1}^{n+1} q^\uparrow_\ell(n,\lambda) - \sum_{\ell=k+1}^{n+1} p^\uparrow_\ell(n,\lambda) = \sum_{\ell=k+1}^{n+1} q_\ell(n,\lambda) - \sum_{\ell=k+1}^{n+1} p^\uparrow_\ell(n,\lambda)\, ,
\label{majorisation proof eq2}
\ee
where the last identity follows from Proposition~\ref{q sorted prop}, once one observes that $\frac1n\leq \lambda_+(n)$ as long as $n\geq 2$. It remains to check the positivity of~\eqref{majorisation proof eq2} for $k=n,n-1,n-2$ and for $\frac{1}{n+1}\leq\lambda\leq \frac1n$. When $k=n-1$ and $k=2$ we have to reckon the (two) distinct possibilities offered by Proposition~\ref{p sorted prop}. This makes a total of $5$ different cases to vet. We break down the proof into the separate analysis of each of these cases.
\begin{itemize}
\item $k=n$. Thanks to Lemma~\ref{max p ell lemma} (or Proposition~\ref{p sorted prop}) and Lemma~\ref{p vs q sorted n+1 lemma},
\bbb
q_{n+1}(n,\lambda) - p^\uparrow_{n+1}(n,\lambda) = q_{n+1}(n,\lambda) - p_{n-1}(n,\lambda) \geq 0\, .
\eee
\item $k=n-1$ and $p^\uparrow_n(n,\lambda) = p_{n-2}(n,\lambda)$. Let us write
\begin{align*}
&q_{n+1}(n,\lambda) + q_{n}(n,\lambda) - p_{n-1}(n,\lambda) - p_{n-2}(n,\lambda) \\
&\quad = \frac{1}{2} (1-\lambda )^{n-3} \left(2 - 3(n+2) \lambda + \left(6+9n+n^2\right) \lambda^2 + \frac12 (n+1) \left( n^2 -10n -4 \right) \lambda^3 - \frac16 n (n+1) (n+2) (n-5) \lambda^4\right) \\
&\quad \eqqcolon \frac{1}{2} (1-\lambda )^{n-3} \, h_n(\lambda)\, .
\end{align*}
Now, since
\bbb
\frac{d^3 h_n(\lambda)}{d\lambda^3} = (n+1) \left( 3 \left( n^2 -10n -4 \right) - 4 n (n+2) (n-5) \lambda \right)
\eee
is a linear function of $\lambda$, we have that
\begin{align*}
\max_{\frac{1}{n+1}\leq \lambda\leq \frac1n} \frac{d^3 h_n(\lambda)}{d\lambda^3} &= \max\left\{ \frac{d^3 h_n(\lambda)}{d\lambda^3}\Big|_{\lambda=\frac{1}{n+1}},\, \frac{d^3 h_n(\lambda)}{d\lambda^3}\Big|_{\lambda=\frac1n} \right\} \\
&= \max\left\{ - n^3 - 15 n^2 - 2n - 12,\, -(n+1)\left( n^2+18n - 28\right)\right\} \\
&\leq 0\, .
\end{align*}
That is to say, the function $\frac{d^2h_n(\lambda)}{d\lambda^2}$ is non-increasing on $\left[ \frac{1}{n+1},\frac1n\right]$. Therefore,
\bbb
\min_{\frac{1}{n+1}\leq \lambda\leq \frac1n} \frac{d^2h_n(\lambda)}{d\lambda^2} = \frac{d^2h_n(\lambda)}{d\lambda^2} \Big|_{\lambda=\frac1n} = 3n^2-5n-4+\frac8n \geq n-4+\frac8n \geq 4\left(\sqrt2 -1\right) \geq 0\, ,
\eee
where we exploited the fact that $n\geq 2$. This shows that the function $\frac{d h_n(\lambda)}{d\lambda}$ is non-decreasing on $\left[ \frac{1}{n+1},\frac1n\right]$. We infer that
\bbb
\max_{\frac{1}{n+1}\leq \lambda\leq \frac1n} \frac{d h_n(\lambda)}{d\lambda} = \frac{d h_n(\lambda)}{d\lambda}\Big|_{\lambda=\frac{1}{n+1}} = -\frac{n+1}{6n^2}\left( n^2+2n+4\right) \leq 0\, .
\eee
Finally, given that $h_n(\lambda)$ has been shown to be non-increasing on $\left[ \frac{1}{n+1},\frac1n\right]$, we have that
\bbb
\min_{\frac{1}{n+1}\leq \lambda\leq \frac1n} h_n(\lambda) = h_n\left( \frac1n \right) = \frac{1}{6n^3} \left(n-1\right)\left(n-2\right) \left(2n-1\right) \geq 0\, ,
\eee
which shows that $q_{n+1}(n,\lambda) + q_{n}(n,\lambda) - p_{n-1}(n,\lambda) - p_{n-2}(n,\lambda)$ for all $\frac{1}{n+1}\leq \lambda\leq \frac1n$ and concludes the analysis of this case.

\item $k=n-1$ and $p^\uparrow_n(n,\lambda) = p_{n+1}(n,\lambda)$. We compute
\begin{align*}
&q_{n+1}(n,\lambda) + q_{n}(n,\lambda) - p_{n-1}(n,\lambda) - p_{n+1}(n,\lambda) \\
&\qquad = (1-\lambda )^{n-2} \left( 1 - \frac12 \left(4n+5 \right) \lambda +\frac12 \left( 3n^2+6n+4 \right) \lambda^2 - \frac14 (n+1)^2(n+2) \lambda^3 \right) \\
&\qquad \eqqcolon (1-\lambda )^{n-2} \, s_n(\lambda)\, .
\end{align*}

Let us first deal with the case $n=2$; note that $s_2(\lambda) = 1-\frac{13}{2}\, \lambda + 14 \lambda^2 - 9 \lambda^3$. Now, $\frac{d s_2(\lambda)}{d\lambda} = -\frac{13}{2} + 28\lambda - 27\lambda^2\geq 0$ for $0.351\approx \frac{28-\sqrt{82}}{54}\leq \lambda\leq \frac{28+\sqrt{82}}{54} \approx 0.686$, and $\frac{ds_2(\lambda)}{d\lambda}\leq 0$ outside of that interval. Hence,
\bbb
\min_{\frac13 \leq \lambda\leq \frac12} s_2(\lambda) = s_2\left( \frac{28-\sqrt{82}}{54}\right) \approx 0.054 \geq 0\, .
\eee

We now consider the case where $n\geq 3$. Since
\bbb
\frac{d^2s_n(\lambda)}{d\lambda^2} = 3n^2+6n+4 - \frac32 (n+1)^2 (n+2) \lambda
\eee
is decreasing in $\lambda$, we obtain that
\bbb
\min_{\frac{1}{n+1}\leq \lambda\leq \frac1n} \frac{d^2s_n(\lambda)}{d\lambda^2} = \frac{d^2s_n(\lambda)}{d\lambda^2} \Big|_{\lambda=1/n} = \frac32\, n^2 - \frac3n - \frac72 \geq \frac{27}{2} - 1 - \frac72 = 9 \geq 0\, ,
\eee
where we used the fact that $n\geq 3$. This proves that $\frac{ds_n(\lambda)}{d\lambda}$ is non-decreasing on $\left[ \frac{1}{n+1},\frac1n\right]$. Hence,
\bbb
\min_{\frac{1}{n+1}\leq \lambda\leq \frac1n} \frac{d s_n(\lambda)}{d\lambda} = \frac{d s_n(\lambda)}{d\lambda} \Big|_{\lambda=\frac{1}{n+1}} = \frac{n(n-3)}{4(n+1)} \geq 0\, ,
\eee
where the last estimate holds because $n\geq 3$. We have just shown that $s_n(\lambda)$ is non-decreasing in the interval $\left[ \frac{1}{n+1},\frac1n\right]$. We infer that
\bbb
\min_{\frac{1}{n+1}\leq \lambda\leq \frac1n} s_n(\lambda) = s_n\left( \frac{1}{n+1} \right) = \frac{n(n-1)}{4(n+1)^2}\geq 0\, ,
\eee
which shows that $q_{n+1}(n,\lambda) + q_{n}(n,\lambda) - p_{n-1}(n,\lambda) - p_{n+1}(n,\lambda)\geq 0$ for all $n\geq 2$ and all $\frac{1}{n+1}\leq \lambda\leq \frac1n$, thus completing the argument for this case.

\item $k=n-2$ and $\left\{ p^\uparrow_{n-1}(n,\lambda),\, p^\uparrow_n(n,\lambda) \right\} = \left\{ p_{n+1}(n,\lambda),\, p_{n-2}(n,\lambda) \right\}$. The relevant quantity is now
\begin{align*}
&q_{n+1}(n,\lambda) + q_{n}(n,\lambda) + q_{n-1}(n,\lambda) - p_{n-1}(n,\lambda) - p_{n+1}(n,\lambda) - p_{n-2}(n,\lambda) \\
&\qquad = (1-\lambda)^{n-3}\left( 1-\left(\frac{7}{2}+n\right)\lambda -\frac{1}{4}\left(n^2-15n-18\right)\lambda ^2+\frac{1}{2}\left(n^3-2n^2-7n-5\right) \lambda ^3-\frac{1}{12} (n+1)^2 (n+2) (n-3) \lambda^4 \right) \\
&\qquad \eqqcolon (1-\lambda)^{n-3}\, t_n(\lambda)\, .
\end{align*} 

To study the polynomial $t_n(\lambda)$, let us treat separately the cases $n=2$ and $n\geq 3$. Note that
\bbb
t_2(\lambda) = (1-\lambda)^2\left( 1 - \frac12 \lambda \left(7 - 6\lambda \right) \right) \geq 0\qquad \forall\quad \frac13\leq \lambda\leq \frac12\, ,
\eee
where the last inequality is a consequence of the fact that the function $\lambda\mapsto 1 - \frac12 \lambda \left(7 - 6\lambda \right)$ is decreasing on $\left(-\infty,\frac{7}{12}\right]\supset \left[\frac13, \frac12\right]$ and vanishes for $\lambda=\frac12$.

We now look at the case where $n\geq 3$. Since $\frac{d^4 t_n(\lambda)}{d\lambda^4} = -2(n+1)^2 (n+2)(n-3)\leq 0$, the function $\frac{d^2 t_n(\lambda)}{d\lambda^2}$ is concave. Hence,
\begin{align*}
\min_{\frac{1}{n+1}\leq \lambda\leq \frac1n} \frac{d^2 t_n(\lambda)}{d\lambda^2} &= \min\left\{ \frac{d^2 t_n(\lambda)}{d\lambda^2} \Big|_{\lambda=\frac{1}{n+1}} ,\, \frac{d^2 t_n(\lambda)}{d\lambda^2} \Big|_{\lambda=\frac1n} \right\} \\
&= \min\left\{ \frac{n}{2(n+1)}\left( 3n^2+2n+5 \right),\, \frac{1}{2n^2} \left(3 n^4+n^3-10 n^2-4 n+12\right) \right\} \\
&\geq 0\, ,
\end{align*}
where in the last step we used the fact that $n\geq 3$. We deduce that $\frac{dt_n(\lambda)}{d\lambda}$ is non-decreasing on $\left[ \frac{1}{n+1},\, \frac1n\right]$, in turn implying that
\bbb
\max_{\frac{1}{n+1}\leq \lambda\leq \frac1n} \frac{d t_n(\lambda)}{d\lambda} = \frac{d t_n(\lambda)}{d\lambda}\Big|_{\lambda=1/n} = -\frac{(n-1)}{n^4} \left(2 n^3 - 2 n^2 - 7 n+12\right) \leq 0\, ,
\eee
where the last inequality holds because $2 n^3 - 2 n^2 - 7 n+12\geq 4 n^2 - 7 n+12\geq 9 \geq 0$ for $n\geq 3$. Since we have just shown that $t_n(\lambda)$ is non-increasing on $\left[ \frac{1}{n+1},\, \frac1n\right]$, we conclude that
\bbb
\min_{\frac{1}{n+1}\leq \lambda\leq \frac1n} t_n(\lambda) = t_n\left( \frac1n \right) = \frac{(n-2) (n-1) (2 n^2 - 2 n+3)}{12 n^4} \geq 0\, ,
\eee
concluding the argument.

\item $k=n-2$ and $\left( p^\uparrow_{n-1}(n,\lambda),\, p^\uparrow_n(n,\lambda) \right) = \left( p_{n}(n,\lambda),\, p_{n+1}(n,\lambda) \right)$. The analysis of this last case is much simpler. It suffices to verify that
\bbb
q_{n+1}(n,\lambda) + q_{n}(n,\lambda) + q_{n-1}(n,\lambda) - p_{n-1}(n,\lambda) - p_{n+1}(n,\lambda) - p_{n}(n,\lambda) = \frac{1}{4} n(n-1) \lambda^2 (1-\lambda )^{n-2} \geq 0\, .
\eee
This completes the proof.

\end{itemize}
\end{proof}

\subsection{Concluding the proof} \label{concluding subsec}

\begin{proof}[Proof of Theorem~\ref{Behemoth thm}]
Let us partition the $(0,1]$ into the three regions
\bb
(0,1] = \left(0,\frac12-\epsilon\right] \cup \left[\frac12-\epsilon,\, \frac12 + \epsilon\right] \cup \left[\frac12 + \epsilon,\, 1\right] ,
\ee
where $\epsilon >0$ is a small constant to be determined later. In the third region, i.e.\ for $\frac12+\epsilon\leq \lambda\leq 1$, the claim follows elementarily from the ansatz $\sigma=\ketbra{0}$, which brings us back to the case of the pure loss channel. Thanks to~\eqref{pure loss Q}, we know that
\bbb
Q\left(\Phi_{\lambda,\, \ket{0}\!\bra{0}},\, \frac12 \right) = Q\left(\mathcal{E}_\lambda,\, \frac12\right) = \max\left\{ g\left(\frac{\lambda}{2}\right) - g\left( \frac{1-\lambda}{2} \right),\, 0 \right\} \geq g\left(\frac12 \left(\frac12 +\epsilon\right) \right) - g\left( \frac12 \left(\frac12-\epsilon\right) \right) > 0
\eee
as long as $\epsilon>0$. In the second region, that is, for $\frac12-\epsilon\leq \lambda\leq \frac12+\epsilon$, one can use Example~\ref{positive capacity at 1/2 ex} and some standard continuity arguments. Namely, consider the state $\sigma=\xi(1/3)$ as defined by~\eqref{xi eta}; specialising~\eqref{positive capacity 1/2} we find that 
\bbb
Q\left( \Phi_{1/2,\, \xi(1/3)},\, \frac12\right) \geq I_{\mathrm{coh}}(A\rangle B)_{\zeta_{AB}(1/2,\,1/3)} \approx 0.07392 > 0 \, ,
\eee
where $\zeta_{AB}(\lambda,\eta)$ is the reduced state on $AB$ corresponding to~\eqref{zeta}. The density matrices $\zeta_{AB}(\lambda,\, 1/3)$ clearly depend continuously on $\lambda$; moreover, they live in a qubit--qutrit system for all values of $\lambda$. Hence, the Alicki--Fannes--Winter inequality~\cite{Alicki-Fannes, tightuniform} implies that $I_{\mathrm{coh}}(A\rangle B)_{\zeta_{AB}(\lambda,\,1/3)}$ is a continuous function of $\lambda$. By choosing $\epsilon>0$ small enough, we can therefore insure that
\bb
Q\left( \Phi_{\lambda,\, \xi(1/3)},\, \frac12\right) \geq I_{\mathrm{coh}}(A\rangle B)_{\zeta_{AB}(\lambda,\,1/3)} \geq c_1 \qquad \forall\quad \frac12-\epsilon\leq \lambda\leq \frac12 +\epsilon\, ,
\ee
where $c_1>0$ is a universal constant.

We are thus left with the first region, corresponding to $0<\lambda\leq \frac12-\epsilon$. We further split it according to
\bbb
\left(0,\, \frac12 - \epsilon\right] = \left[ \frac13,\, \frac12-\epsilon\right] \cup \bigcup_{n=3}^\infty \left[\frac{1}{n+1},\, \frac1n \right] .
\eee
Thanks to Proposition~\ref{Phi n Q estimate prop}, we need only to show that
\bb
\mathcal{I}\left(n,\lambda\right) = H(p(n,\lambda)) - H(q(n,\lambda)) \geq c \qquad \forall\quad \frac{1}{n+1}\leq \lambda\leq \frac1n
\label{Behemoth proof Jn}
\ee
for all $n\geq 3$ and for some universal constant $c_2>0$, and also that
\bb
\mathcal{I}\left(2,\lambda\right) = H(p(2,\lambda)) - H(q(2,\lambda)) \geq c_3 \qquad \forall\quad \frac{1}{3}\leq \lambda\leq \frac12-\epsilon
\label{Behemoth proof J2}
\ee
for some other constant $c_3>0$.

Our main tool here will be a beautiful inequality proved by Ho and Verd\'u~\cite[Theorem~3]{Ho2010}. This states that whenever $r$ and $s$ are two probability distributions such that $r\prec s$, it holds that
\bb
H(s) - H(r) \geq D\left(s^\uparrow \big\|\, r^\uparrow\right)\, ,
\label{Ho-Verdu}
\ee
where $D(u\|v) \coloneqq \sum_\ell u_\ell \log_2 \frac{u_\ell}{v_\ell}$ is the Kullback--Leibler divergence, i.e.\ the relative entropy. Let us first deal with the case $n\geq 3$. We obtain that
\begin{align*}
\mathcal{I}\left(n,\lambda\right) &= H(p(n,\lambda)) - H(q(n,\lambda)) \\
&\textgeq{1} D\big( q^\uparrow (n,\lambda) \big\| p^\uparrow (n,\lambda) \big) \\
&\textgeq{2} \frac{1}{2 \ln 2} \left\|q^\uparrow (n,\lambda) - p^\uparrow (n,\lambda) \right\|_1^2 \\
&\textgeq{3} \frac{2}{\ln 2} \left|q^\uparrow_{n+1} (n,\lambda) - p^\uparrow_{n+1} (n,\lambda) \right|^2 \\
&\texteq{4} \frac{2}{\ln 2} \left|p_{n-1} (n,\lambda) - q_{n+1} (n,\lambda) \right|^2 \\
&\textgeq{5} \frac{2}{\ln 2} \left(\frac{(n+1)(n-2)}{4n(n-1)}\right)^2 \left(1-\frac{1}{n} \right)^{2n} \\
&\textgeq{6} \frac{32}{6561 \ln 2} > 0 \, .
\end{align*}
Here, 1 comes from applying the Ho--Verd\'u inequality~\eqref{Ho-Verdu} to the case of $r=p(n,\lambda)$ and $s=q(n,\lambda)$, which is possible by Proposition~\ref{majorisation prop}. The estimate in 2 is just Pinsker's inequality (see~\cite[p.58]{CSISZAR-KOERNER} and references therein). In 3 we estimated the total variation or $L_1$ distance between $p^\uparrow (n,\lambda)$ and $q^\uparrow (n,\lambda)$ from below as twice their $L_\infty$ distance, namely
\bbb
\left\|q^\uparrow (n,\lambda) - p^\uparrow (n) \right\|_1 \geq 2 \max_{\ell=0,\ldots, n+1} \left|q^\uparrow_\ell (n,\lambda) - p^\uparrow_\ell (n) \right| \geq \left| q^\uparrow_{n+1} (n,\lambda) - p^\uparrow_{n+1} (n) \right| \, .
\eee
Then, in 4 we used Proposition~\ref{q sorted prop} and Lemma~\ref{max p ell lemma}, together with the observation that $\lambda_+(n)\geq \frac1n$ for all $n\geq 2$. The estimate in 5 follows from~\eqref{p vs q sorted n+1}, while in 6 we noted that both $n\mapsto \frac{(n+1)(n-2)}{4n(n-1)}$ and $n\mapsto \left(1-\frac{1}{n} \right)^{n}$ are increasing function of $n$ for $n\geq 3$, and therefore their product can be lower bounded by evaluating it for $n=3$. Thus,~\eqref{Behemoth proof Jn} holds with $c_3=\frac{32}{6561 \ln 2}$.

It remains to deal with the $n=2$ case. We can repeat the same reasoning as above all the way until step 5, where we have to use instead the estimate in~\eqref{p vs q sorted n+1 special case n=2}, thus obtaining
\bbb
\mathcal{I}\left(2,\lambda\right) \geq \frac{\epsilon^2}{8 \ln 2} \eqqcolon c_2 > 0\, ,
\eee
which proves~\eqref{Behemoth proof J2}. Setting $c\coloneqq \min\{c_1,c_2,c_3\}$ completes the argument.
\end{proof}

\begin{rem} \label{value c rem}
The optimal constant in Theorem~\ref{Behemoth thm} can be expressed as a function of the energy constraint $N$ as
\bb
c(N) \coloneqq \inf_{0<\lambda\leq 1} \sup_{\sigma} Q\left( \Phi_{\lambda, \sigma},\, N\right) ,
\label{c}
\ee
where the supremum is over all single-mode states $\sigma$. Using the explicit form of the Alicki--Fannes--Winter inequality~\cite{Alicki-Fannes, tightuniform} could yield the explicit, rigorous,\footnote{A small note on the meaning of this word for us. The estimate reported here for $c(1/2)$ is found through a numerical search of the zero of a sum of elementary functions, carried out with Wolfram Mathematica. Apart from this numerical step, whose accuracy is guaranteed by the programme's algorithms, it does not involve any other non-analytical insight, such as -- for instance -- `verifying' that a certain function is positive in a certain interval by looking at its graph. A claim of this latter kind would be very far away from a rigorous proof, as it involves keeping under control the infinite number of points that form an interval.} yet very small lower bound $c(1/2) \geq 5.133\times 10^{-6}$. Numerical investigations suggest that this is very far away from the truth, and that one could take at least $c(1/2) \gtrsim 0.066$, which is four orders of magnitude larger than the former estimate. This must be confronted with the `trivial' upper bound descending from Lemma~\ref{universal upper bound Q}, which reads $c(1/2)\leq g(1/2) \approx 1.377$.
\end{rem}

\begin{rem} \label{value c0 rem}
It is perhaps more interesting to look at the slightly different quantities
\bb
c_0(N) \coloneqq \lim_{\lambda\to 0^+} \sup_{\substack{\sigma,\\ 0 < \lambda' \leq \lambda}} Q\left( \Phi_{\lambda',\, \sigma},\, N\right) ,
\label{c0}
\ee
which represent the best-case-scenario quantum communication rates when the transmissivity approaches $0$ but the single-mode environment state $\sigma$ is chosen optimally. Since
\bbb
\lim_{n\to\infty} \frac{(n+1)(n-2)}{4n(n-1)} \left(1-\frac{1}{n} \right)^{n} = \frac{1}{4e}\, , 
\eee
it can be seen that our argument yields the rigorous estimate
\bbb
c_0(1/2) \geq \frac{1}{8 e^2\ln 2} \approx 0.0244\, .
\eee
Numerical investigations produce a substantially higher bound $c_0(1/2) \gtrsim 0.133$, which again must be confronted with the upper bound $c_0(1/2)\leq g(1/2) \approx 1.377$.
\end{rem}

\subsection{Further considerations} \label{further considerations subsec}

It turns out that one can get rid of the multiple options in Proposition~\ref{p sorted prop} if one is willing to exclude the special cases $n=2$ and $n=3$. When this is done something more happens. Namely, the majorisation $p(n,\lambda)\prec q(n,\lambda)$ of Proposition~\ref{majorisation prop} is of a very special type. It actually holds that $p^\uparrow_\ell(n,\lambda) \geq q^\uparrow_\ell(n,\lambda)$ for all $n\geq 4$ and $\frac{1}{n+1}\leq \lambda\leq \frac1n$. Throughout this section we prove these claims.

\begin{lemma} \label{p sorted lemma3}
For all $n\geq 4$,
\bb
p_{n+1}(n,\lambda) \leq p_{n-2}(n,\lambda) \qquad \forall\quad \frac{1}{n+1}\leq \lambda\leq 1\, .
\ee
\end{lemma}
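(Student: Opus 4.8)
The plan is to reduce the inequality to the nonnegativity of a single cubic polynomial in $\lambda$ and then dispatch that cubic. First I would substitute $\ell=n+1$ and $\ell=n-2$ into the explicit formula~\eqref{p ell}. Using $\binom{n+1}{n+1}=1$, $\binom{n+1}{n-2}=\binom{n+1}{3}=\frac{(n+1)n(n-1)}{6}$, and the identities $(n+1)(1-\lambda)-(n+1)=-(n+1)\lambda$ and $(n+1)(1-\lambda)-(n-2)=3-(n+1)\lambda$, this collapses to $p_{n+1}(n,\lambda)=\frac{(n+1)\lambda(1-\lambda)^{n}}{2}$ and
\[
p_{n-2}(n,\lambda)=\frac{n(n-1)\lambda^{2}(1-\lambda)^{n-3}}{12}\Big(3(1-\lambda)+\big(3-(n+1)\lambda\big)^{2}\Big).
\]
Forming $p_{n-2}-p_{n+1}$ and pulling out the common factor $\frac{\lambda(1-\lambda)^{n-3}}{12}$, which is nonnegative on $[\frac1{n+1},1]$ since $n\ge4$ makes $n-3\ge1$, the claimed inequality becomes $\phi_n(\lambda)\ge0$, where $\phi_n(\lambda)\coloneqq n(n-1)\lambda\big(3(1-\lambda)+(3-(n+1)\lambda)^{2}\big)-6(n+1)(1-\lambda)^{3}$.

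Next I would expand $\phi_n$ into a cubic in $\lambda$:
\[
\phi_n(\lambda)=-6(n+1)+6(2n^{2}+n+3)\lambda-3(2n^{3}+n^{2}+3n+6)\lambda^{2}+(n+1)(n^{3}-n+6)\lambda^{3},
\]
which has positive leading coefficient. I would then examine $\phi_n'$, a quadratic with positive leading coefficient, and compute that its discriminant equals $-36n\,(2n^{4}-9n^{3}-6n^{2}+7n+6)$. For $n\ge5$ the factor $2n^{4}-9n^{3}-6n^{2}+7n+6$ is positive (it equals $16$ at $n=5$, and for $n\ge6$ it exceeds $n^{2}(2n^{2}-9n-6)>0$), so this discriminant is negative; hence $\phi_n'>0$ on all of $\R$ and $\phi_n$ is strictly increasing on $[\frac1{n+1},1]$. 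Then $\phi_n(\lambda)\ge\phi_n(\tfrac1{n+1})$, and a direct evaluation using $1-\tfrac1{n+1}=\tfrac{n}{n+1}$ and $3-(n+1)\tfrac1{n+1}=2$ gives $\phi_n(\tfrac1{n+1})=\tfrac{n(n-4)}{n+1}>0$ for $n\ge5$, settling these cases.

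The case $n=4$ requires separate treatment, since there the discriminant of $\phi_4'$ is positive. Here I would just compute $\phi_4(\lambda)=330\lambda^{3}-486\lambda^{2}+234\lambda-30=6(5\lambda-1)(11\lambda^{2}-14\lambda+5)$; the factorisation is signalled by the fact, consistent with $\tfrac{n(n-4)}{n+1}$ vanishing at $n=4$, that $\lambda=\tfrac14$ is a root. Since $11\lambda^{2}-14\lambda+5$ has discriminant $196-220=-24<0$ it is everywhere positive, so $\phi_4(\lambda)\ge0$ exactly when $5\lambda-1\ge0$, i.e.\ for $\lambda\ge\tfrac14=\tfrac1{n+1}$, which is the whole interval in question.

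The main obstacle is simply the algebraic bookkeeping in the middle step: one must expand $\phi_n$ correctly and then compute the discriminant of $\phi_n'$ by multiplying out $(2n^{3}+n^{2}+3n+6)^{2}$ and $2(n+1)(n^{3}-n+6)(2n^{2}+n+3)$, checking that the degree-$6$ terms cancel and that what remains, $-2n^{5}+9n^{4}+6n^{3}-7n^{2}-6n$, is negative for $n\ge5$. Nothing here is conceptually subtle, but this is the step where an arithmetic slip would derail the argument; the reduction to $\phi_n$, the monotonicity deduction, and the $n=4$ factorisation are each short once the expansions are in place.
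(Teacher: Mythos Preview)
Your argument is correct, apart from a harmless slip in the $n=4$ paragraph: the root of $5\lambda-1$ is $\lambda=\tfrac15$, not $\tfrac14$, and for $n=4$ one has $\tfrac1{n+1}=\tfrac15$; with that correction the factorisation $\phi_4(\lambda)=6(5\lambda-1)(11\lambda^2-14\lambda+5)$ indeed gives $\phi_4\geq 0$ exactly on $[\tfrac15,1]=[\tfrac1{n+1},1]$, as required.

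The paper takes a slightly different route. Rather than forming the difference $p_{n-2}-p_{n+1}$ and reducing to a cubic $\phi_n$, it works with the \emph{ratio} $p_{n-2}/p_{n+1}=\tfrac{n(n-1)}{6(n+1)}\,g_n(\lambda)$, where $g_n(\lambda)=\tfrac{\lambda}{(1-\lambda)^2}\big(3+\tfrac{(3-(n+1)\lambda)^2}{1-\lambda}\big)$. The derivative $g_n'$ then has a numerator that is only quadratic in $\lambda$, with the much simpler discriminant $9-4n$, negative for all $n\geq 3$. So the paper's monotonicity argument covers every $n\geq 4$ at once and the evaluation at $\lambda=\tfrac1{n+1}$ gives $\tfrac{p_{n-2}}{p_{n+1}}-1=\tfrac{(n+1)(n-4)}{6n^2}\geq 0$ uniformly. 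Your approach is perfectly valid but pays for using the difference rather than the ratio: the extra factor of $(1-\lambda)^3$ inflates the polynomial degree, the discriminant of $\phi_n'$ becomes a quartic in $n$, and you are forced to split off $n=4$ by hand. The trade-off is that your computations are entirely polynomial (no rational functions), which some readers may find more transparent.
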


\begin{proof}
Employing the expressions~\eqref{p ell}, we see that
\bbb
\frac{p_{n-2}(n,\lambda)}{p_{n+1}(n,\lambda)} = \frac{n(n-1)}{6(n+1)} \frac{\lambda}{(1-\lambda)^2} \left( 3 + \frac{1}{1-\lambda}\left( 3 - (n+1)\lambda\right)^2 \right) \eqqcolon \frac{n(n-1)}{6(n+1)}\, g_{n}(\lambda)\, .
\eee
It is not difficult to see that
\bbb
\frac{dg_n(\lambda)}{d\lambda} = \frac{6}{(1-\lambda)^4}\left(\frac12 (n^2-2)\lambda^2 - (2n-1)\lambda + 2 \right) \geq 0 \qquad \forall\quad 0\leq \lambda\leq 1\, ,
\eee
because the discriminant of the second-degree polynomial on the right-hand side equals $9-4n$ and is therefore negative as long as $n\geq 3$. Thus,
\bbb
\min_{\frac{1}{n+1}\leq\lambda\leq 1} \left\{\frac{p_{n-2}(n,\lambda)}{p_{n+1}(n,\lambda)} -1\right\} = \frac{n(n-1)}{6(n+1)}\, g_n\left( \frac{1}{n+1}\right) -1 =\frac{1}{6n^2}(n+1)(n-4) \geq 0
\eee
for all $n\geq 4$.
\end{proof}

\begin{prop} \label{p sorted prop n>3}
For all $n\geq 4$, 
\bb
p^\uparrow(n,\lambda) = \Big( p_0(n,\lambda),\, p_1(n,\lambda), \ldots,\, p_{n-3}(n,\lambda),\, p_n(n,\lambda),\, p_{n+1}(n,\lambda),\, p_{n-2}(n,\lambda),\, p_{n-1}(n,\lambda) \Big) \qquad \forall \ \frac{1}{n\!+\!1}\leq \lambda\leq \widetilde{\lambda}_+(n)\, ,
\label{p sorted n>3}
\ee
where $\widetilde{\lambda}_+(n)$ is defined by~\eqref{lambda+ tilde}. In other words, for the stated range of values of $\lambda$ the probability vector $p(n,\lambda)$ can be sorted in ascending order by exchanging the last two pairs of entries.
\end{prop}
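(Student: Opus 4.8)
The plan is to assemble the three one-sided comparisons already established, namely Lemma~\ref{p sorted lemma1}, Lemma~\ref{p sorted lemma2}, and Lemma~\ref{p sorted lemma3}, into the single ascending chain
\begin{equation*}
p_0(n,\lambda) \le p_1(n,\lambda) \le \cdots \le p_{n-3}(n,\lambda) \le p_n(n,\lambda) \le p_{n+1}(n,\lambda) \le p_{n-2}(n,\lambda) \le p_{n-1}(n,\lambda)\, ,
\end{equation*}
which is precisely the assertion~\eqref{p sorted n>3}. Reading the chain from left to right: the run $p_0 \le p_1 \le \cdots \le p_{n-3}$, the link $p_n \le p_{n+1}$, and the link $p_{n-2} \le p_{n-1}$ are all part of the statement $p_0 \le p_1 \le \cdots \le p_{n-1}$ (together with $p_n \le p_{n+1}$) proved in Lemma~\ref{p sorted lemma1}; the middle link $p_{n-3} \le p_n$ is Lemma~\ref{p sorted lemma2}; and the link $p_{n+1} \le p_{n-2}$ is Lemma~\ref{p sorted lemma3}, which is exactly the ingredient that requires $n\geq 4$ (and holds on all of $[\tfrac{1}{n+1},1]$). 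In particular Lemma~\ref{max p ell lemma} is not needed, since the chain already exhibits $p_{n-1}$ as the largest entry.

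The one compatibility issue to settle is the range of validity. Lemma~\ref{p sorted lemma2} holds on $[\tfrac{1}{n+1},\widetilde\lambda_+(n)]$, whereas the pieces borrowed from Lemma~\ref{p sorted lemma1} are a priori guaranteed only on $[\tfrac{1}{n+1},\lambda_+(n)]$, so to obtain~\eqref{p sorted n>3} on the whole interval $[\tfrac{1}{n+1},\widetilde\lambda_+(n)]$ claimed in the statement I must verify that $\widetilde\lambda_+(n)\le \lambda_+(n)$ for every $n\geq 4$. Writing $a\coloneqq 2^{1/3}$, $b\coloneqq 3^{1/3}$, so that $\widetilde\lambda_+(n)=b/(an+b-a)$, and clearing the manifestly positive denominators, this inequality becomes $3(an+b-a)\sqrt{\tfrac{n-1}{3(n+1)}}\le (3a-b)(n-1)$; both sides are non-negative (since $27a^3=54>3=b^3$ gives $3a>b$, and $n\geq 2$), so squaring and cancelling the factor $n-1$ reduces the claim to the polynomial inequality $3(an+b-a)^2\le (3a-b)^2(n^2-1)$, equivalently $P_n\ge 0$, where $P_n\coloneqq (6a^2-6ab+b^2)\,n^2-6a(b-a)\,n-\bigl((3a-b)^2+3(b-a)^2\bigr)$.

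Finally, $P_n\ge 0$ for $n\geq 4$ follows from two elementary facts about the parabola $P_n$: its leading coefficient $6a^2-6ab+b^2$ is positive (this is $a^2\bigl(6-6(b/a)+(b/a)^2\bigr)>0$, valid because $(3/2)^{1/3}<3-\sqrt3$, i.e.\ after cubing $\sqrt3<7/4$), and its vertex lies below $n=4$ (an analogous cube-root estimate, amounting to $(3/2)^{1/3}<(27-\sqrt{297})/8$); hence $P_n$ is non-decreasing on $[4,\infty)$ and it suffices to check $P_4\ge 0$. One computes $P_4=12\,(9a^2-9ab+b^2)$, which is positive if and only if $72\sqrt5<161$, i.e.\ $25920<25921$ — true, but only just. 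This razor-thin threshold comparison at $n=4$ is the main (though still modest) obstacle; everything else is bookkeeping over lemmas already proved.
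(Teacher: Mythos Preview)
Your proof is correct and follows the paper's overall structure: assemble Lemmata~\ref{p sorted lemma1}, \ref{p sorted lemma2}, \ref{p sorted lemma3} into the displayed ascending chain, then check the range compatibility $\widetilde{\lambda}_+(n)\le\lambda_+(n)$ for $n\ge 4$. The one genuine difference lies in how that last inequality is verified. The paper sandwiches via $\widetilde{\lambda}_+(n)\le \frac{3-\sqrt3}{n+2}\le \lambda_+(n)$ for $n\ge 18$ and then checks $n=4,\ldots,17$ numerically. You instead reduce directly to the quadratic inequality $P_n\ge 0$, confirm the leading coefficient is positive, locate the vertex below $n=4$, and verify $P_4=12(9a^2-9ab+b^2)>0$ via the striking arithmetic identity $25920<25921$. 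Your route is fully analytical and avoids the finite numerical case-check, at the price of a borderline comparison at $n=4$; the paper's route is softer for large $n$ but relies on machine verification for small $n$. Both are perfectly valid, and your observation that Lemma~\ref{max p ell lemma} is superfluous here (the chain already exhibits $p_{n-1}$ as the maximum) is a nice tidying.
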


\begin{proof}
It suffices to combine Lemmata~\ref{p sorted lemma1},~\ref{p sorted lemma2} and~\ref{p sorted lemma3}. Note that $\widetilde{\lambda}_+(n)\leq \lambda_+(n)$ for all $n\geq 4$. This can be shown e.g.\ by noting that
\bbb
\widetilde{\lambda}_+(n) \leq \frac{3-\sqrt3}{n+2} \leq \lambda_+(n)\qquad \forall\ n\geq 18\, ,
\eee
where the first relation is equivalent to a linear inequality upon elementary algebraic manipulations, while the second is easily seen to hold for all $n\geq 1$ by direct inspection of~\eqref{lambda+}. In the remaining cases $n=4,\ldots, 17$, the fact that $\widetilde{\lambda}_+(n)\leq \lambda_+(n)$ can be checked numerically.
\end{proof}

Now that the probability distribution $p(n,\lambda)$ has been sorted in ascending order by a fixed permutation, we proceed to check that indeed $p^\uparrow_\ell(n,\lambda) \geq q^\uparrow_\ell(n,\lambda)$ for all $n\geq 4$ and $\frac{1}{n+1}\leq \lambda\leq \frac1n$.

\begin{lemma} \label{p vs q sorted n-2 lemma}
For all $n\geq 2$,
\bb
q_{n-2}(n,\lambda) \leq p_n(n,\lambda) \qquad \forall\quad \frac{1}{n+1}\leq \lambda\leq \frac1n\, .
\label{p vs q sorted n-2}
\ee
\end{lemma}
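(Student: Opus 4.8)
The plan is to reduce $q_{n-2}(n,\lambda)\le p_n(n,\lambda)$ to an explicit polynomial inequality in $\lambda$ and then settle it by the derivative-cascade method used repeatedly in this section. Using the defining formulas~\eqref{p ell}--\eqref{q ell}, the binomial values $\binom{n+1}{n}=n+1$ and $\binom{n+1}{n-2}=\binom{n+1}{3}=\tfrac16 n(n-1)(n+1)$, and the identities $(n+1)(1-\lambda)-n=1-(n+1)\lambda$, $(n+1)(1-\lambda)-(n-2)=3-(n+1)\lambda$, one obtains
\[
p_n(n,\lambda)=\tfrac12(1-\lambda)^{n-1}\big((1-\lambda)+(1-(n+1)\lambda)^2\big),\qquad q_{n-2}(n,\lambda)=\tfrac{n(n-1)}{12}(1-\lambda)^{n-3}\lambda^2\big((n-2)\lambda+(3-(n+1)\lambda)^2\big).
\]
Since $(1-\lambda)^{n-3}>0$ on the range of interest, $q_{n-2}\le p_n$ is equivalent to $P(n,\lambda)\ge0$, where
\[
P(n,\lambda)\coloneqq 6(1-\lambda)^2\big((1-\lambda)+(1-(n+1)\lambda)^2\big)-n(n-1)\lambda^2\big((n-2)\lambda+(3-(n+1)\lambda)^2\big)
\]
has degree at most $4$ in $\lambda$. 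Substituting $v\coloneqq(n+1)\lambda-1\in[0,\tfrac1n]$ and expanding, one gets $(n+1)^2P(n,\lambda)=\widehat P(v)$ with
\[
\widehat P(v)=n(n+2)-n(7n+2)v+3n(2n+3)v^2+(n^2-10n-6)v^3-(n-3)(n+2)v^4 ,
\]
and I would record the endpoint values $\widehat P(0)=n(n+2)>0$, $\widehat P(1/n)=\tfrac{(n-1)^2(n+1)^2(n-2)(n-3)}{n^4}\ge0$ for $n\ge2$.

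It then suffices to show $\widehat P(v)\ge0$ on $[0,\tfrac1n]$. The base cases come from factorisations: $\widehat P(v)=2(1-2v)(1-v)(v-2)^2$ when $n=2$ and $\widehat P(v)=3(1-3v)(5-3v)(1-v)$ when $n=3$, both manifestly nonnegative on $[0,\tfrac12]$, resp.\ $[0,\tfrac13]$. For $n\ge4$ I would prove that $\widehat P$ is non-increasing on $[0,\tfrac1n]$, whence $\widehat P(v)\ge\widehat P(1/n)\ge0$. To get the monotonicity, differentiate twice: $\widehat P''(v)=6n(2n+3)+6(n^2-10n-6)v-12(n-3)(n+2)v^2$ is concave in $v$ (its leading coefficient is $\le0$ for $n\ge3$), and since $\widehat P''(0)=6n(2n+3)>0$ and $n^2\widehat P''(1/n)=12(n^4+2n^3-6n^2-2n+6)>0$, a concave function that is positive at both endpoints of an interval is positive throughout it, so $\widehat P''>0$ on $[0,\tfrac1n]$. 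Hence $\widehat P'$ is increasing there, and from $n^3\widehat P'(1/n)=-7n^5+10n^4+21n^3-34n^2-14n+24=-n^3(7n^2-10n-21)-(34n^2+14n-24)<0$ for $n\ge3$ it follows that $\widehat P'\le\widehat P'(1/n)<0$ on all of $[0,\tfrac1n]$, so $\widehat P$ is strictly decreasing, which completes the argument.

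Everything is elementary; the only real work is the organised expansion of the quartic $\widehat P$ and the verification that the one-variable polynomials $n^4+2n^3-6n^2-2n+6$ and $7n^2-10n-21$ are positive for $n\ge3$ (both immediate), together with the two small-$n$ factorisations. The single point needing mild care is the claim that $\widehat P''>0$ on the whole interval $[0,\tfrac1n]$, which relies on concavity of $\widehat P''$ for $n\ge4$ (the cases $n=2,3$ being handled separately in any case); I do not anticipate any genuine obstacle. As an alternative route for $n\ge6$, one can avoid derivatives entirely: on $[0,\tfrac1n]$ the tail $3n(2n+3)v^2+(n^2-10n-6)v^3-(n-3)(n+2)v^4$ is $v^2$ times a concave quadratic that is positive at $v=0$ and $v=1/n$, hence nonnegative, so $\widehat P(v)\ge n\big((n+2)-(7n+2)v\big)\ge n^2-5n-2\ge0$; then only $n=4,5$ require the short monotonicity argument or direct inspection.
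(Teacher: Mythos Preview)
Your argument is correct. Both you and the paper reduce $p_n-q_{n-2}\ge0$ to the positivity of a polynomial and then run a derivative cascade, so the overall strategy is the same; the execution differs in one respect. The paper observes the additional linear factor and writes
\[
p_n(n,\lambda)-q_{n-2}(n,\lambda)=\tfrac{1}{12}(1-\lambda)^{n-3}\big(2-(n+1)\lambda\big)\,h_n(\lambda)
\]
(there is a spurious $\partial/\partial\lambda$ in the paper's displayed formula, but the computation is for $p_n-q_{n-2}$ itself). In your variable $v=(n+1)\lambda-1$ this factor is simply $1-v$, and indeed $\widehat P(1)=0$ for every $n$, which is why $(1-v)$ appears in both of your small-$n$ factorisations. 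Extracting it would collapse your quartic $\widehat P$ to a cubic, unify the treatment of all $n\ge2$, and eliminate the separate handling of $n=2,3$ and the extra layer of the cascade. Conversely, your substitution $v=(n+1)\lambda-1$ makes the endpoint bookkeeping cleaner than in the paper (the interval becomes $[0,1/n]$ and the endpoint values $\widehat P(0)=n(n+2)$, $\widehat P(1/n)=(n-1)^2(n+1)^2(n-2)(n-3)/n^4$ are immediately read off), and your explicit factorisations for $n=2,3$ are tidy. Either route is fine; the paper's extra factor just shortens the algebra.
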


\begin{proof}
One verifies that
\begin{align*}
&\frac{\partial}{\partial \lambda} \left( p_n(n,\lambda) - q_{n-2}(n,\lambda) \right) \\
&\qquad = \frac{1}{12} (1-\lambda )^{n-3} \left(2 - (n+1)\lambda \right) \left(6-3 (n+6) \lambda - 3 \left(n^2 - 4n - 6\right) \lambda^2 + (n+1)(n+2)(n-3) \lambda^3\right) \\
&\qquad \eqqcolon \frac{1}{12} (1-\lambda)^{n-3} \left(2 - (n+1)\lambda \right) h_n(\lambda)\, .
\end{align*}
We will now show that $h_n(\lambda)\geq 0$ for all $\frac{1}{n+1}\leq \lambda\leq \frac1n$. To this end, compute
\bbb
\frac16 \frac{d^2 h_n(\lambda)}{d\lambda} = - \left(n^2 - 4n - 6\right) + (n+1)(n+2)(n-3) \lambda \geq 0 \qquad \forall\quad \frac{1}{n+1}\leq \lambda\leq \frac1n\, ,
\eee
where the last inequality holds because: (i)~it can be verified explicitly for $n=2$ and $n=3$; (ii)~for $n\geq 4$, one has that
\bbb
\frac{n^2 - 4n - 6}{(n+1)(n+2)(n-3)}\leq \frac{1}{n+1} \qquad \forall\ n\geq 4\, ,
\eee
with equality for $n=4$. Since we have shown that $\frac{d h_n(\lambda)}{d\lambda}$ is increasing in $\lambda$ on $\left[\frac{1}{n+1},\, \frac1n\right]$, there it holds that
\begin{align*}
\frac13 \frac{d h_n(\lambda)}{d\lambda} &\leq \frac13 \frac{d h_n(\lambda)}{d\lambda} \Big|_{\lambda = 1/n} \\
&= \left(-(n+6) - 2\left(n^2 - 4n -6\right) \lambda +(n+1)(n+2)(n-3)\lambda^2\right) \Big|_{\lambda = 1/n} \\
&= -2(n-1) + \frac{5}{n} -\frac{6}{n^2} \leq - (n-1)\left( 2 -\frac{5}{n^2}\right) \leq 0 \, .
\end{align*}
Thus, $h_n(\lambda)$ is decreasing in $\lambda$ on $\left[\frac{1}{n+1},\, \frac1n\right]$. From this we deduce that
\bbb
h_n(\lambda) \geq h_n\left( \frac{1}{n}\right) = \frac{(n-1)(n-2)(n-3)}{n^3} \geq 0
\eee
for all $n=2,3,4,\ldots$.
\end{proof}

\begin{lemma} \label{p vs q sorted n-1 lemma}
For all $n\geq 2$,
\bb
q_{n-1}(n,\lambda) \leq p_{n+1}(n,\lambda) \qquad \forall\quad \frac{1}{n+1}\leq \lambda\leq \frac{2}{n+2}\, .
\label{p vs q sorted n-1}
\ee
\end{lemma}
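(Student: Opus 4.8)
The plan is to reduce the claim to an elementary quadratic inequality in $\lambda$ by substituting the explicit formulae~\eqref{p ell} and~\eqref{q ell} and simplifying. First I would evaluate both sides in closed form. Putting $\ell = n+1$ in~\eqref{p ell} annihilates the term $(1-\lambda)(n-\ell+1)$ (since $n-\ell+1 = 0$) and leaves $p_{n+1}(n,\lambda) = \tfrac12\,(n+1)\,\lambda\,(1-\lambda)^n$. Putting $\ell = n-1$ in~\eqref{q ell}, and using $\binom{n+1}{n-1} = \tfrac12\,n(n+1)$ together with $(n+1)(1-\lambda) - (n-1) = 2 - (n+1)\lambda$, gives $q_{n-1}(n,\lambda) = \tfrac14\,n\,\lambda\,(1-\lambda)^{n-2}\big((n-1)\lambda + (2-(n+1)\lambda)^2\big)$.

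Next I would divide the desired inequality $q_{n-1}(n,\lambda)\le p_{n+1}(n,\lambda)$ by the common factor $\tfrac14\,\lambda\,(1-\lambda)^{n-2}$, which is strictly positive since $0<\lambda<1$. This turns the claim into
$$n\big((n-1)\lambda + (2-(n+1)\lambda)^2\big) \le 2(n+1)(1-\lambda)^2.$$
Expanding both sides as polynomials in $\lambda$ and moving everything to one side, the constant, linear, and quadratic coefficients of the difference all turn out to carry a common factor $n-1$; indeed the difference equals $(n-1)\big(-2 + (3n+4)\lambda - (n+1)(n+2)\lambda^2\big)$. Since $n\ge 2$ gives $n-1>0$, the inequality is equivalent to
$$(n+1)(n+2)\,\lambda^2 - (3n+4)\,\lambda + 2 \le 0.$$

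Finally I would observe that the discriminant of this quadratic is $(3n+4)^2 - 8(n+1)(n+2) = n^2$, so its roots are $\lambda = \tfrac{(3n+4)\pm n}{2(n+1)(n+2)}$, that is, exactly $\lambda = \tfrac{1}{n+1}$ and $\lambda = \tfrac{2}{n+2}$. Because the leading coefficient $(n+1)(n+2)$ is positive, the quadratic is nonpositive precisely on $\big[\tfrac{1}{n+1},\tfrac{2}{n+2}\big]$, which is exactly the range asserted in the statement, with equality at the two endpoints. There is no genuine obstacle here beyond careful bookkeeping in the expansion and factorisation; the only point worth flagging is that the interval $\big[\tfrac{1}{n+1},\tfrac{2}{n+2}\big]$ is not arbitrary but forced, being exactly the locus where the reduced quadratic is nonpositive, so the bound is tight and the interval cannot be enlarged.
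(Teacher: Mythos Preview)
Your proof is correct and follows essentially the same route as the paper's. The paper simply presents the factored identity $2(n+1)(1-\lambda)^2\big(1-\tfrac{q_{n-1}}{p_{n+1}}\big) = (n-1)\big((n+1)\lambda-1\big)\big(2-(n+2)\lambda\big)$ directly, whereas you arrive at the same factorisation by expanding and computing the discriminant; the content is identical.
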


\begin{proof}
A simple calculation shows that
\bbb
2(n+1)(1-\lambda)^2\left( 1-\frac{q_{n-1}(n,\lambda)}{p_{n+1}(n,\lambda)}\right) = (n-1)\left((n+1)\lambda-1\right) \left( 2 -(n+2)\lambda\right) \geq 0\qquad \frac{1}{n+1}\leq \lambda\leq \frac{2}{n+2}\, ,
\eee
completing the proof.
\end{proof}

\begin{lemma} \label{p vs q sorted n lemma}
For all $n\geq 4$,
\bb
q_n(n,\lambda)\leq p_{n-2}(n,\lambda)\qquad \forall\quad \frac{1}{n+1}\leq \lambda\leq \frac1n\, .
\label{p vs q sorted n}
\ee
\end{lemma}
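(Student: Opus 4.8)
The plan is to reduce the claim to a single polynomial inequality and then exploit concavity. Writing out~\eqref{p ell} and~\eqref{q ell} for the indices $\ell=n-2$ and $\ell=n$, and using $\binom{n+1}{n-2}=\tfrac{(n+1)n(n-1)}{6}$ and $\binom{n+1}{n}=n+1$, one obtains
\bbb
p_{n-2}(n,\lambda)=\frac{n(n-1)}{12}\,(1-\lambda)^{n-3}\lambda^2 P(\lambda),\qquad q_n(n,\lambda)=\frac12\,(1-\lambda)^{n-1}Q(\lambda),
\eee
with $P(\lambda)=12-(6n+9)\lambda+(n+1)^2\lambda^2$ and $Q(\lambda)=1-(n+2)\lambda+(n+1)^2\lambda^2$. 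Since $(1-\lambda)^{n-3}>0$ on the relevant range, the inequality $q_n\le p_{n-2}$ is equivalent to $\psi(\lambda)\ge 0$ on $I\coloneqq\bigl[\tfrac{1}{n+1},\tfrac1n\bigr]$, where
\bbb
\psi(\lambda)\coloneqq \frac{n(n-1)}{12}\,\lambda^2 P(\lambda)-\frac12\,(1-\lambda)^2 Q(\lambda)
\eee
is a quartic whose leading coefficient equals $\tfrac{(n+1)^2(n-3)(n+2)}{12}$; this is strictly positive precisely because $n\ge 4$, which is the only place the hypothesis enters.

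The key step is to show that $\psi$ is concave on $I$. The second derivative $\psi''$ is a quadratic in $\lambda$ with leading coefficient $(n+1)^2(n-3)(n+2)>0$, hence an upward parabola, so it is nonpositive on the whole segment joining any two points at which it is nonpositive. A direct computation gives
\bbb
\psi''\!\Big(\tfrac{1}{n+1}\Big)=-\frac{n(2n-1)}{2}\le 0,\qquad \psi''\!\Big(\tfrac1n\Big)=-\frac{2n^4+n^3-13n^2+2n+12}{2n^2}\le 0,
\eee
the second inequality because $2n^4\ge 13n^2$ for $n\ge 3$. Therefore $\psi''\le 0$ on all of $I$, so $\psi$ is concave there and attains its minimum over $I$ at an endpoint.

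It then remains to evaluate $\psi$ at $\lambda=\tfrac{1}{n+1}$ and $\lambda=\tfrac1n$. Using $P(\tfrac{1}{n+1})=\tfrac{7n+4}{n+1}$, $Q(\tfrac{1}{n+1})=\tfrac{n}{n+1}$, $P(\tfrac1n)=\tfrac{7n^2-7n+1}{n^2}$, and $Q(\tfrac1n)=\tfrac{n^2+1}{n^2}$, the algebra collapses to
\bbb
\psi\!\Big(\tfrac{1}{n+1}\Big)=\frac{n(n-4)}{12(n+1)^2},\qquad \psi\!\Big(\tfrac1n\Big)=\frac{(n-1)(n-2)(n^2+n-3)}{12 n^4},
\eee
both of which are $\ge 0$ for $n\ge 4$; in fact the first vanishes exactly at $n=4$, so the hypothesis $n\ge4$ is sharp. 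Combining with concavity yields $\psi\ge 0$ on $I$, hence $q_n(n,\lambda)\le p_{n-2}(n,\lambda)$ on $I$, as required.

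The only genuinely delicate point is the concavity step: one must verify that both endpoint values of $\psi''$ are nonpositive for every $n\ge 4$ and that the leading coefficient of $\psi$ is positive — both hinge on $n\ge 4$ — after which the argument is routine, if lengthy, polynomial algebra. Signing the derivative of $p_{n-2}-q_n$ directly is possible in principle but produces a quartic factor with no obvious sign, so the concavity route via the quartic $\psi$ is considerably cleaner.
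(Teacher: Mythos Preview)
Your proof is correct and follows essentially the same route as the paper's: both reduce the claim to showing that a quartic polynomial is nonnegative on $\bigl[\tfrac{1}{n+1},\tfrac{1}{n}\bigr]$, establish concavity by checking that the (upward-parabolic) second derivative is nonpositive at the endpoints, and then verify nonnegativity of the quartic at both endpoints. In fact your $\psi$ is exactly $\tfrac12\,r_n$ in the paper's notation, and the endpoint values you obtain match; the only cosmetic difference is that the paper bounds $r_n''$ on the slightly larger interval $\bigl[\tfrac{1}{n+1},\tfrac{2}{n+1}\bigr]$, whereas you work directly on $\bigl[\tfrac{1}{n+1},\tfrac{1}{n}\bigr]$.
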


\begin{proof}
One finds that
\begin{align*}
&\frac{n(n-1)}{6}\, \lambda^2 \left(3(1-\lambda )+(3-(n+1)\lambda )^2\right) \left( 1 - \frac{q_n(n,\lambda)}{p_{n-2}(n,\lambda)} \right) \\
&\qquad = -1+ (n+4) \lambda + \left(n^2 - 6n -6\right) \lambda^2 - (n+1) \left( n^2 - \frac52\, n - 4 \right) \lambda^3 + \frac16\, (n+1)^2 (n+2)(n-3) \lambda^4 \\
&\qquad \eqqcolon r_n(\lambda)\, .
\end{align*}
We look at the polynomial $r_n(\lambda)$ and its derivatives in the interval $\left[ \frac{1}{n+1}, \frac1n\right]$. Since $\frac{d^4 r_n(\lambda)}{d\lambda} = 4(n+1)^2(n+2)(n-3)\geq 0$, the function $\frac{d^2 r_n(\lambda)}{d\lambda}$ is convex. Therefore, on the larger interval $\left[ \frac{1}{n+1},\, \frac{2}{n+1}\right]\supset \left[ \frac{1}{n+1}, \frac1n\right]$ it holds that
\begin{align*}
\frac{d^2 r_n(\lambda)}{d\lambda} &\leq \max\left\{ \frac{d^2r_n(\lambda)}{d\lambda} \Big|_{\lambda=1/(n+1)},\, \frac{d^2 r_n(\lambda)}{d\lambda} \Big|_{\lambda=2/(n+1)}\right\} \\
&= \max \left\{ - n(2n-1),\, - 2(n-2)(n-3) \right\} \\
&\leq 0\, .
\end{align*}
In turn, this tells us that $r_n(\lambda)$ is concave. Thus, on $\left[ \frac{1}{n+1}, \frac1n\right]$ it holds that
\begin{align*}
r_n(\lambda) &\leq \max\left\{ r_n\left(\frac{1}{n+1}\right),\, r_n\left(\frac1n\right) \right\} \\
&= \min\left\{ \frac{n(n-4)}{6(n+1)^2},\, \frac{(n-1)(n-2)(n^2+n-3)}{6n^4} \right\} \\
&\geq 0\, .
\end{align*}
This proves the claim.
\end{proof}

We are finally ready to prove our last claim.

\begin{prop} \label{majorisation n>3 prop}
Let $n\geq 4$ be an integer. Then
\bb
p^\uparrow_\ell(n,\lambda) \geq q^\uparrow_\ell(n,\lambda)\qquad \forall\quad \ell=0,\ldots, n\, ,\qquad \forall\quad \frac{1}{n+1}\leq \lambda\leq \frac1n\, ,
\label{majorisation n>3}
\ee
with the reverse inequality holding instead for $\ell=n+1$. In particular, $p(n,\lambda)\prec q(n,\lambda)$ for all $\frac{1}{n+1}\leq \lambda\leq \frac1n$.
\end{prop}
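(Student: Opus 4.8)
The plan is to prove \eqref{majorisation n>3} by reading it off entry by entry, once the ascending orderings of both $p(n,\lambda)$ and $q(n,\lambda)$ have been pinned down and combined with the termwise comparisons collected in the preceding lemmas. First I would invoke Proposition~\ref{q sorted prop}: since $\frac1n\leq\lambda_+(n)$ for all $n\geq2$, the distribution $q(n,\lambda)$ is already in ascending order throughout $\frac{1}{n+1}\leq\lambda\leq\frac1n$, so that $q^\uparrow_\ell(n,\lambda)=q_\ell(n,\lambda)$ for every $\ell=0,\dots,n+1$. Next I would apply Proposition~\ref{p sorted prop n>3}, which for $n\geq4$ and $\frac{1}{n+1}\leq\lambda\leq\frac1n\leq\widetilde\lambda_+(n)$ gives the explicit sorted vector $p^\uparrow(n,\lambda)=\big(p_0(n,\lambda),\dots,p_{n-3}(n,\lambda),\,p_n(n,\lambda),\,p_{n+1}(n,\lambda),\,p_{n-2}(n,\lambda),\,p_{n-1}(n,\lambda)\big)$.

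With both orderings fixed, the inequality $p^\uparrow_\ell(n,\lambda)\geq q^\uparrow_\ell(n,\lambda)$ splits into five regimes, each of which is precisely one of the lemmas already at hand. For $\ell=0,\dots,n-3$ it reads $p_\ell(n,\lambda)\geq q_\ell(n,\lambda)$, which is Lemma~\ref{p ell < q ell} (whose range $0\leq\lambda\leq\frac{2}{n+1}$ contains $[\frac{1}{n+1},\frac1n]$). For $\ell=n-2$ it becomes $p_n(n,\lambda)\geq q_{n-2}(n,\lambda)$, which is Lemma~\ref{p vs q sorted n-2 lemma}. For $\ell=n-1$ it becomes $p_{n+1}(n,\lambda)\geq q_{n-1}(n,\lambda)$, which is Lemma~\ref{p vs q sorted n-1 lemma} (valid on $[\frac{1}{n+1},\frac{2}{n+2}]\supseteq[\frac{1}{n+1},\frac1n]$ since $n\geq2$). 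For $\ell=n$ it becomes $p_{n-2}(n,\lambda)\geq q_n(n,\lambda)$, which is Lemma~\ref{p vs q sorted n lemma}. This settles \eqref{majorisation n>3} for $\ell=0,\dots,n$. For the top index $\ell=n+1$, the reverse inequality $p^\uparrow_{n+1}(n,\lambda)=p_{n-1}(n,\lambda)\leq q_{n+1}(n,\lambda)=q^\uparrow_{n+1}(n,\lambda)$ is Lemma~\ref{p vs q sorted n+1 lemma}; alternatively it follows for free, since both distributions are normalised and the inequalities already proved for $\ell\leq n$ force it by subtracting from $1$.

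Finally, summing $p^\uparrow_\ell(n,\lambda)\geq q^\uparrow_\ell(n,\lambda)$ over $\ell=0,\dots,k$ yields $\sum_{\ell=0}^k p^\uparrow_\ell(n,\lambda)\geq\sum_{\ell=0}^k q^\uparrow_\ell(n,\lambda)$ for every $k=0,\dots,n$, while for $k=n+1$ both sides equal $1$; by the definition~\eqref{maj} this is exactly $p(n,\lambda)\prec q(n,\lambda)$ on $\frac{1}{n+1}\leq\lambda\leq\frac1n$. I do not expect any substantive obstacle here: all the analytic content (the polynomial sign analyses establishing the individual comparisons and the sorting of $p$) has already been discharged in the cited lemmas, so the only thing demanding attention is the bookkeeping of the intervals of validity, i.e.\ checking that $[\frac{1}{n+1},\frac1n]$ sits inside each lemma's range — which it does in every case for $n\geq4$ — and matching the $\ell$-th slot of $p^\uparrow$ to the correct $p_j$ via Proposition~\ref{p sorted prop n>3}.
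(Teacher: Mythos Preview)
Your proposal is correct and follows essentially the same route as the paper's proof: fix the orderings of $p$ and $q$ via Propositions~\ref{q sorted prop} and~\ref{p sorted prop n>3}, then read off the entrywise comparisons $p^\uparrow_\ell\geq q^\uparrow_\ell$ from Lemmata~\ref{p ell < q ell},~\ref{p vs q sorted n-2 lemma},~\ref{p vs q sorted n-1 lemma},~\ref{p vs q sorted n lemma}, and obtain the reverse inequality at $\ell=n+1$ by normalisation (the paper uses only normalisation there, while you also note Lemma~\ref{p vs q sorted n+1 lemma} as an alternative). The interval checks you flag are exactly the ones the paper records.
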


\begin{proof}
Since $\widetilde{\lambda}_+(n)\geq \frac1n$ for all $n\geq 1$, Proposition~\ref{p sorted prop n>3} applies and tell us that the ordering of $p(n,\lambda)$ is as in~\eqref{p sorted n>3}. Now, the cases $\ell=0,\ldots,n-3$ of~\eqref{majorisation n>3} follow from Lemma~\ref{p ell < q ell}, as usual. When $\ell=n-2$, we have instead to verify that $p_n(n,\lambda) \geq q_{n-2}(n,\lambda)$, which is a consequence of Lemma~\ref{p vs q sorted n-2 lemma}. For $\ell=n-1$, the claim amounts to the inequality $p_{n+1}(n,\lambda) \geq q_{n-1}(n,\lambda)$, which holds by Lemma~\ref{p vs q sorted n-1 lemma}, because $\frac{2}{n+2}\geq \frac1n$ whenever $n\geq 2$. The last case is $\ell=n$, for which we have to show that $p_{n-2}(n,\lambda)\geq q_{n}(n,\lambda)$; this is guaranteed to hold by Lemma~\ref{p vs q sorted n lemma}. The reverse inequality holds for $\ell=n+1$ by normalisation:
\bbb
p^\uparrow_{n+1}(n,\lambda) = 1 - \sum_{\ell=0}^n p^\uparrow(n,\lambda) \geq 1 - \sum_{\ell=0}^n q^\uparrow (n,\lambda) = q^\uparrow_{n+1}(n,\lambda)\, .
\eee
Finally, majorisation follows by direct inspection.
\end{proof}

\section{Some extensions of Theorem~\ref{Behemoth thm}} \label{extensions subsec}

Throughout this section, we discuss some possible extensions of Theorem~\ref{Behemoth thm}. In particular, we look into the case where the transmission channel results from a concatenation of multiple beam splitters instead of a single one. The scenario we consider is depicted in Figure~\ref{entangled composition fig}.

\begin{figure}[ht]
\centering
\includegraphics[scale=0.8]{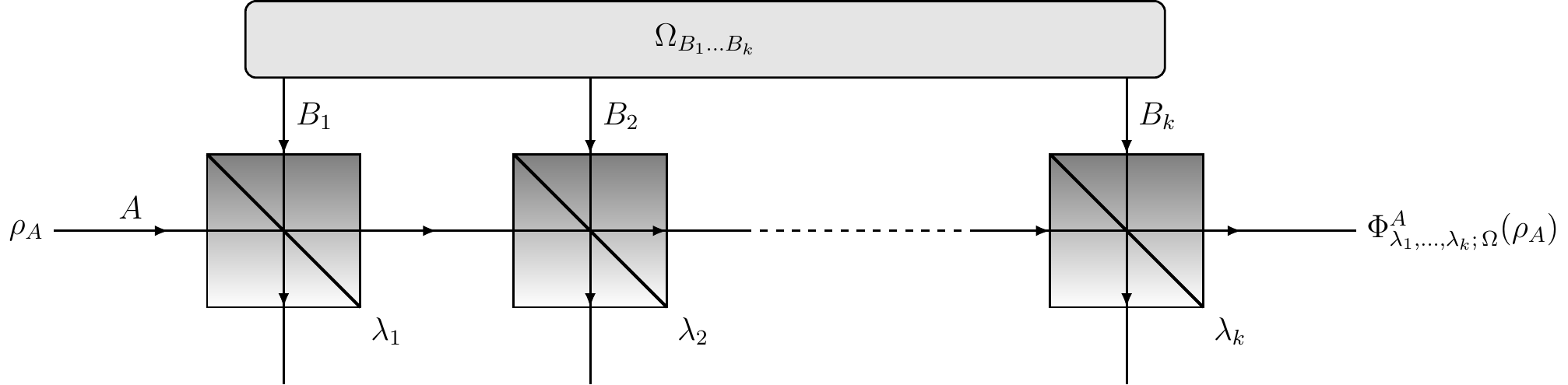}
\caption{The input state $\rho_A$ is sent through a sequence of beam splitters with transmissivities $\lambda_1, \ldots, \lambda_k$. The vertical output arms coming down from each beam splitter are simply traced away. The action of the channel $\Phi_{\lambda_1, \ldots, \lambda_k;\, \Omega}$ is given by~\eqref{entangled channel}.}
\label{entangled composition fig}
\end{figure}

As usual, we will fix the values of the transmissivities and optimise the environment state so as to maximise the quantum capacity of the resulting channel. There are at least three different scenarios we could look into:
\begin{enumerate}[(a)]
\item $\Omega_{B_1\ldots B_k}$ is a generic multipartite entangled state;
\item $\Omega_{B_1\ldots B_k}$ is constrained to be either bi-separable or fully separable;
\item $\Omega_{B_1\ldots B_k}$ is constrained to be a product state.
\end{enumerate}
We will now show that in setting~(a) our Theorem~\ref{Behemoth thm} still holds. We need a preliminary lemma.

\begin{lemma} \label{entangled composition lemma}
Let $0\leq \lambda_1, \ldots, \lambda_k\leq 1$ be transmissivities, and let $\sigma$ be an arbitrary $m$-mode state. Set $\lambda\coloneqq \lambda_1\ldots \lambda_k$. Then there exists a $k$-partite state $\Omega_{B_1\ldots B_k}$ of $k$ systems of $m$ modes each such that the channel $\Phi_{\lambda_1, \ldots, \lambda_k;\, \Omega}$ defined by
\bb
\Phi_{\lambda_1, \ldots, \lambda_k;\, \Omega}^A(\rho_A) \coloneqq \Tr_{B_1\ldots B_k} \left[ U_{\lambda_k}^{AB_k}\ldots\, U_{\lambda_1}^{AB_1} \left( \rho_A \otimes \Omega_{B_1\ldots B_k} \right) \left( U_{\lambda_1}^{AB_1} \right)^\dag \ldots \left( U_{\lambda_k}^{AB_k} \right)^\dag \right]
\label{entangled channel}
\ee
satisfies that
\bb
\Phi_{\lambda, \sigma} = \Phi_{\lambda_1, \ldots, \lambda_k;\, \Omega}\, .
\label{entangled composition}
\ee
\end{lemma}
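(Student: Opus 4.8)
The plan is to collapse the concatenation of beam splitters into a single effective beam splitter by tracking, in the Heisenberg picture, how the annihilation operator of the output mode $A$ is expressed as a linear combination of the input operators. Set $W \coloneqq U_{\lambda_k}^{AB_k}\cdots U_{\lambda_1}^{AB_1}$. Since $W$ is passive and acts identically on each of the $m$ transverse mode indices, it suffices to analyse, for a single transverse index, the orthogonal $(k+1)\times(k+1)$ matrix $M$ implementing the combined transformation on $(a,b_1,\ldots,b_k)$. A one-line induction on $k$, using the single--beam-splitter rule $a\mapsto \sqrt{\lambda_j}\,a + \sqrt{1-\lambda_j}\,b_j$, shows that the coefficient of $a$ in the expansion of $W^\dag a\, W$ is $\sqrt{\lambda}$ with $\lambda=\lambda_1\cdots\lambda_k$, while the remaining coefficients $(M_{1,j+1})_{j=1}^k$ satisfy $\sum_j M_{1,j+1}^2 = 1-\lambda$ by orthogonality of $M$.

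Next I would pass to characteristic functions, exactly as in~\eqref{Phi chi}. Using the adjoint action one has $\chi_{\Phi^A_{\lambda_1,\ldots,\lambda_k;\,\Omega}(\rho)}(\alpha) = \Tr\big[(\rho_A\otimes \Omega_{B_1\ldots B_k})\, W^\dag D_A(\alpha)\, W\big]$, and since $W$ is passive, $W^\dag D_A(\alpha)W$ is the global displacement operator with argument $\big(\sqrt{\lambda}\,\alpha;\, M_{12}\alpha,\ldots,M_{1,k+1}\alpha\big)$. As this displacement factorises across $A$ and $B_1\ldots B_k$, we get $\chi_{\Phi^A(\rho)}(\alpha) = \chi_\rho\big(\sqrt{\lambda}\,\alpha\big)\,\chi_\Omega\big(M_{12}\alpha,\ldots,M_{1,k+1}\alpha\big)$. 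Writing $M_{1,j+1} = \sqrt{1-\lambda}\,t_j$ with $(t_1,\ldots,t_k)$ a real unit vector (the same for every transverse index, because $U_\lambda$ acts uniformly), the whole statement reduces to exhibiting a $k$-partite state $\Omega$ with $\chi_\Omega(t_1\gamma,\ldots,t_k\gamma) = \chi_\sigma(\gamma)$ for all $\gamma\in\C^m$; comparing with~\eqref{Phi chi} then gives $\Phi^A_{\lambda_1,\ldots,\lambda_k;\,\Omega} = \Phi_{\lambda,\sigma}$.

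To construct such an $\Omega$, complete $(t_1,\ldots,t_k)$ to an orthogonal matrix $T\in O(k)$ having this vector as its first row, and let $V_T$ be the passive unitary on the $km$-mode environment that implements $T$ on the $B$-labels and the identity on the transverse labels. Put $\Omega \coloneqq V_T^\dag\big(\sigma \otimes \ketbra{0}^{\otimes (k-1)m}\big) V_T$, i.e.\ the state that, after the mode rotation $V_T$, carries $\sigma$ on the first $m$-mode block and the vacuum on the remaining $(k-1)m$ modes. The transformation rule for characteristic functions under passive unitaries, together with $\sum_j T_{1j}t_j = 1$ and $\sum_j T_{ij}t_j = 0$ for $i\geq 2$ (orthogonality of $T$), gives precisely $\chi_\Omega(t_1\gamma,\ldots,t_k\gamma) = \chi_\sigma(\gamma)$, as required. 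The degenerate case $\lambda=1$, i.e.\ $\lambda_1=\cdots=\lambda_k=1$, in which every $U_{\lambda_j}$ is the identity and $\Phi_{1,\sigma}=\Id$, is trivial and may be treated separately.

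The only genuinely delicate aspect is bookkeeping: keeping the transverse-mode structure straight, so that a single honest $(k+1)\times(k+1)$ (respectively $k\times k$) orthogonal matrix governs the situation uniformly across all $m$ modes, and fixing the signs in the Heisenberg action of $U_\lambda$ so that they are consistent with the convention pinned down by~\eqref{Phi chi}. Everything else is a routine manipulation of characteristic functions, so I expect the main obstacle to be notational rather than conceptual.
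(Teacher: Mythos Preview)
Your proof is correct and follows essentially the same route as the paper: both compute the characteristic function of the concatenated channel, identify the constraint $\chi_\Omega(t_1\gamma,\ldots,t_k\gamma)=\chi_\sigma(\gamma)$ for the unit vector $t_j=\sqrt{(1-\lambda_j)\lambda_{j+1}\cdots\lambda_k/(1-\lambda)}$, and then build $\Omega$ by applying a passive unitary to $\sigma\otimes\ketbra{0}^{\otimes(k-1)m}$. The only cosmetic difference is that the paper realises this passive unitary concretely as a sequence of beam splitters $U_{\eta_k}^{B_1B_k}\cdots U_{\eta_2}^{B_1B_2}$ with explicit transmissivities $\eta_i=\lambda_i(1-\lambda_1\cdots\lambda_{i-1})/(1-\lambda_1\cdots\lambda_i)$ (giving a pictorial ``equivalent network'' interpretation), whereas you invoke an abstract orthogonal completion of $t$; these are two parametrisations of the same rotation.
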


\begin{proof}
Iterating the same argument as in the proof of Lemma~\ref{data processing attenuators lemma} shows that
\bbb
\chi_{\Phi_{\lambda_1, \ldots, \lambda_k;\, \Omega}(\rho)} (\alpha) = \chi_{\rho}\left( \sqrt{\lambda_1 \lambda_2 \ldots \lambda_k}\, \alpha\right) \chi_{\Omega} \left( \sqrt{(1-\lambda_1)\lambda_2\ldots \lambda_k}\, \alpha,\, \sqrt{(1-\lambda_2)\lambda_3\ldots \lambda_k}\, \alpha,\, \ldots,\, \sqrt{1-\lambda_k}\, \alpha\right) ,
\eee
where $\alpha\in \C^m$ is generic. This coincides with $\chi_{\Phi_{\lambda, \sigma}(\rho)}(\alpha) = \chi_\rho\left( \sqrt\lambda\, \alpha\right) \chi_\sigma\left( \sqrt{1-\lambda}\, \alpha\right)$ for all $\rho$ if
\bb
\chi_{\Omega} \left( \sqrt{\frac{(1-\lambda_1) \lambda_2\ldots \lambda_k}{1-\lambda_1\ldots \lambda_k}}\, \alpha,\, \sqrt{\frac{(1-\lambda_2) \lambda_3\ldots \lambda_k}{1-\lambda_1\ldots \lambda_k}}\, \alpha, \ldots, \sqrt{\frac{1- \lambda_k}{1-\lambda_1\ldots \lambda_k}}\, \alpha \right) = \chi_\sigma\left( \alpha \right) .
\label{constraint Omega}
\ee
To construct a state $\Omega_{B_1\ldots B_k}$ such that~\eqref{constraint Omega} holds, we start by defining the numbers
\bb
\eta_i \coloneqq \frac{\lambda_i (1-\lambda_1\ldots \lambda_{i-1})}{1-\lambda_1\ldots \lambda_i}\, ,\qquad \forall\ i=2,\ldots, k\, .
\label{eta i}
\ee
We can then rephrase~\eqref{constraint Omega} as
\bb
\chi_{\Omega} \left( \sqrt{\eta_2\ldots \eta_k}\, \alpha, \, \sqrt{(1-\eta_2)\eta_3\ldots \eta_k}\, \alpha,\, \sqrt{(1-\eta_3)\eta_4\ldots \eta_k}\, \alpha, \ldots, \, \sqrt{1-\eta_k}\, \alpha\right) = \chi_\sigma(\alpha)\, .
\label{constraint Omega 2}
\ee
Now we observe that~\eqref{constraint Omega 2} is satisfied if and only if
\bb
\sigma_{B_1} = \Tr_{B_2\ldots B_k} \left[ U_{\eta_k}^{B_1B_k} \ldots\, U_{\eta_2}^{B_1B_2}\, \Omega_{B_1\ldots B_k} \left( U_{\eta_2}^{B_1B_2} \right)^\dag \ldots \left( U_{\eta_k}^{B_1B_k}\right)^\dag \right] .
\label{constraint Omega 3}
\ee
In order to meet~\eqref{constraint Omega 3}, it suffices to set e.g.
\bb
\Omega_{B_1\ldots B_k} \coloneqq \left( U_{\eta_2}^{B_1B_2} \right)^\dag \ldots \left( U_{\eta_k}^{B_1B_k}\right)^\dag \left( \sigma_{B_1} \otimes \ketbra{0}_{B_2\ldots B_k} \right) U_{\eta_k}^{B_1B_k} \ldots U_{\eta_2}^{B_1B_2}\, .
\label{Omega}
\ee
This construction concludes the proof. Incidentally, note that we could replace $\sigma_{B_1} \otimes \ketbra{0}_{B_2\ldots B_k}$ in~\eqref{Omega} with any extension of $\sigma_{B_1}$.
\end{proof}

\begin{rem}
The argument in the above proof amounts to an equivalence between the channel depicted in Figure~\ref{entangled composition fig} and that in Figure~\ref{equivalent network fig} below.
\end{rem}

\begin{figure}[ht]
\centering
\includegraphics[scale=0.8]{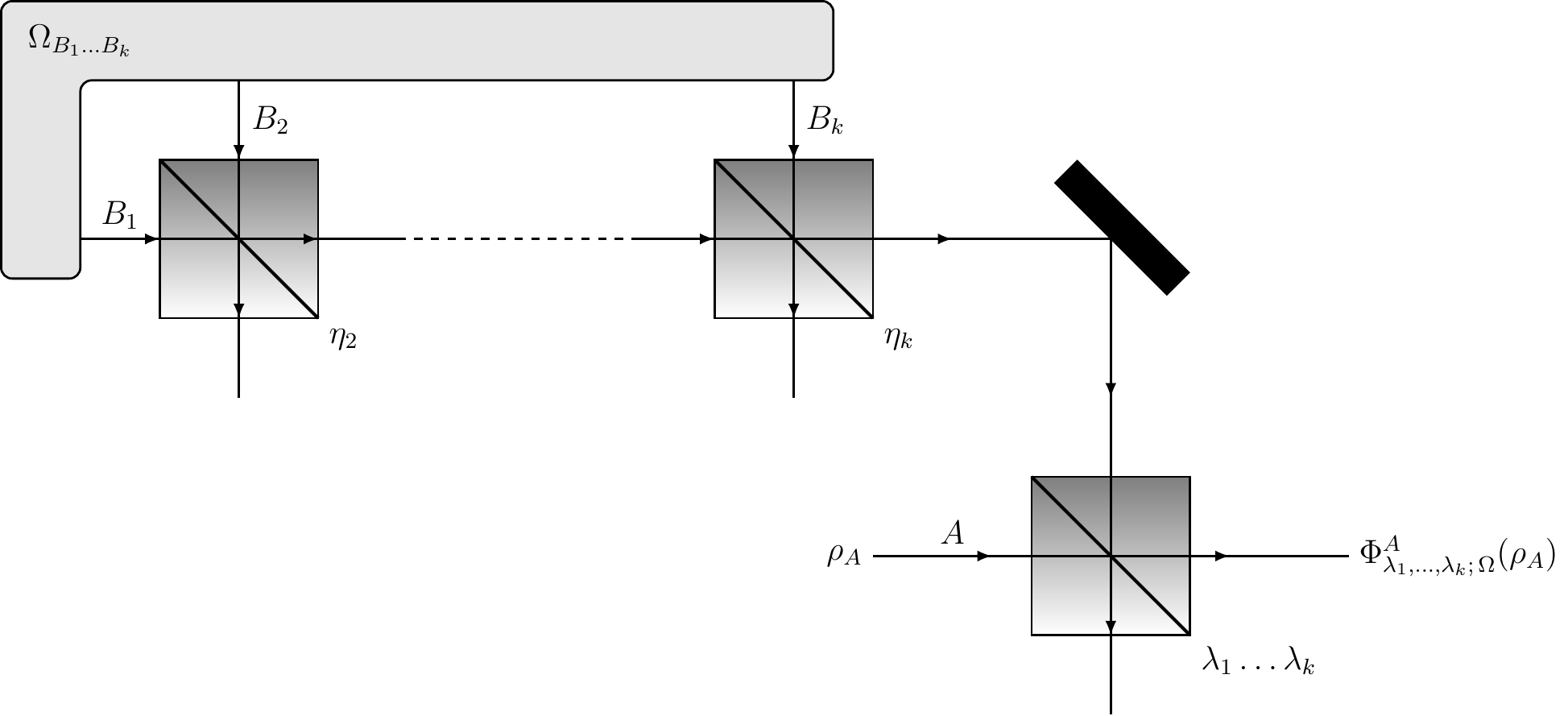}
\caption{An equivalent representation of the channel in Figure~\ref{entangled composition fig}, as constructed in the above proof of Lemma~\ref{entangled composition lemma}. The transmissivities $\eta_i$ ($i=2,\ldots, k$) are given by~\eqref{eta i}.}
\label{equivalent network fig}
\end{figure}

We are now ready to formulate the sought extension of Theorem~\ref{Behemoth thm}.

\begin{thm} \label{extended Behemoth thm}
Let $0<\lambda_1,\ldots, \lambda_k \leq 1$ be positive transmissivities. There exists a state $\Omega_{B_1\ldots B_k}$ of $k$ single-mode systems such that the channel $\Phi_{\lambda_1, \ldots, \lambda_k;\, \Omega}$ defined by~\eqref{entangled channel} (see Figure~\ref{entangled composition fig}) satisfies that
\bb
Q\big(\Phi_{\lambda_1, \ldots, \lambda_k;\, \Omega}\big) \geq Q\big(\Phi_{\lambda_1, \ldots, \lambda_k;\, \Omega},\, 1/2 \big) \geq c\, ,
\ee
where $c>0$ is the same universal constant as in Theorem~\ref{Behemoth thm}.
\end{thm}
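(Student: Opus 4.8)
The plan is to deduce Theorem~\ref{extended Behemoth thm} from Theorem~\ref{Behemoth thm} by a direct application of Lemma~\ref{entangled composition lemma}. First I would set $\lambda \coloneqq \lambda_1 \cdots \lambda_k$; since each $\lambda_i$ is strictly positive, so is $\lambda$, and moreover $0<\lambda\leq 1$. Theorem~\ref{Behemoth thm} then supplies a single-mode pure state $\sigma(\lambda)$ with $Q\big(\Phi_{\lambda,\,\sigma(\lambda)}\big) \geq Q\big(\Phi_{\lambda,\,\sigma(\lambda)},\, 1/2\big)\geq c$, for the universal constant $c>0$ of that theorem.

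Next I would invoke Lemma~\ref{entangled composition lemma} with this choice of environment state $\sigma = \sigma(\lambda)$, so that the number of modes is $m=1$. The lemma produces a $k$-partite state $\Omega_{B_1\ldots B_k}$ of $k$ single-mode systems---concretely the one exhibited in~\eqref{Omega}, obtained by conjugating $\sigma(\lambda)_{B_1}\otimes \ketbra{0}_{B_2\ldots B_k}$ with a cascade of beam-splitter unitaries $U_{\eta_i}$ whose transmissivities are given by~\eqref{eta i}---such that $\Phi_{\lambda_1,\ldots,\lambda_k;\,\Omega} = \Phi_{\lambda,\,\sigma(\lambda)}$ as channels on the input system $A$. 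Because the two maps are literally equal as CPTP maps, every one of their capacities coincides, in particular the energy-constrained quantum capacities at any fixed $N$ (the energy constraint sits on the input $A$, which is untouched by the reduction). Hence $Q\big(\Phi_{\lambda_1,\ldots,\lambda_k;\,\Omega},\, 1/2\big) = Q\big(\Phi_{\lambda,\,\sigma(\lambda)},\, 1/2\big)\geq c$ and likewise $Q\big(\Phi_{\lambda_1,\ldots,\lambda_k;\,\Omega}\big)\geq c$, which is exactly the assertion.

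There is essentially no obstacle left at this stage: the substantive work lives upstream, in the majorisation/Ho--Verd\'u analysis underpinning Theorem~\ref{Behemoth thm} and in the characteristic-function bookkeeping of Lemma~\ref{entangled composition lemma} that lets an entangled environment reproduce the effect of a single beam splitter of transmissivity $\lambda=\lambda_1\cdots\lambda_k$. The one point worth flagging is that the state $\Omega_{B_1\ldots B_k}$ delivered by~\eqref{Omega} is in general \emph{entangled} across $B_1,\ldots,B_k$; this is precisely why the argument settles scenario~(a) and says nothing about the bi-separable/fully-separable case~(b) or the product-state case~(c), which would require a genuinely different construction.
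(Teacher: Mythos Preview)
Your proposal is correct and follows exactly the same approach as the paper: set $\lambda=\lambda_1\cdots\lambda_k>0$, invoke Lemma~\ref{entangled composition lemma} to realise $\Phi_{\lambda,\sigma}$ as $\Phi_{\lambda_1,\ldots,\lambda_k;\,\Omega}$ for a suitable (generally entangled) $\Omega$, and then apply Theorem~\ref{Behemoth thm}. Your write-up simply spells out the details and the caveat about scenarios~(b) and~(c) more explicitly than the paper does.
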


\begin{proof}
Thanks to Lemma~\ref{entangled composition lemma}, by varying the state $\Omega$ we can reproduce all general attenuator channels $\Phi_{\lambda_1\ldots \lambda_k,\, \sigma}$. Noting that $\lambda_1\ldots \lambda_k>0$, it suffices to apply Theorem~\ref{Behemoth thm} to conclude.
\end{proof}

As a final remark, let us point out that, while Theorem~\ref{extended Behemoth thm} clarifies the situation in scenario~(a) above, i.e.\ when $\Omega_{B_1\ldots B_k}$ is a generic multipartite entangled state, it would also be of interest to explore scenarios~(b) and~(c). We leave this for future investigations.


\end{document}